\newtheorem{theorem}{Theorem}
\newtheorem{lemma}{Lemma}[section]
\newtheorem{proposition}{Proposition}[section]
\theoremstyle{definition}
\newtheorem{definition}{Definition}
\def\X{\mathcal{X}}
\def\N{\mathbb{N}}
\def\R{\mathbb{R}}
\def\NN{\mathcal{N}}
\def\F{\mathcal{F}}
\renewcommand{\epsilon}{\varepsilon}
\DeclareMathOperator*{\argmax}{arg\,max}
\newcommand{\ubar}[1]{\text{\b{$#1$}}}
\title{Market Areas in General Equilibrium}
\author{Gianandrea Lanzara\thanks{IGIER, Bocconi University, \texttt{gianandrea.lanzara@unibocconi.it}}~ and Matteo Santacesaria\thanks{MaLGa Center, Department of Mathematics, University of Genova, \texttt{matteo.santacesaria@unige.it}}}
\date{}
\begin{document}
\maketitle

\begin{abstract}

This paper proposes a spatial model with a realistic geography where a continuous distribution of agents (e.g., farmers) engages in economic interactions with one location from a finite set (e.g., cities). The spatial structure of the equilibrium consists of a tessellation, i.e., a partition of space into a collection of mutually exclusive market areas. After proving the existence of a unique equilibrium, we characterize how the location of borders and, in the case with mobile labor, the set of inhabited cities change in response to economic shocks. To deal with a two-dimensional space, we draw on tools from computational geometry and from the theory of shape optimization. Finally, we provide an empirical application to illustrate the usefulness of the framework for applied work.

\end{abstract}

\vspace{1.5cm}

\vspace{.5cm}

\newpage
\section{Introduction}

This paper proposes a spatial model with a realistic geography where a continuous distribution of agents engages in economic interactions with at most one location from a finite set. The spatial structure of the equilibrium consists of a tessellation, i.e. a partition of space into a collection of mutually exclusive market areas. This equilibrium structure exhibits some novel properties that depart from the extant literature. First, the model comprises a notion of borders that may be compared with borders observed in the real world. Second, when labor is mobile, some of the locations in the finite set may fail to attract workers, and therefore remain vacant. Thus the model also provides a framework for thinking about the emergence and the location of discrete economic entities, such as firms within a neighborhood, business districts in a metropolitan area, or cities within a larger economy. Both the geography of borders and the set of inhabited locations are equilibrium outcomes that depend on the parameters of the model. 

These research topics have been the focus of a large number of theoretical studies in the fields of new economic geography \cite{krugman1993, fujita1995, fujita2002}, urban economics  \cite{fujita1982, henderson1974, henderson2005}, and political economy \cite{alesina1997, alesina2000}. These studies posit a continuous space interacting with a discrete set of locations, but with the spatial dimension confined to stylized geographies. While simple geographies are often appropriate to generate valuable theoretical insights, they also offer a weaker connection with the data and do not lend themselves easily to quantitative analysis.
 
On the other hand, a more recent literature has developed economic models with realistic geographies to assess the importance of spatial frictions for market outcomes and welfare \cite{allen2014, redding2016, redding2017}. Location and trading choices in these models are typically based on a random utility approach\footnote{Another approach in trade models is based on the Armington set up; here, the consumption good is a CES composite of location-specific varieties; as is well-known, the gravity equation for bilateral trade flows is isomorphic to the one obtained with extreme value shocks.}, whose equilibrium outcome is that, in general, each location interacts with all other locations, unless exogenous factors prevent it. While this feature renders the framework flexible and tractable enough to be confronted with the data in a wide variety of empirical contexts, it also implies that endogenous borders and vacant locations struggle to materialize in this class of models. 

In this paper, we combine endogenous market areas and realistic geographies within a tractable theoretical framework. To do so, we introduce a set of tools from the mathematical literature on Voronoi diagrams. A standard Voronoi diagram is a simple assignment rule such that each point in a set $X$ is assigned to the nearest point in a finite set of locations $S$. An additively weighted Voronoi diagram is a generalization such that each location $S$ is associated with a weight $\lambda_i \in \R$, $i=1\dots n$ that determines its relative attractiveness over and above geographic distance. Hence a point $x \in X$ is assigned to a point $s_i \in S$ if and only if
\begin{equation}\notag
d(x, s_i) - \lambda_i \leq \min_{j \neq i}\left\{ d(x, s_j) - \lambda_j\right\}.
\end{equation}
This construction has surfaced in previous work in economics and geography to describe the size and shape of market areas on the Euclidean plane.\footnote{An early example is Frank Fetter's 1924 article \cite{fetter1924} in the \textit{Quarterly Journal of Economics.} In that note, Fetter studied the properties of market areas around two locations on the Euclidean plane. A number of subsequent papers extended these results in various directions: \cite{hyson1950} introduced city-specific freight rates (see also \cite{parr1995}), \cite{boots1980} considered an arbitrary number of cities, and \cite{hanjoul1989} extended the transport cost function to depend nonlinearly on Euclidean distance. For a discussion of the history of this idea before Fetter's article, see \cite{hebert1972} and \cite{shieh1985}.} In these studies, the Voronoi weights are usually set equal to the market prices $p_i$, $i=1\dots n$ in a partial equilibrium setting. If, for instance, $X$ represents a set of sellers and $S$ represents a set of markets, then this simple model works as illustrated in Figure \ref{fig:fig1} for $n = 3$. In Figure \ref{fig:fig1a}, all markets offer the same price. In this case, sellers care only about Euclidean distance and the tessellation reduces to a standard Voronoi diagram. Figure \ref{fig:fig1b} depicts an additively weighted Voronoi tessellation with $\lambda = (1.6, 1.1, 1.2)$, going anticlockwise. The boundaries shift in an intuitive manner. Market $s_1$ offers the highest price and therefore attracts sellers from longer distances, expanding its market area at the expense of the other cities. Market $s_3$ loses territory to market $s_1$ but expands its market area in the direction of market $s_2$, which offers the lowest price. Finally, in Figure \ref{fig:fig1c}, the vector of prices (or weights) is $\lambda = (1.6, 0.1, 1.2)$ and market $s_2$ fails to attract any sellers. 

\begin{figure}
\caption{Three Examples of Voronoi Tessellations With Euclidean Distances and Three Sites}
\begin{subfigure}{0.5\textwidth}
\centering
\includegraphics[width=\textwidth]{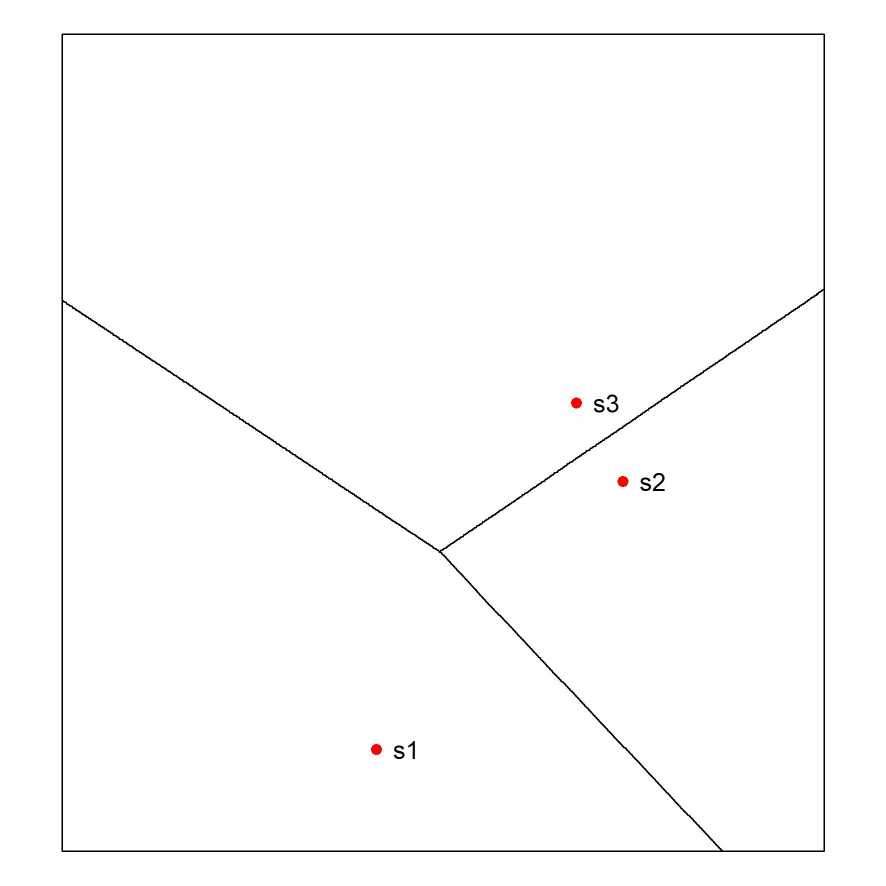}
\caption{Standard}
\label{fig:fig1a}
\end{subfigure}
\begin{subfigure}{0.5\textwidth}
\centering
\includegraphics[width= \textwidth]{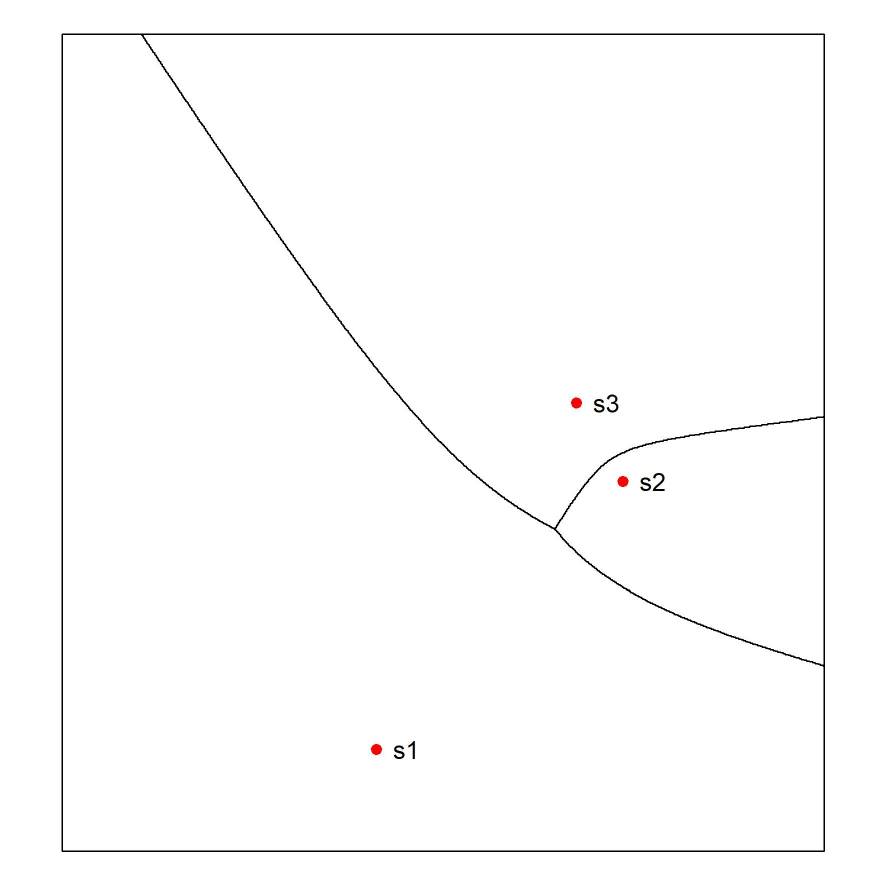}
\caption{Additively Weighted}
\label{fig:fig1b} 
\end{subfigure} 
\begin{center}
\begin{subfigure}{0.5\textwidth}
\centering
\includegraphics[width= \textwidth]{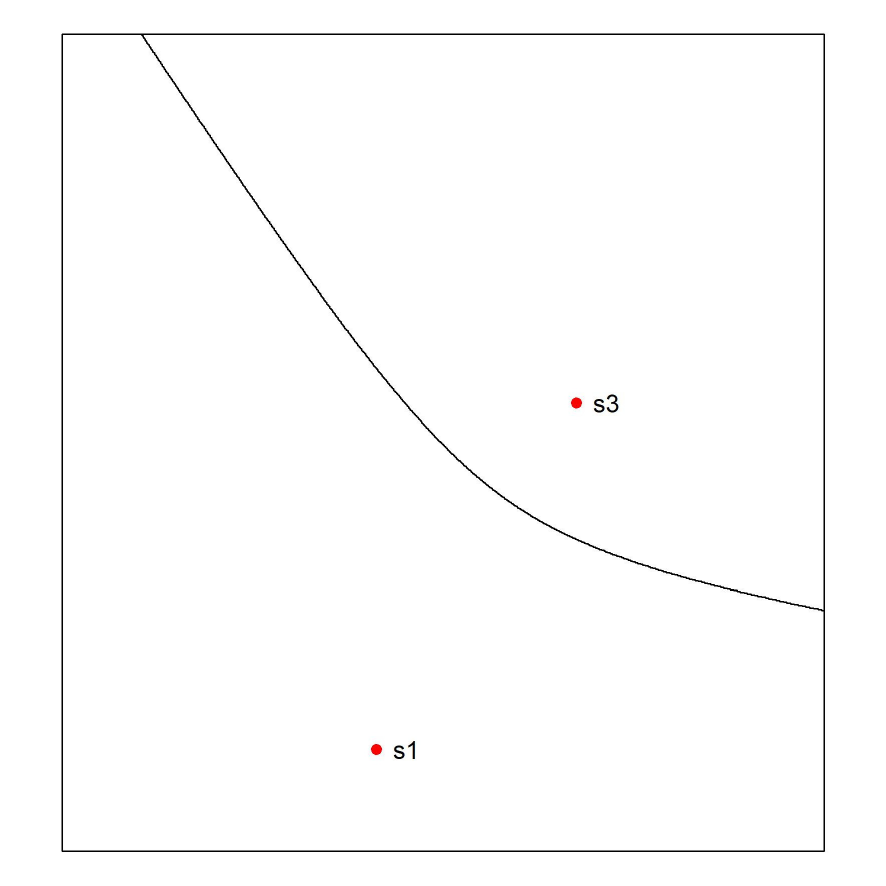}
\caption{Additively Weighted With Vacant Locations}
\label{fig:fig1c} 
\end{subfigure}
\end{center}
\parbox{15cm}{\footnotesize \emph{Notes:} In the top-left panel, all cities have the same weight and the boundaries are straight line segments. In the top-right panel, the vector of weights is not constant, giving rise to curved boundaries (hyperbolic segments), but all sites attain a market area with positive measure. In the bottom panel, the vector of weights is not constant, but the market area of site $2$ is of measure zero.}
\label{fig:fig1}
\end{figure}

The main idea of this paper is to treat the Voronoi weights as endogenous objects that are determined in general equilibrium. As a result, the problem of characterizing the equilibrium borders and the set of inhabited locations reduces to studying the properties of the equilibrium weights. Furthermore, the weights will not necessarily coincide with the market prices; they will be more complex functions derived from the economic primitives and will depend on the parameters of the model. 

We frame our discussion in terms of an urban model where the elements of $S$ are cities and the Voronoi regions in $X$ are rural areas or hinterlands. These sets are endowed with different technologies for the production of either an urban or a rural good, respectively, which enter with a constant elasticity of substitution (CES) into the consumer's utility function. Space matters, in that carrying goods from the countryside to an urban market incurs the payment of a shipping cost that increases with distance. 

We consider both a scenario with a fixed population distribution and a scenario with a mobile population across locations and sectors. In both cases, we find that a unique equilibrium exists independently of the underlying geography and of the size of shipping costs, and we characterize its comparative statics with respect to some of the model parameters. The comparative statics of the Voronoi weights provide insights into which market areas will expand, shrink, or vanish following a parameter change.  

In the setting without labor mobility, we first prove that the equilibrium conditions can be expressed as first-order conditions of a given cost function $\F$. With this result in hand, we then show that $\F$ indeed attains a unique critical point. We also show that the model delivers intuitive comparative statics as long as shipping costs increase sufficiently fast with distance. Specifically, if urban population increases in, say, city $s_i \in S$, then its market area expands at the expense of the surrounding market areas. This result relies on a graph-theoretic interpretation of Voronoi diagrams, which allows us to circumvent the fact that the gross substitution property \cite{mascolell1995} does not hold in our setting. As a matter of fact, the \textit{direct} effect of city $s_i$'s population change is nil in cities that do not share a border with $s_i$. However, because the Voronoi diagram defines a connected graph, the impact of a shock is eventually transmitted to all cities, and this weaker condition turns out to suffice to sign the comparative statics.

In a setting with labor mobility, the welfare equalization condition for urban workers yields a closed-form expression for the Voronoi weights. The main challenge here is to prove that there is a one-to-one relationship between the welfare scalar and total population, so as to ensure the uniqueness of the equilibrium in a ``closed economy'' scenario where total population is fixed. The reason is that the shape of the tessellation varies with the level of welfare. We thus have to account for these endogenous border changes and their impact on the equilibrium total population, in a two-dimensional setting with a heterogeneous geography. To overcome this challenge, we make use of a classic result in the theory of shape optimization which provides a general formula for the derivative of a function over a variable domain (a generalization of the Leibniz rule) \cite{henrot2006}. To the best of our knowledge, this is the first application of the shape derivative to an economic problem. The key observation, then, is that a parameter change affecting all Voronoi weights at the same time will exert on each border segment between neighboring market areas two opposing forces, one from each side of the border: decomposed segment-wise in this fashion, the overall effect can be handled and signed. 

Once these technical hurdles are surpassed, the model becomes highly tractable and allows for a rich set of comparative statics. Thanks to the closed-form solution for the Voronoi weights, we are able to characterize the effect of changes in shipping costs and total population on the size of market areas. The sign is unambiguous for all cities that are either more productive or less productive than all their neighbors. In particular, reductions in shipping costs increase the market area of more-productive urban centers. In contrast, the effect of changes in the size of the total population depends on the elasticity of substitution between urban and rural goods: with elastic demand, an increase in total  population will favor urban centers that are more productive than their neighbors, similar to a reduction in shipping costs.

As mentioned above, the spatial equilibrium of our framework features two novel aspects: the emergence of a well-defined notion of borders, and a set of vacant locations. Regarding this latter aspect, we use the model to derive sharp conditions for an urban site to be inhabited or vacant in equilibrium. For instance, when shipping costs are prohibitively high, all urban sites will be inhabited, and as shipping costs decrease, urban sites are ``sequentially'' abandoned until only one of them (the most productive one) remains inhabited. The conditions depend explicitly on the underlying geography: urban sites that happen to be located near more-productive ones will be abandoned earlier. While the the set of \textit{potential} urban sites is a primitive of the model, we remark that no other restriction is placed on it except that it is finite. 

We conclude the paper by applying the model to the case of Switzerland. This case study illustrates the workings of the version of the model without labor mobility, as well as the usefulness of a notion of hard borders for applied work. The empirical counterpart of the set of cities $S$ is the set of cantonal capitals: the capital cities of the federated state (cantons) forming the Swiss confederation. We compute alternative theoretical tessellations, which we evaluate against the tessellation of cantonal borders. We find that the equilibrium tessellation computed via our model is a better approximation of Swiss internal borders than alternative tessellations that neglect the roles of geography and market forces. 

\paragraph{Related literature.}

Endogenous market areas have appeared in \cite{nagy2020} and \cite{nagy2020a} to investigate, respectively, the link between trade and urbanization in Hungary after the First World War and the link between city formation and growth during the U.S. westward expansion in the 19th century. In \cite{nagy2020}, location-specific varieties of a tradable good can be exchanged at a finite set of trading locations. In the model presented below, we also take the set of urban locations as given, whereas we follow \cite{nagy2020a} (where, instead, the location technology is not predetermined) in assuming a distinction between a rural good, produced in hinterlands, and an urban good, produced in cities. 

Our contribution to this line of research is twofold: first, though we forego some important features (such as, for instance, trade costs for urban goods), we obtain a more-complete analytic characterization of the equilibrium properties of the model; second, we leverage a set of technical tools, such as Voronoi diagrams and the shape derivative, that make the problem more tractable and will be useful, we hope, in spurring further work in this area.

Our characterization of the set of inhabited cities, in the version of the model with mobile labor, contributes to the economic theories of city formation, such as the system-of-cities theories (see \cite{henderson1974} and subsequent work) and the new economic geography approach (see, for instance, \cite{fujita2002}). However, our analysis differs in two main respects. First, we work with a realistic (rather than a stylized) geographic setting. Compared to \cite{henderson1974}, for instance, urban sites are arranged in space, which allows us to derive implications not only on the \textit{number} of inhabited sites but  also on their \textit{location}. Second, the economic mechanism is different. In those traditions, economies of scale at the urban level are a key ingredient for generating the agglomeration of economic activity in a subset of the available locations, even on a featureless line or plane. 

In contrast, in our model, a necessary condition is that urban sites have heterogenous characteristics, even in the absence of urban spillovers. However, it would be incorrect to conclude that agglomeration forces are entirely missing from the model: as we explain below, if on one hand rising agricultural prices encourage urban workers to disperse, they may, on the other hand, lead farmers to serve a limited subset of urban markets. In sum, heterogeneity must be combined with endogenous trading choices among spatially ordered locations for some urban sites to remain vacant.

Our work is also related to a small number of papers that borrow tools from computational geometry to solve economic problems. For instance, \cite{rossi-hansberg2020} focus on the optimal plant-location decision of a firm given a continuous distribution of consumers in space. Because consumers patronize one plant only, the model delivers a spatial tessellation that corresponds to a weighted Voronoi diagram. In a recent working paper, \cite{allen2022} leverages centroidal Voronoi tessellation to study the evolution of national borders in a quantitative framework. Finally, Voronoi diagrams appear in \cite{merlo2016} within the context of a spatial theory of voting and in \cite{lambert2019} in the context of mechanism design.

\paragraph{Structure of the paper.} The rest of the paper proceeds as follows. In Section \ref{sub:voronoi}, we introduce some basic concepts and definitions of additively weighted Voronoi diagrams. In Section ~\ref{sec: model}, we lay out the economic framework. Sections \ref{sec: results} and \ref{sec:factor}, respectively, characterize the properties of the equilibrim with immobile labor and mobile labor. Section \ref{sec:factor} also presents our results on the endogeneous set of inhabited cities and discusses them in light of extant theories of city formation. In Section~\ref{sec:exp}, we apply the model empirically to Switzerland. Section~\ref{sec:conclusions} concludes.

\section{Additively Weighted Voronoi Diagrams}\label{sub:voronoi}

In this section, we introduce some basic concepts on additively weighted Voronoi diagrams. This geometric construct will describe the spatial structure of the economic model presented in the next section. We also present some properties that we will use repeatedly in the following analysis. For a comprehensive treatment of Voronoi diagrams and their applications, see \cite{okabe2000} and \cite{deberg2008}.

Let $X$ be an open, bounded, connected subset of the Euclidean plane $\R^2$ with Lipschitz boundary $\partial X$, let $S \subset X$ be a finite set of $n$ points, and for each $s \in S$, let
\[
d_s\colon  \R^2 \to \R_+, \qquad \R_+ = \{ x \in \R : x \geq 0\}
\]
be a continuous ``distance'' function that assigns to a point $x \in \R^2$ a nonnegative value $d_s(x)$.

 For $s\neq t \in S$ and $\gamma \in \R$, we define the \textbf{bisector}
\begin{equation}\label{eq: bisector}
B_\gamma(s,t) = \{x \in \R^2\colon d_s(x)-d_t(x) = \gamma\},
\end{equation}
and the \textbf{dominance region}
\begin{equation}\label{eq: dom}
R_\gamma(s,t) = \{x \in \R^2\colon d_s(x) - d_t(x) < \gamma \}.
\end{equation}
Note that the sets $R_\gamma(s,t)$ are open and increase with $\gamma$. Assume now we are given for each city $s \in S$ a \textbf{weight} $\lambda_s \in \R$, and let
\[
\lambda = (\lambda_1,\dots,\lambda_n)
\]
be the vector of all weights, where we have chosen an ordering of the cities $s_1,\dots,s_n \in S$. Then we say that $B_{\lambda_i-\lambda_j}(s_i,s_j)$ is the \textit{additively weighted bisector} of $s_i$ and $s_j$. The regions $R_\gamma(s,t)$ are unbounded in general, and we want to use them to create a partition of $X$. We say that
\begin{equation}\label{eq: voronoi}
\Omega_i (\lambda) = X \cap \bigcap_{j\neq i}R_{\lambda_i-\lambda_j}(s_i,s_j)
\end{equation}
is the \textit{additively weighted Voronoi region} of $s_i$ (with respect to $S$) in $X$. The \textbf{additively weighted Voronoi diagram} of $S$ in $X$ is defined as
\[
V_S (\lambda) =  \bar X \setminus \bigcup_{i=1}^n \Omega_i(\lambda).
\]
Adding the same constant to all weights does not change $\Omega_i(\lambda)$ and $V_S(\lambda)$. If $\lambda_i = \lambda$ for all $s_i \in S$, then one obtains an unweighted or standard Voronoi tessellation, such that $d_i(x) = d_j(x)$ for all points $x$ along the bisector $B_{\lambda_i  -\lambda_j}(s_i,s_j)$.

Following \cite{geiss2013}, we impose additional conditions on the system of functions $\{d_s\}_{s\in S}$ to have well-behaved Voronoi diagrams and to avoid pathological situations.

\begin{definition}\label{def:distance}
A system of continuous distance functions $d_s(\cdot)$, for $s \in S$, is called \textit{admissible} if for all $s\neq t \in S$, and for each bounded open set $C \subset \R^2$, there are two constants $m_{st}$ and $M_{st}$, with $m_{st} < M_{st}$, such that $\gamma \mapsto |C \cap R_\gamma(s,t)|$ is continuously increasing from $0$ to $|C|$, the Lebesgue measure of $C$, as $\gamma$ grows from $m_{st}$ to $M_{st}$. Moreover, $C \cap R_\gamma(s,t) = \emptyset$ if $\gamma \leq m_{st}$, and $C \subset R_\gamma(s,t)$ if $\gamma \geq M_{st}$.
\end{definition}

The next result says essentially that under this assumption, the bisectors have measure $0$.

\begin{lemma}[\cite{geiss2013}] \label{lem: bisectors} 
For a system of admissible distance functions $d_s(\cdot)$, where $s \in S$, for any two points $s\neq t \in S$ and any $\gamma \in \R$, and for any bounded open set $C \subseteq \R^2$, we have $|C\cap B_\gamma (s,t)|=0$. 
\end{lemma}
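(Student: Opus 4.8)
The plan is to realize the bisector as the set-theoretic difference between a \emph{closed} sublevel set and the \emph{open} dominance region $R_\gamma(s,t)$, and then to show that these two sets have the same measure inside $C$ by combining the continuity of $\gamma \mapsto |C\cap R_\gamma(s,t)|$ (granted by admissibility) with the continuity from above of Lebesgue measure.

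First I would set $f = d_s - d_t$, which is continuous since each $d_s$ is continuous. Then $R_\gamma(s,t)=\{x\colon f(x)<\gamma\}$ and $B_\gamma(s,t)=\{x\colon f(x)=\gamma\}$, so that $B_\gamma(s,t) = \{f\le\gamma\}\setminus\{f<\gamma\}$. All the sets involved are measurable ($\{f<\gamma\}$ is open and $\{f\le\gamma\}$ is closed), and since $\{f<\gamma\}\subseteq\{f\le\gamma\}$ while every intersection with the bounded set $C$ has measure at most $|C|<\infty$, subtraction of measures is legitimate and gives
\[
|C\cap B_\gamma(s,t)| = |C\cap\{f\le\gamma\}| - |C\cap R_\gamma(s,t)|.
\]

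Next I would express the closed sublevel set as a decreasing intersection of open dominance regions. Since $f(x)\le\gamma$ holds exactly when $f(x)<\gamma'$ for every $\gamma'>\gamma$, we have $\{f\le\gamma\} = \bigcap_{k\ge 1} R_{\gamma+1/k}(s,t)$. The sets $C\cap R_{\gamma+1/k}(s,t)$ form a decreasing sequence of finite-measure sets, so continuity from above of Lebesgue measure yields $|C\cap\{f\le\gamma\}| = \lim_{k\to\infty}|C\cap R_{\gamma+1/k}(s,t)|$. Admissibility asserts precisely that the function $g(\gamma) := |C\cap R_\gamma(s,t)|$ is continuous in $\gamma$ (on the relevant range it is continuously increasing, and outside it is constant equal to $0$ or $|C|$), so the limit equals $g(\gamma)=|C\cap R_\gamma(s,t)|$. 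Substituting back gives $|C\cap B_\gamma(s,t)| = g(\gamma)-g(\gamma)=0$.

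The argument is short, and I expect no serious obstacle; the only delicate point is the mismatch between the open regions that admissibility describes and the closed level set defining the bisector. This is exactly what the continuity of $g$ resolves: continuity from above lets the measure of the closed set $\{f\le\gamma\}$ be computed as a right-hand limit of $g$, and continuity of $g$ forces that limit to coincide with $g(\gamma)$, collapsing the difference to zero.
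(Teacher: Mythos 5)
Your proof is correct. Note that the paper itself does not prove this lemma: it is imported from \cite{geiss2013} and stated without an internal argument, so your write-up supplies a self-contained derivation from the admissibility definition alone, and every step checks out. With $f = d_s - d_t$ continuous, the decomposition $B_\gamma(s,t)=\{f\le\gamma\}\setminus\{f<\gamma\}$ is valid, the measure subtraction is legitimate because $|C|<\infty$, the identity $\{f\le\gamma\}=\bigcap_{k\ge 1} R_{\gamma+1/k}(s,t)$ together with continuity from above of Lebesgue measure expresses $|C\cap\{f\le\gamma\}|$ as the right-hand limit of $g(\gamma):=|C\cap R_\gamma(s,t)|$, and admissibility (the continuous increase on $[m_{st},M_{st}]$ combined with the constancy clauses outside that interval) makes $g$ continuous on all of $\R$, so that limit collapses to $g(\gamma)$. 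One small remark: you can shortcut the continuity-from-above step by observing the inclusion $C\cap B_\gamma(s,t) \subseteq \bigl(C\cap R_{\gamma'}(s,t)\bigr)\setminus\bigl(C\cap R_\gamma(s,t)\bigr)$ for every $\gamma'>\gamma$, which gives $|C\cap B_\gamma(s,t)| \le g(\gamma')-g(\gamma)$ directly; letting $\gamma'\downarrow\gamma$ and invoking the same continuity of $g$ finishes the proof without decomposing $\{f\le\gamma\}$ as a countable intersection. Both routes rest on exactly the same ingredient, namely right-continuity of $g$ at $\gamma$, so the difference is purely cosmetic.
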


Two final observations will be important for our analysis. First, Lemma \ref{lem: bisectors} implies that, given a system of admissible distance functions and a vector of weights, the associated weighted Voronoi diagram is uniquely defined (up to an additive constant in the weight vector). Second, for $\gamma < - d_t(s)$, we have that $R_\gamma(s,t) = \emptyset$. Therefore for some values of the weights vector, some Voronoi regions may be empty.

\section{The Economic Model}\label{sec: model}

\subsection{Setup}\label{sec:setup}

We now lay out the economic model. To make the exposition more concrete, we refer to $x \in X$ as rural locations, and to $s \in S$ as urban locations or cities. An admissible distance function $d_s(x)$ represents the distance between each rural location $x \in X$ to each city $s\in S$. 

Agents, either farmers or urban workers, consume two goods: an agricultural good that is produced in the countryside, and a manufacturing good that is produced in cities. In the rest of the paper, we index goods with lowercase letters ($a$ for agricultural output and $m$ for manufacturing output), and agents with uppercase letters ($A$ for farmers and $M$ for urban workers). Farmers ship their produce to an urban market, where consumption activities also take place.\footnote{With mobile labor, all our results hold up under the alternative assumption that consumption takes place at the production location. We discuss the robustness of our results for the case with immobile labor in Appendix \ref{app:homeconsumption}. Finally, because manufacturing goods are freely traded, where they are consumed is immaterial.}

Agricultural output per capita is given by a function $y^a\colon  X  \to \R_+$, assumed to be continuous and bounded from above and below: 
\begin{equation}\label{eq: Ybounds} \notag
0< y^a_{\min} \leq y^a(x) \leq y^a_{\max}, \quad \text{for } x \in  X
\end{equation}
Manufacturing output in city $s_i \in S$ is denoted by $y_i^m$, with $y_i^m > 0$ for $i = 1,...,n$.  

We carry out our analysis under two alternative hypotheses concerning the mobility of workers: first, we consider a case where the population distribution across locations and sectors is fixed; second, we consider a case where individuals are freely mobile, so that the population distribution is determined via welfare-equalization conditions. In both cases, the rural population will be a function $L^A\colon  X  \to \R_+$ that is assumed to be continuous, positive, and bounded from below by a positive constant, while the urban population in city $s_i \in S$ will be denoted by $L^M_i>0$, for $ i = 1,...,n $.  
 
Finally, farm goods are costly to transport. Each time a farmer ships his goods from a rural location $x \in X$ to a city $s \in S$, a share  $\Delta(x,s)$ melts in transit---that is, shipping costs take the iceberg form. We also assume that shipping costs are an exponential function of distance: 
\[
\Delta(x,s) = \exp(\delta d_s(x)), \quad \delta >0
\]
In contrast, manufacturing goods are traded between cities at no cost. 

\subsection{Consumption Problem}\label{sec:consumptionproblem}

All agents in the economy order consumption baskets according to a utility function: $u\colon  \R^2_{+} \to \R$, where we denoted $\R^2_{+} = \R_{+} \times \R_{+}$. We assume the utility function takes the CES form,
\begin{equation}\notag
u(c^m, c^a) = \left( (c^m)^{\alpha} + (c^a)^{\alpha} \right)^{\frac{1}{\alpha}}, \quad \alpha < 1, \alpha \neq 0,
\end{equation}
where $1/(1-\alpha)$ is the elasticity of substitution between manufacturing and agricultural goods. The same theoretical analysis can be carried out for $\alpha \to 0$ and a Cobb-Douglas utility function. 

Let $p$ and $q$ denote, respectively, the price of the agricultural good and the manufacturing good. An agent whose income is $\omega > 0$, and who faces consumption prices $q, p > 0$, solves the following constrained concave maximization problem:
\begin{align}\label{eq: consumptionproblem}
\max_{c^m, c^a}u(c^m, c^a) \quad \text{such that} \quad q c^m + p c^a \leq \omega, \quad c^m,c^a \geq 0, 
\end{align}
yielding the unique demand functions 
\begin{align}\notag
c^m(q,p,\omega) & = \frac{p^{\frac{\alpha}{1-\alpha}}}{p^{\frac{\alpha}{1-\alpha}}+q^{\frac{\alpha}{1-\alpha}}}\frac \omega q ,\\ \notag
c^a(q,p,\omega) & = \frac{q^{\frac{\alpha}{1-\alpha}}}{p^{\frac{\alpha}{1-\alpha}}+q^{\frac{\alpha}{1-\alpha}}} \frac \omega p.
\end{align}

The indirect utility function associated with problem \eqref{eq: consumptionproblem} is
\begin{eqnarray}\label{eq: indirectutility}
V( q,p,\omega ) = v(q,p)\omega, \quad\text{with}\quad v(q,p) = \left( \frac{1}{ q^{ \frac{\alpha}{1 - \alpha }} }+ \frac{ 1}{p^{ \frac{\alpha}{1 - \alpha }}} \right)^{\frac{ 1- \alpha }{ \alpha } }.
\end{eqnarray}
The next lemma presents some properties of the indirect utility function that will be useful for characterizing the equilibrium. Since these properties are either well-known or straightforward upon inspection of \eqref{eq: indirectutility}, we omit the lemma's proof.
\begin{lemma}\label{lemma: propertiesV} The indirect utility function $V$ has the following properties:
\begin{enumerate}[$i.$]
\item $\lim_{p_i \to +\infty} V\left(q, p_i, \omega(x, s_i) \right) \to  + \infty$
\item $\lim_{p_i \to 0} V\left(q,p_i, \omega_i \right) \to  + \infty$
\item \begin{equation}\notag
\frac{\partial V(q, p_i, \omega)}{\partial \omega} = v(q, p_i), \quad i =1 \dots n
\end{equation}
\item $v(q,p) >0$ for $q,p >0$
\end{enumerate}
\end{lemma}

\subsection{Farmer's Trading Problem}\label{sec: farmerchoice}
In general, incomes and prices will depend on location, and farmers will optimally choose a trading location to maximize their welfare given the form of the indirect utility in \eqref{eq: indirectutility}. A farmer located at $x \in X$ who sells his produce at city $s_i \in S$ receives an income equal to $\omega(x, s_i ) \coloneqq p_i y^a(x) / \Delta(x, s_i)$, where $p_i ( = p_{s_i})$ is the price of the agricultural good in city $s_i$, and $q$ is the price of the manufacturing good, equalized across urban markets by standard nonarbitrage arguments. Therefore, this farmer will solve
\begin{equation}\label{def:vx}
V(x,p) =  \max_{i=1,\dots,n} V\left(p_i, q, \omega(x, s_i)\right) =  \max_{i=1,\dots,n} \left( 1 + \left( \frac{p_i}{q} \right)^{\frac{\alpha}{1-\alpha}} \right)^{\frac{1-\alpha}{\alpha}} \frac{y^a(x)}{\Delta(x, s_i)}.
\end{equation}
Thus $V(x,p)$ denotes the indirect utility function of a farmer in $x \in X$, maximized over trading locations\footnote{This formulation implicitly assumes that farmers can only choose to serve one urban market. However, in equilibrium farmers may be indifferent between serving multiple cities only on a measure zero set.}. After taking logs in equation, the trading problem in \eqref{def:vx} can be reformulated as
\begin{align}\label{eq: farmerproblem}
\min_{i = 1,\dots,n}& d_i(x) - \frac{1}{\delta}\log \hat{v}(q, p_i), \; \\ \notag \text{with}  \; \hat{v}(q, p_i) \coloneqq & p_i v(q, p_i) =\left( 1 + \left( \frac{p_i}{q} \right)^{\frac{\alpha}{1-\alpha}} \right)^{\frac{1-\alpha}{\alpha}} ,
\end{align}
and where $d_i(\cdot)$ stands for $d_{s_i}(\cdot)$. Comparing this expression with equations \eqref{eq: dom} and \eqref{eq: voronoi} in Section \ref{sub:voronoi}, it is evident that the solution to the trading problem delivers an additively weighted Voronoi tessellation, where the additive weights $\lambda_i$, $i=1,\dots,n$ are given by 
\begin{align}\notag
\lambda_i &= \frac{1}{\delta}\log \hat{v}(q, p_i) \\ \label{eq: weight}
& = \frac{1}{\delta} \frac{1-\alpha}{\alpha} \log \left( 1 + \left( \frac{p_i}{q} \right)^{\frac{\alpha}{1-\alpha}} \right)
\end{align}
The Voronoi weight summarizes the attractiveness of an urban market. In this setting, where farmers are net sellers of agricultural goods, the weight is positively related to the price $p$, implying that markets with higher prices will attract farmers from longer distances. 

Let $\Omega_i(\lambda) \subset  X$ denote the set of farmers who decide to ship their goods to city $s_i$, i.e., the city $i$'s market area, where $\lambda =(\lambda_1,\dots,\lambda_n)$ denotes the full vector of weights. The total supply of farm goods to city $s_i$ is given by
\begin{eqnarray*}
\int_{\Omega_i(\lambda)}\frac{y^a(x)L^A(x)}{\Delta(x,s_i)} dx.
\end{eqnarray*}
\subsection{Excess Demands and Equilibrium}

We define the excess demand function for farm goods in city $s_i$ as follows: 
\begin{equation}\label{eq: edA}
Z_i(p) =c^a(q, p_i, \omega_i)L^M_i+ \int_{\Omega_i(\lambda(p))}\left[ c^a\left(q,p_i,\omega(x,s_i)\right) - \frac{y^a(x)}{\Delta(x,s_i)}\right] L^A(x)dx, \quad \text{for } i =1,\dots,n,
\end{equation}
where $p$ denotes the full vector of prices in the economy: $p = (p_1,...,p_n,q) \in \R_{++}^{n+1}$, with
\[
\R_{++} = \{x \in \R :x > 0\}.
\]
and $\lambda(p) = \{\lambda_1(q,p_1), \lambda_2(q,p_2),\dots,\lambda_n(q,p_n)\}$ is defined from Equation \eqref{eq: weight}.
 
Since the manufacturing good is freely traded, there is only one market for the manufacturing good. The excess demand for manufacturing goods is
\begin{equation}\label{eq: edM}
Z_{n+1}(p) = \sum_{i =1}^n \left[\left(c^m(q, p_i, \omega_i)  -  y_i^m\right) L^M_i+ \int_{\Omega_i(\lambda(p))} c^m\left(q,p_i,\omega(x,s_i)\right)L^A(x) dx \right]
\end{equation}
where $y_i^m= y^m_{s_i}$. We conclude the description of the model with a formal definition of an equilibrium.
\begin{definition}\label{def:eq}
An \textit{equilibrium with immobile labor} in this economy is a price vector $p^{*}$ such that $Z_i(p^{*}) = 0$ for $i = 1,\dots,n+1$.
\end{definition}
Clearly, if market clearing is satisfied for the first $n$ markets, the $n+1$th market will clear too by Walras' Law.  
\subsection{Discussion}

In Appendix \ref{sec:altfor}, we relax some of the assumptions and discuss some variations of the basic framework. In Appendix \ref{sec:manuftrade}, we allow for trade costs in the manufacturing sector in the model with immobile labor. The analysis combines our results with the universal gravity framework of \cite{allen2020}. In Appendix \ref{sec:spillover}, we allow for productivity spillovers in the urban sector. To characterize the equilibrium, we adapt some of the techniques in \cite{allen2022b} to incorporate endogenous market areas. These extensions show how our framework connects with recent advances in the analysis of quantitative spatial models. 
Then, in Appendix \ref{app:CESprod} we consider an alternative formulation where the CES structure is imposed on production rather than on preferences, as in \cite{nagy2020a}. In Appendix \ref{sec:business}, we provide a reinterpretation of the model that applies to a single-city setting, in the spirit of \cite{fujita1982}; here, $X$ is interpreted as metropolitan area and $S$ as a set of business districts. Finally, in Appendix \ref{app:homeconsumption}, we consider the assumption that farmers consume agricultural goods at the production location, rather than at the trading location.

\section{Equilibrium With Immobile Labor}\label{sec: results}

\subsection{Existence and Uniqueness}\label{sec:existence}

We now characterize the equilibrium properties of the model for a fixed population distribution. The main result of this section establishes that an equilibrium price vector in the sense of Definition \ref{def:eq} exists and is unique. This result holds for all geographies and independently of the shipping cost parameter $\delta$. Our line of proof proceeds in two steps: first, we show that a price vector $p$ is an equilibrium if and only if $p$ is a critical point of a certain cost function; then, we show that this cost function attains a unique maximum. 

The chosen cost function is $\F\colon \R_{++}^{n+1} \to \R$, defined as
\begin{equation}\label{eq: costfunction}
 \F(p) = - \sum_{i=1}^n V(p_i, q, \omega_i) L^M_i -\int_{X} V(x,p)L^A(x) dx,
\end{equation}
Proposition \ref{prop: equiv}, in Appendix \ref{app:intermediate}, shows that a price vector $p \in \R_{++}^{n+1}$ is an equilibrium if and only if $\nabla \F (p) = 0$. The proof applies Roy's identity after verifying that \eqref{eq: costfunction} is indeed differentiable. 

Given Proposition \ref{prop: equiv}, we next need to show that the cost function \eqref{eq: costfunction} attains a maximum. The next theorem provides this result, and also shows the maximum to be unique.

\begin{theorem}\label{theorem: existence} 
Consider the model presented in Sections \ref{sub:voronoi} and \ref{sec: model}.  Then there exists a unique (normalized) equilibrium price vector, i.e., a vector $p^* \in \R_{++}^{n+1}$ such that $Z(p^*)=0$. 
\end{theorem}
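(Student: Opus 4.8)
The plan is to use Proposition~\ref{prop: equiv} to replace the equilibrium problem by the problem of locating the critical points of $\F$, and then to prove that $\F$ has exactly one critical point by showing that (after fixing the price normalization) it is both coercive and strictly concave in suitable coordinates. Coercivity yields existence of a maximizer in the open domain $\R_{++}^{n+1}$, while strict concavity yields uniqueness of the maximizer and, since a concave function has no critical points other than its maximizer, identifies it as the unique zero of $\nabla\F$. Throughout I would exploit that $\F$ is homogeneous of degree zero, so I may fix a gauge; following the change of variables already used in the proof of Proposition~\ref{prop: equiv}, I set $q=1$ and regard $\F$ as a function of $(p_1,\dots,p_n)\in\R_{++}^n$, or equivalently of $\bar p=(\bar p_1,\dots,\bar p_n)$ with $\bar p_i=\hat v(q,p_i)$.

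For existence I would establish coercivity, i.e. $\F\to-\infty$ as $p$ approaches the boundary of the normalized domain. Two boundary regimes occur. As $p_i\to 0$, Lemma~\ref{lemma: propertiesV}(ii) gives $V(q,p_i,\omega_i)\to+\infty$, so the urban term $-V(q,p_i,\omega_i)L^M_i$ in \eqref{eq: costfunction} drives $\F\to-\infty$ (the remaining terms are bounded above because $V\ge 0$). As $p_i\to\infty$, I would use the pointwise bound $V(x,p)\ge v(q,p_i)\,p_i\,y^a(x)/\Delta(x,s_i)=\bar p_i\,y^a(x)/\Delta(x,s_i)$; since $y^a\ge y^a_{min}>0$ and $\Delta$ is bounded on $X$, the right-hand side tends to $+\infty$ (cf.\ Lemma~\ref{lemma: propertiesV}(i)), so $\F_2=-\int_X V(x,p)L^A\,dx\to-\infty$. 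Continuity of $\F$ on the open normalized domain then forces the supremum to be attained at an interior point, giving a maximizer $p^{*}$.

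For uniqueness I would prove that $\F$ is strictly concave in the variable $\bar p$. The map $p_i\mapsto \bar p_i=\hat v(q,p_i)$ is a smooth increasing bijection (a direct computation gives $d\bar p_i/dp_i=(1+p_i^{\beta})^{(1-\alpha)/\alpha-1}>0$, with $\beta:=\alpha/(\alpha-1)$), so strict concavity in $\bar p$ transfers back to a unique normalized equilibrium. In these coordinates $\overline V(x,\bar p)=\max_i \bar p_i\,y^a(x)/\Delta(x,s_i)$ from \eqref{def: barV} is a pointwise maximum of linear functions of $\bar p$, hence convex, so $\bar\F_2=-\int_X\overline V(x,\bar p)L^A\,dx$ is concave. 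The strictness must therefore come entirely from the urban term $\bar\F_1=-\sum_i y^M_i L^M_i\,v(q,p_i)$. Writing $\phi(\bar p_i)=v(q,p_i)$, the chain rule yields the clean identities
\[
\frac{d\phi}{d\bar p_i}=-\,p_i^{\,\beta-1},
\qquad
\frac{d^2\phi}{d\bar p_i^{\,2}}=-(\beta-1)\,p_i^{\,\beta-2}\,\frac{dp_i}{d\bar p_i}>0,
\]
because $1-\beta=1/(1-\alpha)>0$ for all $\alpha<1$ and $dp_i/d\bar p_i>0$. Thus each summand of $\bar\F_1$ is strictly concave, so $\bar\F_1$ is strictly concave and $\bar\F=\bar\F_1+\bar\F_2$ is strictly concave. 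A strictly concave function has at most one critical point; combined with the maximizer produced above, $\nabla\F$ vanishes at exactly one (normalized) point, which by Proposition~\ref{prop: equiv} is the unique equilibrium.

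I expect the main obstacle to be the uniqueness step rather than existence. The difficulty is structural: because market areas partition $X$, the farmer aggregate $\overline V$ is only piecewise linear in $\bar p$, so $\bar\F_2$ is concave but not strictly so, and no strict curvature can be extracted from the geography. All of the strict concavity must be supplied by the urban-worker term, which is why the explicit sign computation for $d^2\phi/d\bar p_i^{\,2}$ — and in particular the inequality $1-\beta=1/(1-\alpha)>0$ valid throughout the admissible range of $\alpha$ — is the crux of the argument. A secondary technical point, already dispatched in the proof of Proposition~\ref{prop: equiv} via the null-bisector Lemma~\ref{lem: bisectors}, is the legitimacy of differentiating $\bar\F_2$ under the integral sign; I would simply reuse that justification.
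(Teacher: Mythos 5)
Your proposal is correct and follows essentially the same route as the paper: the same reduction via Proposition~\ref{prop: equiv}, the same normalization $q=1$ and change of variables $\bar p_i=\hat v(q,p_i)=p_i v(q,p_i)$, the same decomposition into a strictly concave urban term plus a concave farmer term (max of linear functions), and the same boundary blow-up argument (your ``coercivity'' is the paper's compact superlevel set plus Weierstrass). Your explicit chain-rule identity $d\phi/d\bar p_i=-p_i^{\beta-1}$ is just a cleaner way of verifying the positive diagonal Hessian that the paper asserts for $\bar\F_1$ written in the $\bar p$ coordinates.
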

The proof appears in Appendix~\ref{app:proofs}. 

In terms of the spatial structure of the equilibrium, Theorem \ref{theorem: existence} tells us that, for any exogenous distribution of workers and productivities over space, a unique spatial tessellation exists that guarantees that supply equals demand on all markets. 

\subsection{Comparative Statics}\label{sec:compstat}

A key feature of our approach is that the spatial tessellation is an equilibrium outcome, and as such it depends on the parameters of the model. To illustrate this, we provide a sufficient condition for the size of a city's market area to grow in response to an increase in its urban population. 

Looking at equations \eqref{eq: farmerproblem} and \eqref{eq: weight}, we observe that the market area of a city $s_i$ will grow after an increase in $L^M_i$ if and only if its Voronoi weight increases \textit{relative} to the Voronoi weights of other cities. The main result of this section is the following theorem.
\begin{theorem}\label{thm:cs}
Consider the model presented in Sections \ref{sub:voronoi} and \ref{sec: model}, and let $0 < \alpha < 1$.  Then there exists $\delta_0 >0$ such that for $\delta \geq \delta_0$ the unique normalized equilibrium point $p = (p_1,\dots, p_n,1) \in \R^{n+1}_{++}$, the vector of Voronoi weights $\{ \lambda_1(p_1), \dots \lambda_n(p_n)\}$ satisfies
\begin{equation}\label{eq:csp}
\frac{\partial \lambda_i}{\partial L^M_i} \geq \frac{\partial \lambda_j}{\partial L^M_i}, \quad 1 \leq i,j \leq n
\end{equation}
\end{theorem}
This theorem implies that, if a city's urban population grows, then its market area will expand at the expense of the market areas of the adjoining cities. Intuitively, increased urban demand drives up the price of the agricultural goods, thus inducing marginal farmers located near the border to switch trading destination. 

While intuitive, the proof of Theorem \ref{thm:cs} is challenging because the standard tools developed in \cite{mascolell1995} and based on the gross substitution property do not apply in our setting, for two reasons. First, the excess demand system in \eqref{eq: edA} and \eqref{eq: edM} does not satisfy gross substitution, because the impact of a price change in city $s_i$ is nil in cities that do not share a border with $s_i$. Second, as mentioned above, the result requires us to characterize relative, rather than absolute, changes in the Voronoi weights. 

To overcome these challenges, Appendix \ref{app:intermediatecs} provides a series of intermediate results that are of independent interest. First, Proposition \ref{prop:gsz} shows that the excess demand system satisfies a property weaker than gross substitution, whereby the direct impact of a price change in city $s_i$ is positive only in cities that share a border with city $s_j$.  Second, Proposition \ref{prop:gs1} shows that this property is enough to determined the sign of the derivatives of the endogenous variables of the model with respect to the exogenous parameters. The proof relies on a key graph-theoretic interpretation of Proposition \ref{prop:gsz}: the matrix of price derivatives $[\partial Z_i / \partial p_j]_{i,j}$, with $i, j \leq n$, can be viewed as a directed graph where the Voronoi regions correspond to the vertices and two regions are connected by an edge if and only if they have a border in common.\footnote{This graph-theoretic interpretation is very close to the notion of \textit{connected strict substitution} introduced in \cite{berry2013}. In general,  though this property may be used to show the uniqueness of a Walrasian equilibrium, it is not enough to obtain comparative statics results, which are not present in their work.} This graph is strongly connected, i.e., there is a sequence of links connecting each pair of regions. Intuitively, this implies that a shock to one region will eventually be transmitted to all regions in the tessellation. Finally, Lemma \ref{lem:cs} uses the properties of M-matrices \cite{berman1994} to derive restrictions not only on the sign but also on the \textit{magnitude} of the partial derivatives, under the additional condition that $\delta$ is sufficiently large.

\subsection{A Gradient-Type Algorithm}\label{sec:algo}
Thanks to Theorem~\ref{theorem: existence} and Proposition~\ref{prop: equiv}, the unique normalized Walrasian equilibrium of our model is the unique global maximum of the cost function $\F$ defined in \eqref{eq: costfunction}, normalized with $q=1$. It is thus possible to find it numerically using a gradient-type algorithm, based on the following identity:
\[
\frac{\partial \F}{\partial p_i}(p)=Z_i(p_i)v(q,p_i), \quad p\in \R^{n+1}_{++},
\]
which is obtained at the end of the proof of Proposition~\ref{prop: equiv}.

In the algorithm below, $tol >0$ is the tolerance to control the size of the gradient as a stopping criterion, and $\tau_k>0$ is the step size. 
\begin{algorithm}[H]
\caption{Gradient ascent to find the maximum of $\F$}\label{algo}
Choose an initial guess $p^0 = (p^0_1,\dots,p^0_n) \in \R^n_{++}$; set $k = 0$ and iterate as follows:
\begin{algorithmic}[1]
\State Compute the weight vector $\lambda^k$ via formula \eqref{eq: weight} (with $q=1$);
\State Compute the additively weighted Voronoi regions $\{ \Omega_i(\lambda^k)\}_{i=1}^n$ via formulae \eqref{eq: dom} and \eqref{eq: voronoi};
\State Compute the vector of excess demand functions $Z(p^k) = (Z_1(p^k),\dots,Z_n(p^k))$ with formula \eqref{eq: edA};
\State Update the price vector $p_i^{k+1} = p_i^k +\tau_k Z_i(p^k)v(1,p_i^k)$, for $i=1,\dots,n$;
\State If $\max_{i = 1,\ldots,n} \| Z_i(p^k)v(1,p_i^k) \| > tol$, set $k = k+1$ and repeat.
\end{algorithmic}
\end{algorithm}

\section{Equilibrium With Mobile Labor}\label{sec:factor}

\subsection{Setup and Definition}

In this section, we allow for labor mobility across locations and sectors, so that the urban population vector $\{L^M_i\}_{i=1\dots n}$ and the (continuous) function $L^A$ for the rural population become endogenous objects to be determined jointly with the price vector. 

In contrast to the previous section, where the set of cities with $L^M_i > 0$ was (by construction) fixed, we now have to consider the possibility that urban population is zero in some urban locations. Therefore, the set $S$ should now be understood as a set of \textit{potential} urban sites, not all of which will necessarily host a city. Indeed, we will see that the set of inhabited urban sites is itself an equilibrium outcome depending on the parameters of the model. While the set of potential urban sites $S$ is a primitive of the model, no other restriction is placed on it except that it is finite. 

Assuming there are no costs to moving to another location or sector, all inhabited locations in the economy, either urban or rural, must yield the same level of welfare in equilibrium. Formally, we enrich the notion of equilibrium in Definition \ref{def:eq} with some supplemental conditions on the level of welfare. 
\begin{definition}\label{def:eqmob}
Let $\bar V>0$ be the common level of welfare in the economy and $V(x,p)$ be as defined in \eqref{def:vx}. We say that a price vector $p^* \in \R^{n+1}_{++}$, a vector $L^M = (L^M_1,\dots,L^M_n) \in \R^n_{++}$, and a function $L^A\colon  X \to \R_{++}$ are an \textit{equilibrium with mobile labor} if $p^*$ is an equilibrium price vector and the following conditions hold:
\begin{equation}\label{def:factor}
\begin{gathered}
V (p^*_i,q^*,q^*y_i^m) = \bar V \; \text{ for $i$ such that } L_i^M > 0,\\
V (p^*_i,q^*,q^*y_i^m) < \bar V \; \text{ for $i$ such that } L_i^M = 0,\\
V(x,p^*) = \bar V\; \text{ for all $x \in X$ such that } L^A(x) > 0,\\
V(x,p^*) < \bar V\; \text{ for all $x \in X$ such that } L^A(x) = 0.
\end{gathered}
\end{equation}
\end{definition}
The scalar $\bar{V}$ can be considered either exogenous or endogenous in the analysis. In the former case, one can imagine that individuals have the option to relocate to a large outer economy that offers a fixed level of utility $\bar{V}$. In the latter case, the aggregate population in the economy $\bar{L}$ is fixed and $\bar{V}$ is determined in equilibrium. 

Finally, to avoid degenerate solutions, we will introduce a congestion force into the model, in the form of decreasing marginal returns to agricultural production.\footnote{In the absence of a congestion force, farmers would cluster in the $(x,s_i) \in X \times S$  ``commuting'' pairs in $\text{argmax}_{x, s_i \in X \times S} V(p_i, \omega(x,s_i))$. Without agricultural productivity differences, they would never agree to incur a shipping cost and would choose $x, s_i \in S \times S$.  Specifically, they would cluster in the urban sites with the greatest productivity. } Specifically, we assume that output per capita in the agricultural sector is given by
\begin{equation}\label{eq:condy}
y^a(x) = \frac{a(x)}{(L^A(x))^{1-\beta}}, \quad x \in X,
\end{equation}
where $a\colon X \to \R_{++}$ is a continuous function with $0<a_{\min} \leq a(x) \leq a_{\max}$ for all $x \in X$, for some positive constants $a_{\min}, a_{\max}$, and $0<\beta<1$. As an example, this is the expression that we would get from a Cobb-Douglas production function that combines agricultural labor with a fixed input (i.e., land).

\subsection{The Voronoi Weights}

The analysis of the consumption problem in Section \ref{sec:consumptionproblem} still holds with labor mobility, and the expression for the indirect utility is still the one given in \eqref{eq: indirectutility}.The farmer's trading problem in Section \ref{sec: farmerchoice} is also unchanged, except that the welfare-equalization condition for urban workers in \eqref{def:factor} now yields an explicit expression for the Voronoi weights.  Throughout the rest of the section, we normalize $q=1$. 

For any $\bar{V}$ such that $\bar V^{\frac{\alpha}{1-\alpha}}-(y_i^m)^{\frac{\alpha}{1-\alpha}} > 0$, the unique agricultural price in any city $s_i \in S$ such that $V(1,p,y_i^m) = \bar{V}$ is
\begin{equation}\label{eq:pi_V}
p_i = \frac{y_i^m}{\left( \bar V^{\frac{\alpha}{1-\alpha}}-(y_i^m)^{\frac{\alpha}{1-\alpha}}\right)^{\frac{1-\alpha}{\alpha}}}.
\end{equation}
This is the price that any urban location in $S$ must offer to attract a positive number of workers. Of course, more productive cities will face higher consumption prices. When agricultural and manufacturing goods are imperfect substitutes, i.e., $0 < \alpha < 1$, this implies that more-productive cities devote a lower percentage of their income to agricultural goods than less-productive cities. 

Using \eqref{eq:pi_V} into the formula for the Voronoi weight in \eqref{eq: weight}, we can write the difference between the weights of any two cities $s_i, s_j \in S$ as
\begin{equation}\label{eq:diflambda}
\lambda_i- \lambda_j =\frac{1}{\delta} \frac{1-\alpha}{\alpha}\log\left(\frac{\bar V^{\frac{\alpha}{1-\alpha}}-(y_j^M)^{\frac{\alpha}{1-\alpha}}}{\bar V^{\frac{\alpha}{1-\alpha}}-(y_i^m)^{\frac{\alpha}{1-\alpha}}}  \right), \quad i,j=1,\dots,n,
\end{equation}

\subsection{Existence and Uniqueness}

Let us now define the total population, given by
\begin{equation}\label{eq:totpop}
\bar L =\sum_{i=1}^n \left(L^M_i + \int_{\Omega_i(\lambda)} L^A(x)dx \right). 
\end{equation}
This aggregate constraint governs the relationship between total population $\bar{L}$ and the welfare scalar $\bar{V}$, both in an ``open economy'' scenario with fixed $\bar{V}$, and in a ``closed economy'' scenario with fixed $\bar{L}$. 

We are able to show the existence of a unique equilibrium with labor mobility for an arbitrary population $\bar L >0$.
\begin{theorem}\label{theo:fact}
Let $0<\alpha<1$, $\delta >0$ and $\bar L >0$ the total population.

Let the assumptions of Theorem~\ref{theorem: existence} hold, with the additional conditions that each distance function $d_s:X \to \R_+$, for $s \in S$, is of class $C^2$ and $a\colon X \to \R_{++}$ is of class $C^1$. Then there exists a unique equilibrium with mobile labor.
\end{theorem}
The proof (presented in Appendix~\ref{app:proofs}) is based on two propositions. First, Proposition \ref{prop:fact} shows that, for a fixed $\bar V$, a unique equilibrium exists with labor mobility. Second, Proposition \ref{prop:fact2} provides a one-to-one correspondence between $\bar V$ and total population $\bar L$. 

The proof of the latter is technically difficult because, as Equation \eqref{eq:diflambda} makes clear, the welfare scalar enters the expression for the Voronoi weights. As a result, to study the sign of the derivative of Equation \eqref{eq:totpop} with respect to $\bar{V}$, we have to keep track of the impact of $\bar{V}$ on the shape of all Voronoi regions. Mathematically, we have to evaluate the derivative of functions of the form
\begin{equation}\notag
a \mapsto I(a) = \int_{\Omega_i(\lambda(a))}f(x,a)dx,
\end{equation} 
where $a$ is a generic parameter. Note that the parameter $a$ affects both the value of $f$ inside the domain and the boundaries of the domain of integration. On a one-dimensional geography, we can apply the Leibniz rule to obtain the derivative with respect to $a$. A two-dimensional heterogeneous geography, where the boundaries lack an analytic representation, requires a strict generalization of the Leibniz rule from the mathematical theory of shape optimization, which is introduced in Lemma \ref{lem:shapeder}. 

Besides this technical result, Proposition \ref{prop:fact2} follows from the following observation: when a parameter change simultaneously affects all Voronoi weights, each border segment between two adjoining market areas is subject to two opposing forces, one from each side of the border. By looking at each border segment in isolation and collecting the corresponding terms pairwise, we can unambiguously determine the sign of the overall effect.

\subsection{Comparative Statics}\label{sec:compstatmob}

We now turn to the comparative statics of the equilibrium with labor mobility. Since we look at small changes around the equilibrium point, this exercise takes the set of inhabited cities as given. In the next subsection, we also explore how the parameters of the model affect the set of inhabited cities. 

For ease of exposition, we restrict our attention to a closed-economy scenario where $\bar{V}$ is fixed and the size of the economy $\bar{L}$ can adjust in response to economic shocks. As Equation \eqref{eq:diflambda} makes clear, this approach allows us to abstract from any \textit{indirect} effects of economic shocks on the spatial tessellation due to changes in the value of $\bar{V}$. We are still able to characterize the comparative statics for a closed economy (see Appendix \ref{app:addrescsL}). 

Our first result characterizes the impact of the shipping-cost parameter $\delta$ on the structure of the spatial tessellation. 
\begin{lemma}[Effect of $\delta$]\label{lemma:csmob}
Let the assumptions of Theorem~\ref{theo:fact} hold.

\begin{enumerate}[$i.$]
\item Take two cities, $s_i, s_j \in S$, with $y_i^m > y^m_j$. Then,
\begin{equation}\notag
\frac{\partial (\lambda_i - \lambda_j)}{\partial \delta}  
< 0 .
\end{equation}
\item Take a city $s_i \in S$, and let $\mathcal{N}_i = \{s_j \in S, s_j \neq s_i: \partial \Omega_i \bigcap \partial \Omega_j \neq \emptyset\} $ denote the set of its neighboring cities. If $y_i^m > \max_{s_j \in \mathcal{N}_i} y^m_j$, then
\begin{equation}\notag
\frac{\partial \vert \Omega_i(\lambda)\vert}{\partial \delta } < 0.
\end{equation}
The opposite inequality holds if $y_i^m < \min_{s_j \in \mathcal{N}_i} y^m_j$.
\end{enumerate}
\end{lemma}
The first part of Lemma \ref{lemma:csmob} concerns the effect of $\delta$ on the Voronoi weights of two cities. It is saying that, as $\delta$ decreases, the more-productive city becomes relatively more attractive for farmers compared to the less-productive city. This result applies independently of whether or not the respective market areas share a border. If they do share a border, then the more-productive city expands its market area at the expense of the less-productive city. The reason is that as distance bites less on shipping costs, the price component of the farmers' trading decisions becomes more relevant. Since agricultural prices are higher in more-productive cities, these cities become relatively more attractive.  

The result also suggests that it cannot be determined in general whether the overall market area of a city will expand (or shrink), unless all its urban neighbors are less (or more) productive. This is the content of the second part of the lemma. An immediate corollary is that a reduction in size of the shipping costs will inflate the market area of the most-productive city in the economy and contract the market area of the least-productive city. 

The next lemma has a structure similar to that of Lemma \ref{lemma:csmob}, but it focuses on changes in the value of the welfare scalar $\bar{V}$. This is of interest because, thanks to Theorem \ref{theo:fact}, varying $\bar{V}$ is equivalent to varying $\bar{L}$ in the opposite direction in a closed-economy scenario. Differently than in the case of $\delta$, the effect of a larger population depends on the elasticity of substitution between urban and rural goods.  
\begin{lemma}[Effect of $\bar{V}$]\label{lemma:csmobV}
Let the assumptions of Theorem~\ref{theo:fact} hold.
\begin{enumerate}[$i.$]
\item Take two cities, $s_i, s_j \in S$, with $y_i^m > y^m_j$. If $0 < \alpha < 1$, then
\begin{equation}\notag
\frac{\partial (\lambda_i - \lambda_j)}{\partial \bar{V}}  
< 0 .
\end{equation}
If $\alpha < 0$, the opposite inequality holds. 
\item Take a city $s_i \in S$, and let $\mathcal{N}_i = \{s_j \in S, s_j \neq s_i: \partial \Omega_i \bigcap \partial \Omega_j \neq \emptyset\} $ denote the set of its neighboring cities. If $0 < \alpha < 1$, and $y_i^m > \max_{s_j \in \mathcal{N}_i} y^m_j$, then
\begin{equation}\notag
\frac{\partial \vert \Omega_i(\lambda)\vert}{\partial \bar{V} } < 0,
\end{equation}
whereas if $y_i^m < \min_{s_j \in \mathcal{N}_i} y^m_j$, then
\begin{equation}\notag
\frac{\partial \vert \Omega_i(\lambda)\vert}{\partial \bar{V} } > 0.
\end{equation}
If $\alpha < 0$, the opposite inequalities hold. 
\end{enumerate}
\end{lemma}
To fix ideas, consider the case $0 < \alpha <1$. Then the Lemma \ref{lemma:csmobV} implies that as population flows into the economy (either endogenously because $\bar{V}$ falls, or exogenously in a closed economy), more-productive cities will expand their market areas at the expense of less-productive adjoining cities. The mechanism is different than the one discussed with regard to the effect of $\delta$, because  $\bar{V}$ enters the formula for the equilibrium price \eqref{eq:pi_V} directly. Here, agricultural prices adjust to preserve welfare equalization when $\bar{V}$ changes. Under substitution, more-productive cities devote a lower share of their budget to agricultural goods, and therefore agricultural prices have to increase relatively more when $\bar{V}$ falls. In turn, those cities become relatively more attractive for farmers. 

Changes in the size of market areas due to $\delta$ or $\bar{V}$ are reflected, other things equal, in the size of the urban population. However, besides shifting the borders, a parameter change will also have additional effects coming from the interior of the Voronoi regions, which may go in the opposite direction. For this reason, the comparative statics of urban population are ambiguous in general.

\subsection{City Formation}

We now study how the set of inhabited cities depends on the parameters of the model. For simplicity, we maintain the (irrelevant) hypothesis that $d_i(s_i) = 0$. 

Let $T(\lambda) = \{s_i \in S: \Omega_i(\lambda) \neq \emptyset \}$ denote the set of inhabited cities. For all cities $s_i \in S$:
\begin{equation}\label{eq:cityinhab2}
s_i \in T(\lambda) \iff \lambda_i - \lambda_j >  - d_j(s_i) , \quad j = 1\dots n, \; j\neq i.
\end{equation} 
This characterization is the basis of our discussion. Intuitively, it is saying the following: if not even farmers located at $ x = s_i$ are willing to trade at $s_i$, because another urban location is more attractive, then no other farmer is.

The following lemma is useful for understanding how cities are activated in the model.  
\begin{lemma}\label{lemma:cityformy}
Let the weights be defined as in \eqref{eq:diflambda} and let $S^*$ be the set of cities with the highest level of urban productivity in the economy, i.e., $S^* = \{ s_i \in S: s_i \in \argmax_{s_j \in S} y^m_j\}$. 
\begin{enumerate}[$i.$]
\item Take two urban locations $s_i, s_j \in S$ with $y_i^m \geq y^m_j$; then,
\begin{equation}\notag
\lambda_i - \lambda_j > - d_j(s_i) 	
\end{equation}
is always satisfied. 
\item  $S^* \subseteq T$.
\item If $y_i^m = \bar{y}^m$ for $i = 1\dots n$, then $T = S$ and we obtain a standard Voronoi tessellation.
\end{enumerate}
\end{lemma}
Lemma \ref{lemma:cityformy}, part \textit{i.}, is saying that an urban location can never prevail over a more-productive one, because, intuitively, a farmer located at $x = s_i$ will never agree to incur a shipping cost to trade at an equally productive or less-productive location. An immediate consequence, stated in part \textit{ii.}, is that the most productive urban locations in the economy are always inhabited. Finally, part \textit{iii.} points to the fact that productivity differences across cities are a key ingredient for obtaining variations of the set $T$; we come back to this point below. 

According to Lemma \ref{lemma:cityformy}, it is still possible that farmers located at $x = s_i \in S$ decide to travel some distance to carry their goods to a \textit{more} productive urban location. The question is therefore whether the remaining locations in $S$, other than the most-productive ones, will be inhabited in equilibrium. Condition \eqref{eq:cityinhab2} tells us that this will depend on
\begin{enumerate}[$i.$]
\item the vector of urban productivities, $\{ y_i^m \}_{s_i \in S}$
\item the bilateral distances between urban locations in $S$, $\{d_i(s_j)\}_{s_j \in S}$
\item the parameters of the model $\alpha$, $\delta$ and $\bar{V}$
\end{enumerate}

Our next result focuses on the role of parameters $\delta$ for the case $0 < \alpha < 1$. 
\begin{lemma}\label{lemma:cityformdelta} For each city $s_i \in S \setminus S^*$, let $S^*_i = \{s_k \in S: y^m_k > y_i^m \}$ denote the nonempty set of cities with higher values of urban productivity than city $s_i$, $i=1,\dots,n$. Then, $\delta^*_i$ exists such that $s_i \in T$ if and only if $\delta > \delta^*_i$, with 
\begin{equation}\label{eq:cityformdelta}
\delta^*_i \coloneqq \left( \frac{1-\alpha}{\alpha} \right)\max_{s_j \in S^{*}_i} \frac{1}{d_j(s_i) }   \log\left(\frac{\bar V^{\frac{\alpha}{1-\alpha}}-(y_i^m)^{\frac{\alpha}{1-\alpha}}}{\bar V^{\frac{\alpha}{1-\alpha}}-(y_j^M)^{\frac{\alpha}{1-\alpha}}}  \right).
\end{equation}

\end{lemma}

Using \eqref{eq:cityformdelta}, the potential urban locations in $S$ can be ordered without loss of generality such that $\delta^*_i \geq \delta^*_{i+1} $ for all $i=1\dots n-1$. Therefore, starting from some $\delta > \delta^*_1 $ where all locations in $S$ are inhabited and $T = S$, a continuous reduction in the size of shipping cost will ``sequentially'' shut down cities with increasingly lower values of $\delta^*_i$, until, for $\delta < \delta^*_{n-1}$, the urban population will be concentrated in $S^*$, the set (possibly a singleton) of urban locations with the highest value of urban productivity.    

Space plays an explicit role in condition \eqref{eq:cityformdelta}. This is even more transparent if we work with the following sufficient condition for $s_i \in T$: 
\begin{equation}\notag
\delta > \left( \frac{1-\alpha}{\alpha} \right)  \frac{1}{\min_{s_j \in S^*_{i} } d_j(s_i) } \max_{s_j \in S^*_{i}} \log\left(\frac{\bar V^{\frac{\alpha}{1-\alpha}}-(y_i^m)^{\frac{\alpha}{1-\alpha}}}{\bar V^{\frac{\alpha}{1-\alpha}}-(y_j^M)^{\frac{\alpha}{1-\alpha}}}  \right).
\end{equation}
This condition depends explicitly on $\min_{s_j \in S^*} d_j(s_i)$, the distance from city $s_i$ to the closest more-productive urban location, and becomes tighter as this term becomes smaller. Urban locations situated near more-productive ones will be inhabited only for high values of the shipping-cost parameter $\delta$. 

Conditions similar to \eqref{eq:cityformdelta} can be derived with respect to other parameters too. For instance, we may ask what happens when total population $\bar{L}$ increases in the economy. Again, thanks to Theorem \ref{theo:fact}, we can equivalently answer this question by varying $\bar{V}$ in the opposite direction. 
\begin{lemma}\label{lemma:cityformV}
For each city $s_i \in S$, there exists  
\begin{equation}\label{eq:cityformV}
\bar{V}^*_i \coloneqq y_i^m \max_{j=1\dots n, j \neq i} \left[ \frac{(y^m_j / y_i^m)^{\frac{\alpha}{1-\alpha}} -\Delta(s_i,s_j)^{\frac{-\alpha}{1-\alpha}} }{1 - \Delta(s_i,s_j)^{\frac{-\alpha}{1-\alpha}} } \right]^{\frac{1-\alpha}{\alpha} } 
\end{equation}
such that, if $0 < \alpha < 1$, then $s_i \in T$ if and only if $\bar{V} > \bar{V}^*_i$; and if $\alpha < 0$, then $s_i \in T$ if and only if $\bar{V} < \bar{V}^*_i$.
\end{lemma}
The lemma implies that, when the two goods are demand substitutes ($0 < \alpha < 1$), an increase in total population will reduce the number of inhabited cities in the economy and cause the urban population to cluster in the most-productive ones. Conversely, when the goods are complements ($\alpha < 0$), an increase in total population will increase the number of inhabited cities.\footnote{For $\alpha \to 0$ (a Cobb-Douglas utility function), the Voronoi weights do not depend on $\bar{V}$; therefore, the set of inhabited cities does not depend on total population.}

\subsection{Discussion}

Major economic theories of city formation, such as the system-of-cities theories à la \cite{henderson1974} and the new economic geography framework \cite{fujita2002}, emphasize the role of economies of scale for the emergence of cities. By contrast, in our model, economic activity may cluster in a subset of the available locations in the absence of positive spillovers at the urban level. Two ingredients are crucial for obtaining this result: first, potential urban locations possess heterogeneous characteristics; second, urban locations are meaningfully ordered in space.

In the absence of heterogeneity, the welfare-equalization condition for urban workers requires that agricultural prices be equalized across cities. 
Under these circumstances, farmers always direct themselves to the nearest urban market, and all cities are able to attract farmers in equilibrium. This is the point made in Lemma \ref{lemma:cityformy}, part \textit{iii.}, above.

But heterogeneity alone is not enough. Farmers need an outside option that, keeping their rural location $x \in X$ fixed, allows them to abandon the urban market $s_i \in S$ they are currently serving. This comes from the second ingredient. When hinterlands are arranged in space, there is a lower bound to the equilibrium agricultural price that can be sustained in an urban location, because farmers have the option to trade with other cities. If the agricultural price required to equalize the utility of urban workers falls below this lower bound, the urban location fails to attract farmers and therefore remains empty. By contrast, when hinterlands are not arranged in space, this outside option is missing, because in this case the choices of $x$ and $s_i$ are bundled together. 

\begin{figure}[!htb]
\centering
\caption{Spatial Equilibrium With Mobile Labor and Two Urban Sites }
\includegraphics[width = 0.8\textwidth]{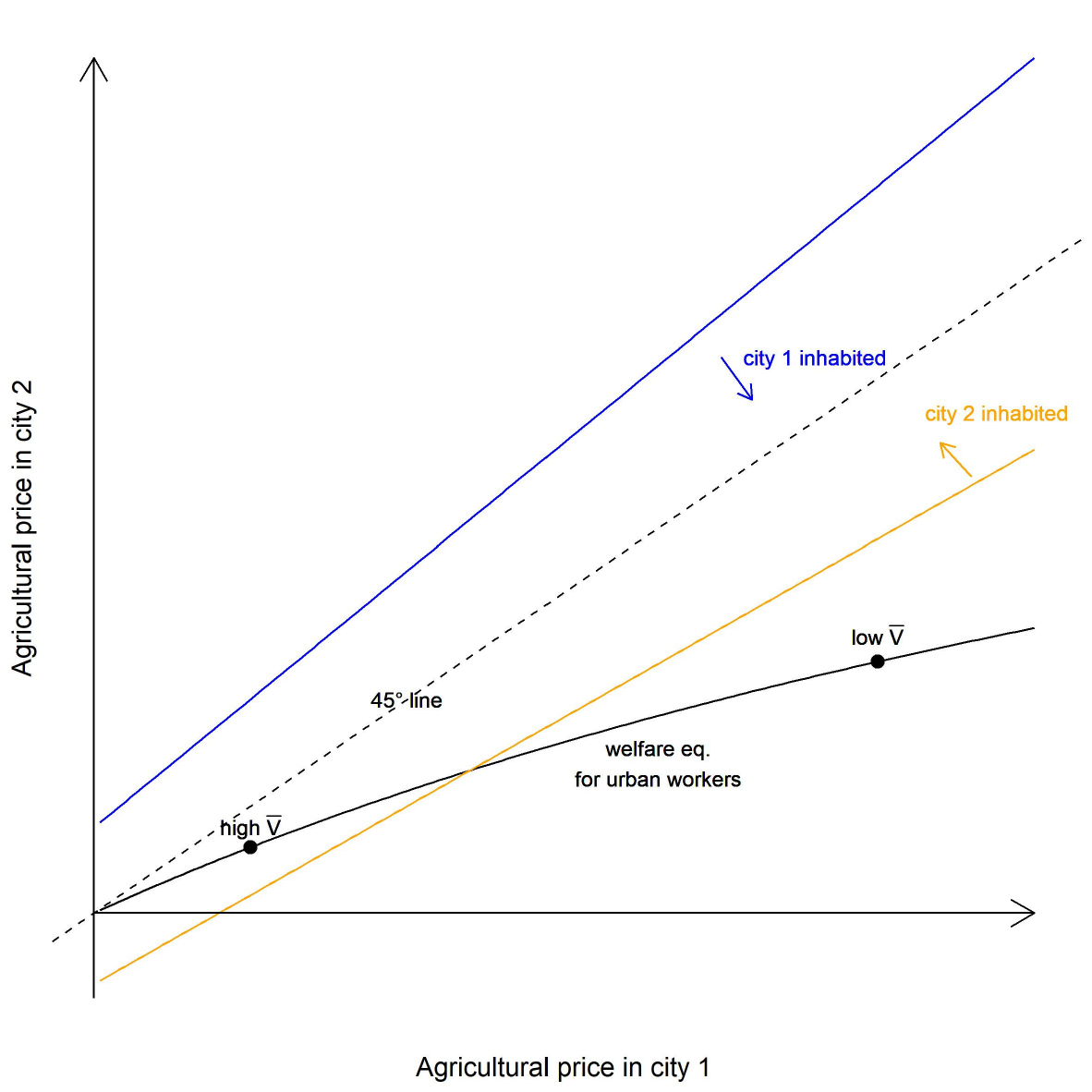}
\parbox{15cm}{\footnotesize \emph{Notes:} This figure represents the spatial equilibrium conditions of the model for the case with $0<\alpha<1$, and two urban sites, i.e., $S = \{s_1, s_2\}$, and $y^m_1 > y^m_2$. For high values of $\bar{V}$ (small population), both cities are inhabited, whereas at low values of $\bar{V}$ (large population), only city 1 is inhabited.}
\label{fig:cityformation}
\end{figure}

Figure \ref{fig:cityformation} depicts this discussion for a simple case with two urban sites, $S = \{s_1, s_2\}$, such that $\Delta(s_1, s_2) = \Delta(s_2, s_1) = \Delta > 1$ and $y^m_1 > y^m_2$. In this case, the spatial equilibrium of the model is described by three equations: 
\begin{align*}
p_2 &= \left\{ \left[1 + p_1^{\frac{\alpha - 1}{\alpha}}\right]\left(\frac{y^m_1}{y^m_2}\right)^{\frac{\alpha}{1-\alpha}} - 1 \right\}^{\frac{\alpha -1}{\alpha}}  & \text{welfare eq. for urban workers} \\
p_2 &< \left[\left(1 + p_1^{\frac{\alpha}{1-\alpha}} \right) \Delta^{\frac{\alpha}{1-\alpha}} - 1 \right]^{\frac{1-\alpha}{\alpha}} & \text{city 1 inhabited} \\
p_2 &> \left[\left(1 + p_1^{\frac{\alpha}{1-\alpha}} \right) \left(\frac{1}{\Delta}\right)^{\frac{\alpha}{1-\alpha}} - 1 \right]^{\frac{1-\alpha}{\alpha}} & \text{city 2 inhabited}
\end{align*}
The figure plots these conditions in the $(p_1, p_2)$ space, for the case $0 < \alpha < 1$. The solid black line represents the set of points where the utility of urban workers equalizes. Fixing $\bar{V}$ (or $\bar{L}$) is equivalent to selecting a point on this line, with lower values of $\bar{V}$ corresponding to points farther from the origin. The blue line traces the boundary below which city 1 is able to attract farmers. With $\delta > 0$, the slope of this line is strictly above one and its intercept strictly positive; therefore, city 1 will always be inhabited in this example. Similarly, the orange line represents the boundary above which city 2 is able to attract farmers. In the presence of trade costs, its slope is strictly below one and its intercept strictly negative; therefore, it starts below the welfare-equalization condition for urban workers, but it crosses it as $\bar{V}$ decreases and we move out from the origin. Beyond the crossing point, city 2 ceases to be viable. If $y^m_1 = y^m_2$, the welfare-equalization curve coincides with the $45$-degree line; therefore, both cities are always inhabited. When $\delta$ increases, the blue and orange lines shift and rotate outward in opposite directions, thus extending the region where both cities are viable. 

Finally, we note that the absence of economies of scale, as well as trade costs, in the urban sector accounts for the analytical ease of our approach as compared to spatial models with realistic geographies, or even simple geographies such as \cite{fujita2002} or \cite{henderson1974}. At the same time, the model retains a mechanism for the agglomeration of economic activities via the endogenous trading choices of farmers. While this formulation allows us to clarify the theoretical underpinnings of a spatial model with endogenous market areas, it also abstracts from two features that are important for theoretical and empirical work. In Sections \ref{sec:manuftrade} and Sections \ref{sec:spillover} of the Appendix, we extend to the model to incorporate these features. The model remains sufficiently tractable to characterize the uniqueness of the equilibrium, but some of the closed-form expressions are lost.

\section{Applying the Model to Swiss Cantons}\label{sec:exp}

As a final exercise, we return to the model with immobile labor (from Section \ref{sec: model}) and apply it empirically to the case of Switzerland. Our objective here is to illustrate the workings of the model and the usefulness of having a notion of borders that can be meaningfully brought to the data. To do so, we will use the model to compute alternative spatial tessellations centered on Swiss administrative capitals and compare them to Switzerland's administrative borders. 

Switzerland is a small country with a rich internal geography. Officially, it is a confederation of 26 cantons, each canton having its own capital city. Though most of the population lives in urban areas, the city-size distribution is relatively compressed, with many small and medium-sized cities.

In the rest of this section, we first discuss two metrics that can be used to compute the distance between two tessellations (and thus have a notion of goodness of fit); then we describe the construction of the spatial grid used to discretize the Swiss geography and the empirical counterparts of our theoretical variables; finally, we discuss our results. Appendix \ref{app:data} contains more details on the data. 

\subsection{Comparing Tessellations}\label{sub:comptes}

Let $\Omega = \{\Omega_k\}_{k=1}^n$ and $\Omega' = \{\Omega'_k\}_{k=1}^n$ be two different partitions (not necessarily Voronoi tessellations) of a set $X$. We will use two different metrics to compare these partitions.
\begin{itemize}
\item The Hausdorff distance. For this, we consider the \textit{skeletons} $\partial \Omega$, $\partial \Omega'$ of the partitions $\Omega$, $\Omega'$, defined as 
\[
\partial \Omega =\{ x \in X : x \in \partial \Omega_k, \text{ for some } k =1,\dots,n\}
\]
and similarly for $\partial \Omega'$. We then define the Hausdorff distance between $\Omega$ and $\Omega'$ as
\begin{equation}
d_H (\Omega,\Omega') = \max_{x \in \partial \Omega'} \min_{y \in \partial \Omega}\|x-y\|,
\end{equation}
where $\| \cdot \|$ is the Euclidean distance.
\item The Area distance. Since we consider only partitions with the same number of subdomains, we can define the following distance:
\begin{equation}
d_A (\Omega,\Omega') =\sum_{j=1}^n \min_{k=1,\dots,n} \left| \Omega_j \triangle \Omega_k' \right|,
\end{equation}
where $A \triangle B = (A \cup B) \setminus (A \cap B)$ is the symmetric difference of $A$ and $B$, and $|\cdot |$ is the Lebesgue measure.
\end{itemize}
\subsection{Data}

We approximate the set $X$ with a $200 \times 200$ rectangular grid that fully encloses Switzerland. Each grid cell's area is approximately $1.38 \times 1.38$ square kilometers (km$^2$). After dropping cells that lie entirely outside the Swiss borders, our dataset contains 22,350 cells. We take the set $S$ to coincide with the set of 26 cantonal capitals, except for three cases, as explained in Appendix \ref{app:data}. Therefore our final set $S$ includes $23$ cities. 

To solve the model, we also need data on agricultural output $\{y^a(x) L^A(x)\}_{x \in X}$, manufacturing output $\{y_i^m L^M_i \}_{s_i \in S}$, and travel distances from rural cells to cities $\{d(x,s_i)\}_{x \in X, s_i \in S}$. To proxy agricultural output at the grid-cell level, we use the caloric suitability index developed in \cite{galor2016} (taking the post-1500 average across all crops). Because data on production or wages are not available at the municipal level, we use manufacturing gross value added at the canton level as a measure of manufacturing output. Finally, we use the Fast Marching Method (see \cite{allen2014}) to compute bilateral distances. One advantage of this method is that it allows travel distances to depend on geographical characteristics; in particular, we set transit speed through each grid cell to be inversely proportional to the cell's altitude, and we assign maximum speed to cells intersected by a river or a lake. Figure \ref{fig:gridelev} shows the elevation data on our grid.

\begin{figure}
\caption{Elevation data on our grid}
\centering
\includegraphics[height= 0.3\textheight]{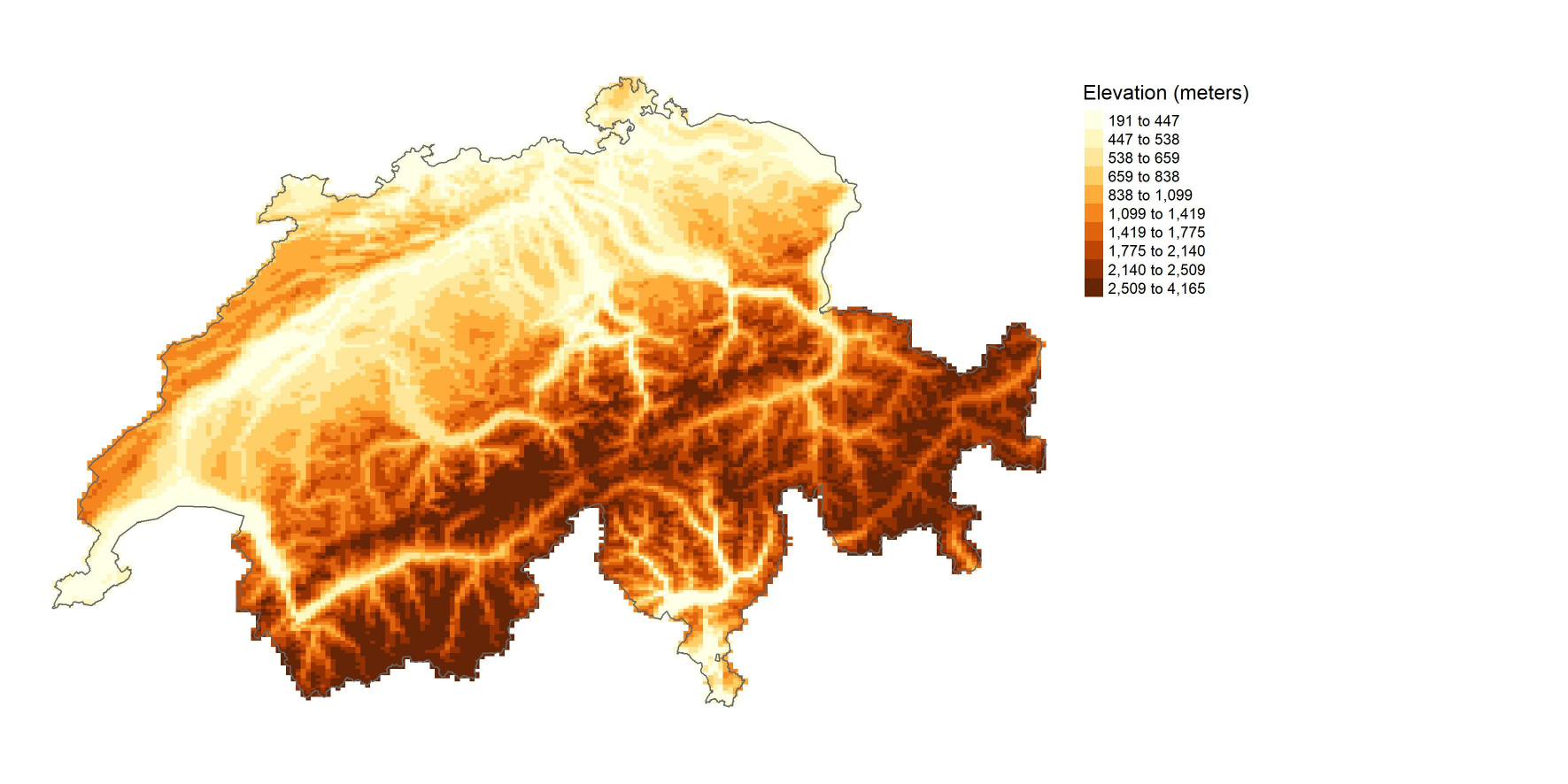}
\parbox{15cm}{\footnotesize \emph{Notes:} This figure reports the elevation data (in meters) at the grid cell level for Switzerland. Each grid cell's area is approximately $1.38 \times 1.38$ square kilometers. }
\label{fig:gridelev}
\end{figure}

\subsection{Simulated Tessellations}

We compute a spatial tessellation under three different scenarios, taking into account progressively more aspects of the model. First, we consider a standard Voronoi tessellation such that (1) all cities are assigned the same weight and (2) transit costs are set equal to one for all cells; this is a ``null'' tessellation that neglects both the role of geography and the role of market forces. Second, we construct a tessellation with equal weights, but this time using a distance matrix that accounts for the Swiss geography. Third, we factor in, on top of geography, the role of market forces by using Algorithm~\ref{algo} to compute the Voronoi tessellation consistent with the equilibrium conditions of the model. 

The model with immobile labor \ref{sec: model} features two parameters: the CES parameter $\alpha$, related to the elasticity of substitution between consumption goods, and the semielasticity of shipping costs with respect to distance travelled, $\delta$. We set $\alpha = 0.5$, consistent with the evidence that expenditure shares on agricultural goods are negatively related to income, and $\delta = 0.2$.\footnote{To choose this value, we calculated the equilibrium tessellation for 20 distinct values of $\delta$, ranging from $0.05$ to $1$ with $0.05$ intervals, and selected the value of $\delta$ that provides the most accurate fit. Although both metrics attain their minimum at $\delta = 0.2$, the Hausdorff distance appears constant within the interval of $[0.05,0.25]$, whereas the Area distance displays a smooth global minimum at $0.2$.
}

Figure \ref{fig:marketareas} presents  the results of our simulations. The black lines trace the borders of the theoretical tessellations, and the orange lines trace the administrative borders between Swiss cantons. In Figure \ref{fig:marketareasa}, the borders between market areas are straight lines (up to a discrete approximation), equidistant from the corresponding cities. In Figure \ref{fig:marketareasb}, borders tend to reflect the presence of mountains and rivers. This is most visible for the three southernmost market areas, which now appear to be entirely located below the Alpine watershed (for comparison, see the elevation map in Figure \ref{fig:gridelev}). Figure \ref{fig:marketareasc} shows the equilibrium tessellation, calculated with Algorithm \ref{algo} described in Section \ref{sec:algo}. 

The impact of market forces is most visible in the northeast, a relatively flat area where the largest cities are located. As a result, the balance between geography and market forces in this area is skewed toward the latter. By comparison, the three market areas in the south host smaller cities and are cut off by the Alps from the rest of Switzerland. As a result, their northern borders mainly reflect the underlying geography and are little affected by market forces.

\begin{figure}
\caption{Simulated market areas and administrative borders}
\begin{subfigure}{\textwidth}
\centering
\includegraphics[height= 0.25\textheight]{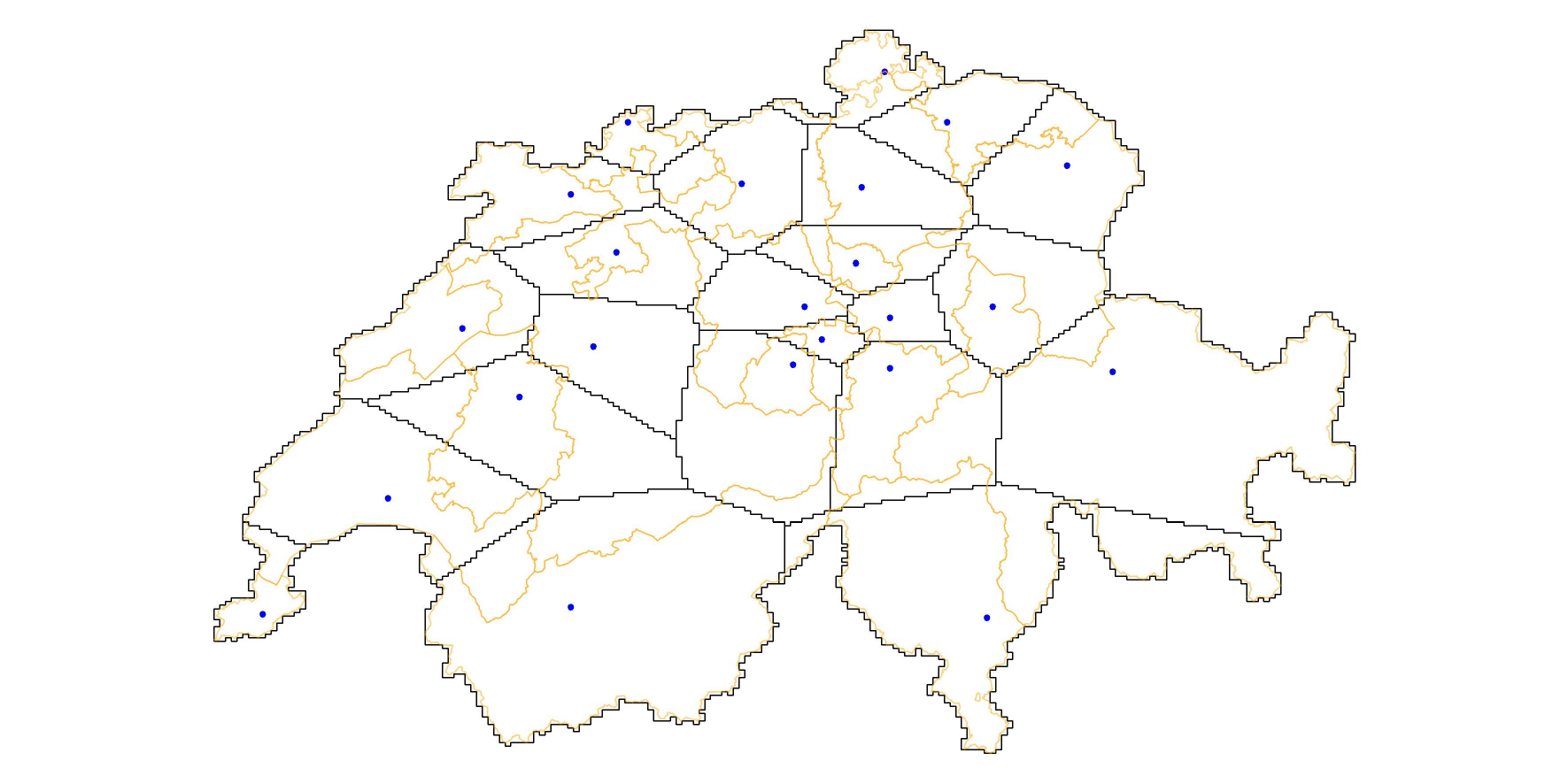}
\caption{Constant Weights, Uniform Geography}
\label{fig:marketareasa}
\end{subfigure}\\
\begin{subfigure}{\textwidth}
\centering
\includegraphics[height= 0.25\textheight]{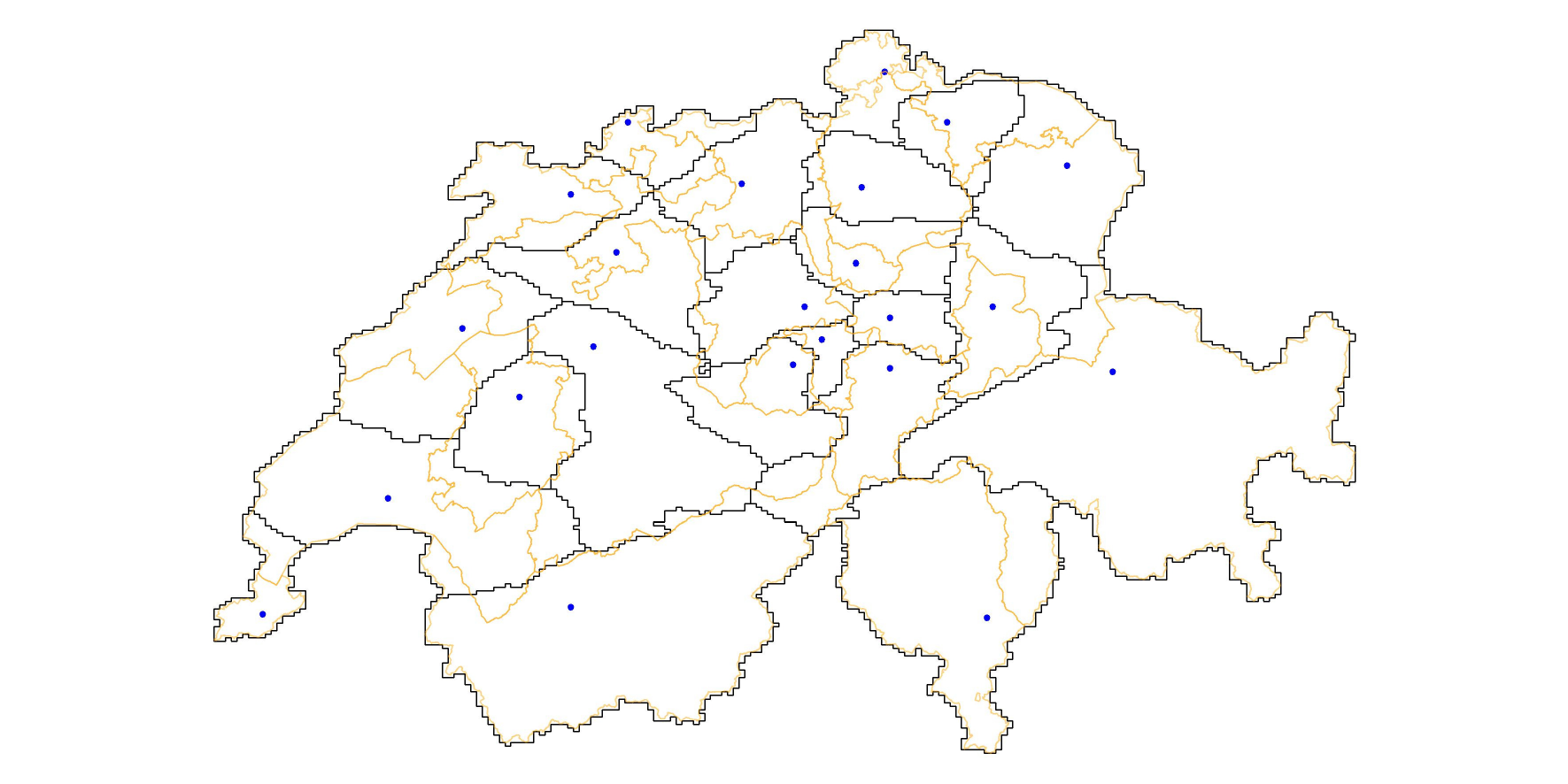}
\caption{Constant Weights, Heterogenous Geography}
\label{fig:marketareasb}
\end{subfigure}\\
\begin{subfigure}{\textwidth}
\centering
\includegraphics[height= 0.25\textheight]{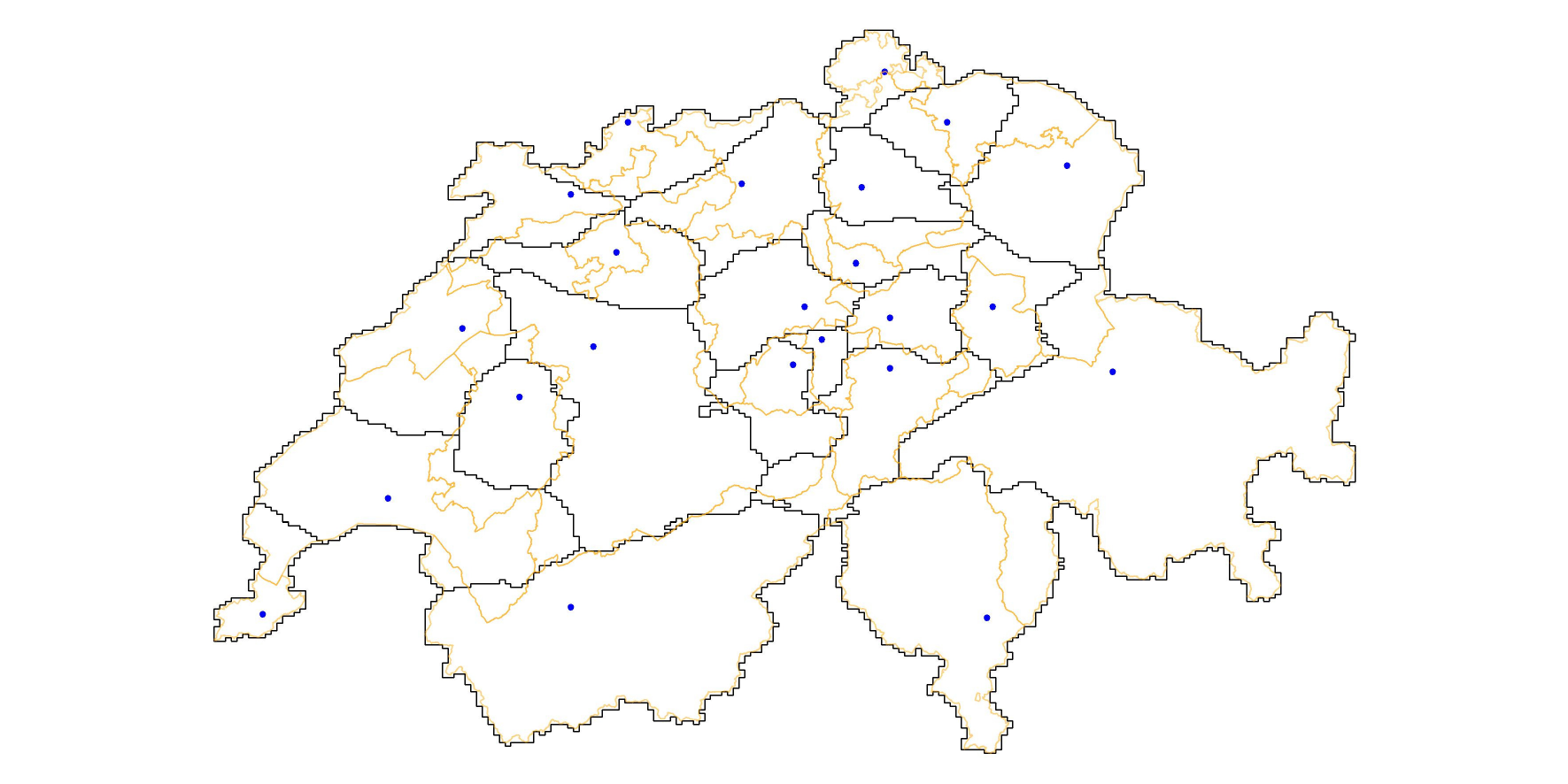}
\caption{Equilibrium Weights, Heterogenous Geography}
\label{fig:marketareasc}
\end{subfigure}
\label{fig:marketareas}
\parbox{15cm}{\footnotesize \emph{Notes:} The figure shows, in black, the simulated tessellations resulting from three different exercises. In the top panel, all cities are assigned the same weight and transit costs are set equal to one for all cells. In the middle panel, all cities are assigned the same weight, but the distance matrix accounts for the underlying Swiss geography. In the bottom panel, the tessellation is computed with Algorithm \ref{algo} according to the equilibrium of the model with immobile labor. For each panel, we superimpose in orange the tessellation of Swiss cantonal borders. }
\end{figure}

\subsection{Matching Administrative Borders}

It is interesting to investigate whether the borders that emerge from our simulations resemble the real-world borders, so in this exercise, we focus on administrative borders. Figure \ref{fig:marketareas} allows for a visual comparison between simulated tessellations (in black) and the observed cantonal borders (in orange). Table \ref{tab:borders} carries out this comparison more formally, reporting percentage changes in goodness of fit (for both metrics) relative to the Euclidean, unweighted baseline. If the model captures some of the forces that shape administrative borders, then we expect it do better than a completely atheoretical baseline. 

To measure the goodness of fit between the model-predicted borders and the data, we use the two metrics introduced in Section~\ref{sec:algo}. Let $\Omega^C$ be the tessellation of the Swiss cantons that we would like to approximately recover from our model. Let $\Omega^B$ be the baseline partition---Standard Voronoi with costant weights and uniform flat geography---shown in Figure~\ref{fig:marketareasa}, let $\Omega^G$ be the one with heterogeneous geography shown in Figure~\ref{fig:marketareasb}, and let  $\Omega^E$ be the equilibrium one shown in Figure~\ref{fig:marketareasc}. In columns 1, 2, and 3, respectively, we report the absolute distances: $d(\Omega^C,\Omega^B)$, $d(\Omega^C,\Omega^G)$, and $d(\Omega^C,\Omega^B)$.

In column 1, we report the relative improvements
\[
\frac{d(\Omega^C,\Omega^G)-d(\Omega^C,\Omega^B)}{d(\Omega^C,\Omega^B)}
\]
associated with using the partition $\Omega^G$ to approximate the cantonal borders with respect to the baseline, for $d = d_H, d_A$. We find that introducing geographical features in the distance computation (as in Figure \ref{fig:marketareasb}) improves the goodness of fit by 23.7\% relative to the baseline. In column 2, we then show the relative improvements 
\[
\frac{d(\Omega^C,\Omega^E)-d(\Omega^C,\Omega^B)}{d(\Omega^C,\Omega^B)}
\]
associated with using the equilibrium partition $\Omega^E$ with respect to the baseline, for $d = d_H, d_A$.  The numbers imply that the equilibrium tessellation is, according to both metrics, a better approximation of cantonal borders than a tessellation based only on geographic characteristics. The improvement is about $3.3$ percentage points in terms of Hausdorff distance and $8.0$ percentage points in terms of area discrepancy.   

All in all, these results suggest that the economic forces at work in our model may play a role in explaining the location of administrative borders. Obviously, administrative and political borders in general are determined by many other factors that are outside the scope of this study. 

\begin{table}[!htbp] \centering 
  \caption{Matching administrative borders} 
  \label{tab:borders} 
\resizebox{15cm}{!}{
\begin{tabular}{@{\extracolsep{5pt}} cccc|cc} 
\\[-1.8ex]\hline 
\hline \\[-3.8ex]
\\ & \multicolumn{3}{c}{Distance to tessellation of Swiss cantons}  & \multicolumn{2}{c}{\begin{tabular}{@{}c@{}} Relative distance  \\ with respect to Baseline\end{tabular}}  \\ \hline
Metric & Baseline & Geography &     \begin{tabular}{@{}c@{}} Geography \\ + \\ Equilibrium Weights\end{tabular}
 & Geography & \begin{tabular}{@{}c@{}} Geography \\ + \\ Equilibrium Weights\end{tabular} \\ 
\hline \\[-1.8ex] 
Hausdorff distance ($km$) & $23.195$ & $17.695$ & $16.919$ & $0.763$ & $0.729$ \\ 
Area discrepancy ($km^2$) & $29,456.470$ & $22,916.690$ & $20,563.460$ & $0.778$ & $0.698$ \\ 
\hline \\[-3.8ex] 
 \\ 
\end{tabular} }
\parbox[t]{15cm}{\footnotesize \textit{Notes:} This table reports the outcome of our empirical exercise. We compare the tessellations from alternative models to the tessellation of cantonal borders in Switzerland. For each model-tessellation, we compute a measure of distance to the data-tessellation. In the first row, we use the Hausdorff distance; in the second row, we use a measure of area discrepancy (see the main text). The reported figures represent the goodness-of-fit improvement, in percentage terms, relative to a model with no market forces and distances given by the Euclidean metric. For the model-tessellation in column 1, distances are computed taking into account elevation and waterways, but market forces are absent. The model-tessellation in column 2 is the equilibrium tessellation computed according to the model described in this paper.}
\end{table}

\section{Conclusions}\label{sec:conclusions}

We have presented a spatial model with the following key characteristics: a realistic geography, a spatially continuous distribution of agents interacting with one location from a finite set, a CES utility function, and iceberg distance costs. Together, these ingredients imply that the equilibrium outcome presents two novel properties: first, there is a well-defined notion of borders between market areas; second, when people are mobile, a number of locations may remain vacant. To derive these properties, we make use of a set of technical results from the mathematical theories of Voronoi diagrams and shape optimization. We hope that the equilibrium properties of the model and the technical tools developed in this paper will broaden the reach of applied work in spatial economics to new sets of questions. 

We are aware that the model may be generalized in a number of directions. Some features that we have abstracted from are, for instance, multiple sectors of production in urban locations and productivity growth. We plan to explore these extensions in future work.

\section*{Acknowledgments}

We would like to thank Jan Bakker, Vincenzo Denicolò, and Paolo Masella for useful comments and suggestions, and Simone Di Marino for pointing out a simpler approach to the proof of Theorem \ref{theorem: existence}. M.S. is member of the Gruppo Nazionale per l’Analisi Matematica, la Probabilit\`a e le loro Applicazioni (GNAMPA) of the Istituto Nazionale di Alta
Matematica (INdAM).
\newpage
\bibliographystyle{plain}

\clearpage

\appendix

\section{Intermediate Results and Proofs}\label{app:intermediate}

\subsection{Existence and Uniqueness of an Equilibrium Without Labor Mobility}\label{app:intermediateunique}
\begin{proposition}\label{prop: equiv}
Consider the model presented in Sections \ref{sub:voronoi} and \ref{sec: model}. Then $p \in \R_{++}^{n+1}$ is an equilibrium if and only if $\nabla \F (p) = 0$.
\end{proposition}
\begin{proof}
By the definition of equilibrium, we need to show that $\nabla \F (p) = 0$ is equivalent to $Z(p) = 0$. Since $\F$ is homogeneous of degree zero, we have that $p \cdot \nabla \F (p) = 0$ for every $p \in \R_{++}^{n+1}$, so in particular $\frac{\partial \F}{\partial q}$ is determined by $\frac{\partial \F}{\partial p_i}$, $i=1,\dots,n$ and we can deduce its formula afterwards. 

To compute $\nabla \F$, let
\begin{equation}\notag
\F_1(p) = -  \sum_{i=1}^n V(p_i, q, qy_i^m) L^M_i , \quad \F_2(p) =  -\int_{X} V(x, p)L^A(x) dx,
\end{equation}
so that $\nabla \F = \nabla \F_1 + \nabla \F_2$. Since $\F_1$ is differentiable in $\R_{++}^{n+1}$, we obtain
\begin{align}\notag
\frac{\partial \F_1}{\partial p_i}(p) &= - \frac{\partial V(p_i, q, \omega_i)}{\partial p_i}L^M_i \\
&= c^a(p_i, q, \omega_i)L^M_i\frac{\partial V(p_i, q, \omega_i)}{\partial \omega_i},\notag
\end{align}
where we have applied Roy's identity in the second step. For $\F_2$, the situation is more subtle. To differentiate under the integral sign, we need to verify some properties of the integrand. 

Consider the change of variable $p \mapsto \bar p$, where
$\bar p_i  = p_i \big/ \left( q^{\frac{\alpha}{\alpha - 1}} + p_i^{\frac{\alpha}{\alpha - 1}}  \right)^{\frac{\alpha - 1}{\alpha }}$
, $ \bar q = 1$, and define $\overline V(x,\bar p) = V(x,p)$, that is
\begin{equation}\label{def: barV}
\overline V(x,\bar p) = \max_{i =1,\ldots,n} \left\{ \frac{\bar p_i y^a(x)}{\Delta(x,s_i)}\right\}.
\end{equation}
First, for each $\bar p \in \R_{++}^{n+1}$, the map $x \mapsto \bar V(x,\bar p)L^A(x)$ is measurable on $X$, since it is continuous and bounded. Then we have to show that the map $\bar p \mapsto \bar V(x,\bar p) L^A(x)$ is differentiable for almost every $x \in X$. This is true since the bisectors have measure $0$, which are the only points where the function is not differentiable. So we find that
\begin{equation}\notag
\frac{\partial \bar V(x,\cdot)L^A(x)}{\partial \bar p_i}(\bar p) =  \chi_{\Omega_i}(x) \frac{y^a(x)}{\Delta(x,s_i)} L^A(x) ,
\end{equation}
almost everywhere in $x \in X$, where $\chi_{\Omega_i}(x)$ is the characteristic function of the Voronoi region $\Omega_i$. Finally we can bound this derivative by a measurable function independent of $\bar p$ as follows:
\begin{equation}\notag
\left| \chi_{\Omega_i}(x)\frac{y^a(x)L^A(x)}{\Delta(x,s_i)}\right| \leq  y^a(x)L^A(x),
\end{equation}
for all $p \in \R_{++}^{n+1}$.

Then, letting $\bar \F_2(\bar p) = \F_2(p)$, we can apply \cite[Theorem 6.28]{klenke2013}, which gives
\begin{equation}\notag
\frac{\partial \bar \F_2}{\partial \bar p_i}(\bar p) = -\int_{\Omega_i} \frac{\partial \overline V(x,\bar p)}{\partial \bar p_i}L^A(x)  dx =  -\int_{\Omega_i} \frac{ y^a(x) L^A(x)}{\Delta(x,s_i)}dx, \quad \bar p \in \R_{++}^{n+1}.
\end{equation}
By the chain-rule formula, we obtain
\begin{equation}\notag
\frac{\partial \F_2}{\partial p_i}( p) = -\int_{\Omega_i} \frac{d  V(p_i,q, \omega(x, s_i))}{d p_i} L^A(x) dx, \quad p \in \R_{++}^{n+1}.
\end{equation}
By Roy's identity,
\begin{equation}\notag
\frac{\partial  V(p_i, q,\omega)}{\partial p_i} = -c^a(p_i,q,\omega)\frac{\partial   V(p_i,q, \omega)}{\partial \omega},
\end{equation}
and, because $\omega(x,s_i) = p_i y^a(x)/\Delta(x,s_i)$, we finally obtain: 
\begin{equation}\notag
\frac{\partial \F_2}{\partial p_i}(p) = \int_{\Omega_i} 
\left[ c^a\left(p_i, q, \omega(x, s_i) \right) - \frac{y^a(x)}{\Delta(x, s_i)}  \right] L^A(x) \frac{\partial V\left(p_i, q, \omega(x, s_i)\right)}{\partial \omega(x,s_i)}  dx, \quad  p \in \R_{++}^{n+1}.
\end{equation}
Using Lemma \ref{lemma: propertiesV}, we can write, for $p \in \R_{++}^{n+1}$:
\begin{equation}\notag
\frac{\partial \F}{\partial p_i}(p) = \left(c^a(q,p_i, \omega_i)L^M_i + \int_{\Omega_i} 
c^a\left(p_i, q, \omega(x, s_i) \right)L^A(x) dx - \int_{\Omega_i} \frac{y^a(x) L^A(x)}{\Delta(x, s_i)} dx\right) v(q,p_i),
\end{equation}
Given that the the marginal utility of wealth $v(q,p_i)$ is strictly positive (see Lemma \ref{lemma: propertiesV}), it is easy to see that $\frac{\partial \F}{\partial p_i}(p) = 0$ is equivalent to $Z_i(p) = 0$. This proves the equivalence of the statement, since $p \cdot Z(p) =0$ gives $Z_{n+1}= 0$ and $p\cdot \nabla \F(p)=0$ gives $\frac{\partial \F}{\partial q}(p) = 0$. 
\end{proof}

\subsection{Comparative Statics of the Equilibrium Without Labor Mobility}\label{app:intermediatecs}
Let us focus on an arbitrary parameter (or vector of parameters) $a$.
Consider the normalized excess demand system 
$$\bar{Z}(p, a) = (Z_1(\bar p,a), Z_2(\bar p,a),...,Z_n(\bar p,a)),$$ 
where $a$ is a vector of exogenous parameters and $\bar p = \bar p(a) = (p_1, p_2, \dots, p_n)$ with $q$ normalized to one. Since $\bar{Z}(\bar p,a)$ is differentiable, if $p$ is a equilibrium price vector, i.e., $\bar Z( \bar p(a),a) =0$, we can use the implicit function theorem and write
\begin{equation}\label{eq:compstat}
D_{a}{\bar p}(a) = -(D_{\bar p} \bar{Z}(\bar p, a))^{-1}D_a \bar{Z}(\bar p, a).
\end{equation} 

\begin{proposition}\label{prop:gsz}
Let the assumptions of Proposition~\ref{prop: equiv} hold and let the excess demand function $Z$ be defined via \eqref{eq: edA}. Then $Z$ satisfies
\begin{enumerate}[(i)]
\item \label{GGS1}
$\begin{cases}
\frac{\partial Z_i}{\partial p_j} > 0 \text{  if  }  \partial \Omega_i \cap \partial \Omega_j \neq \emptyset,\\ 
\frac{\partial Z_i}{\partial p_j} \geq 0  \quad \text{otherwise},  
\end{cases} \quad \text{for } i,j=1,\dots,n, \text{ with } i \neq j;$
\item \label{GGS2} for  $0 < \alpha < 1, \;\;\frac{\partial Z_i}{\partial q} > 0, \quad \text{for }i =1\dots,n$; 
\end{enumerate}
\end{proposition}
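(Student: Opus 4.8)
The plan is to exploit the fact that, for $j\neq i$, the price $p_j$ enters $Z_i$ only through the Voronoi region $\Omega_i$, whereas $p_i$ and $q$ act both through the region and through the per‑agent demands. The first move is to simplify the integrand with the farmer's budget constraint: since $q\,c^m+p_i c^a=\omega(x,s_i)=p_i y^a(x)/\Delta(x,s_i)$, the net agricultural demand of a farmer at $x\in\Omega_i$ is
\[
c^a\!\left(q,p_i,\omega(x,s_i)\right)-\frac{y^a(x)}{\Delta(x,s_i)}=-\frac{q\,c^m\!\left(q,p_i,\omega(x,s_i)\right)}{p_i}<0 .
\]
Writing $s_m(q,p_i)=q^{\frac{\alpha}{\alpha-1}}\big/\big(q^{\frac{\alpha}{\alpha-1}}+p_i^{\frac{\alpha}{\alpha-1}}\big)$ for the (income‑independent, homothetic) manufacturing expenditure share and $s_a=1-s_m$, this collapses the excess demand to the compact form
\[
Z_i(p)=-\,s_m(q,p_i)\,\Phi_i(\lambda)+s_a(q,p_i)\,\frac{q}{p_i}\,y^m_i L^m_i,\qquad \Phi_i(\lambda):=\int_{\Omega_i(\lambda)}\frac{y^a(x)}{\Delta(x,s_i)}L^A(x)\,dx .
\]
The single geometric object carrying all the region dependence is then $\Phi_i$, the agricultural tonnage delivered to $s_i$.

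For part (\ref{GGS1}) I would differentiate this expression in $p_j$, $j\neq i$. The shares and the urban term do not involve $p_j$, so $\partial Z_i/\partial p_j=-s_m(q,p_i)\,\partial_{p_j}\Phi_i$. Now $\Phi_i$ depends on $p_j$ only through $\lambda_j$, and only through the single difference $\gamma_{ij}=\lambda_i-\lambda_j$ entering the factor $R_{\gamma_{ij}}(s_i,s_j)$ in \eqref{eq: voronoi}. Since $\lambda_j$ is strictly increasing in $p_j$ (the net‑seller monotonicity of the weight noted in Section~\ref{sec: farmerchoice}), raising $p_j$ lowers $\gamma_{ij}$ and, because $R_\gamma$ increases with $\gamma$, shrinks $\Omega_i$ precisely along the bisector $B_{\gamma_{ij}}(s_i,s_j)$. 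Admissibility makes $\gamma\mapsto\Phi_i$ continuously differentiable, and with Lemma~\ref{lem: bisectors} ensuring bisectors are null, $\partial_{\gamma_{ij}}\Phi_i$ is a non‑negative surface integral over $B_{\gamma_{ij}}(s_i,s_j)\cap\partial\Omega_i$ of the positive density $(y^a/\Delta)\,L^A$. Chaining the signs ($s_m>0$, $\partial_{p_j}\gamma_{ij}<0$, density $>0$) gives $\partial Z_i/\partial p_j\geq 0$, with strict inequality exactly when that bisector contributes a set of positive $\mathcal H^1$‑measure to $\partial\Omega_i$, i.e. when $\partial\Omega_i\cap\partial\Omega_j\neq\emptyset$; when the cities share no border the relevant constraint in \eqref{eq: voronoi} is slack and the derivative vanishes.

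For part (\ref{GGS2}) I would split $\partial Z_i/\partial q$ into an intensive margin (shares and urban income, region held fixed) and an extensive margin $-s_m\,\partial_q\Phi_i$. The intensive margin combines substitution toward the now relatively cheaper agricultural good with the urban income effect — urban nominal income is $q\,y^m_i$, so it rises with $q$. Evaluated at an equilibrium, where $Z_i=0$ gives $s_a\frac{q}{p_i}y^m_iL^m_i=s_m\Phi_i$, the intensive part telescopes cleanly to $\dfrac{s_m(q,p_i)\,\Phi_i}{q(1-\alpha)}>0$, the positivity coming only from $\alpha<1$. The extensive margin arises because every weight falls in $q$ at rate $\partial_q\lambda_k=-s_m(q,p_k)/(\delta q)$, so that, using $\sum_k\partial_{\lambda_k}\Phi_i=0$,
\[
\partial_q\Phi_i=-\frac{1}{\delta q}\sum_{k\neq i}\big[\,s_m(q,p_k)-s_m(q,p_i)\,\big]\,\partial_{\lambda_k}\Phi_i .
\]

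The main obstacle is this extensive term. Unlike in part (\ref{GGS1}) it is not sign‑definite: raising $q$ moves city $i$'s boundary with a neighbor $s_k$ inward or outward according to the sign of $s_m(q,p_i)-s_m(q,p_k)$, that is, according to whether $s_i$ is relatively expensive or cheap, so the boundary contributions carry mixed signs and need not cancel. The crux is therefore to show that the clean positive intensive term dominates the extensive term in absolute value. Here I would use $\partial_{\lambda_k}\Phi_i\leq 0$ for $k\neq i$ together with $|s_m(q,p_k)-s_m(q,p_i)|\leq 1$ to obtain $|\partial_q\Phi_i|\leq \frac{1}{\delta q}\,\partial_{\lambda_i}\Phi_i$, and then control the boundary quantity $\partial_{\lambda_i}\Phi_i$ against the bulk quantity $\Phi_i$ via the admissibility conditions: since the delivered density $y^a e^{-\delta d_i}$ concentrates near $s_i$ while $\partial\Omega_i$ sits away from $s_i$, this ratio is small for large $\delta$, which is the regime where $\partial Z_i/\partial q>0$ is easiest to certify. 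Making this domination rigorous is the delicate quantitative step, and it is also what links the proposition to the threshold $\delta_0$ appearing downstream in Theorem~\ref{thm:cs}.
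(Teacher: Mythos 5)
Your budget-constraint reduction $Z_i(p)=-s_m(q,p_i)\,\Phi_i(\lambda)+c^a(q,p_i,\omega_i)L^M_i$ is precisely the rewrite the paper uses (it appears as a footnote in the paper's own proof), and your argument for part (i) --- $p_j$ acts on $Z_i$ only through $\lambda_j$, which is strictly increasing in $p_j$, so raising $p_j$ shrinks $\Omega_i$ strictly when $\partial\Omega_i\cap\partial\Omega_j\neq\emptyset$ and leaves it unchanged otherwise --- is the paper's argument, written differentially where the paper uses a finite price increase. One caveat: admissibility gives only continuity and monotonicity of $\gamma\mapsto|C\cap R_\gamma(s_i,s_j)|$, not the continuous differentiability and $\mathcal{H}^1$ surface-density representation of $\partial_{\gamma}\Phi_i$ that you invoke; the paper's finite-increment comparison sidesteps this.

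Part (ii) contains a genuine gap. The proposition asserts $\partial Z_i/\partial q>0$ for every $\delta>0$ and at every price vector; your argument delivers strictly less. Your ``clean'' intensive margin $s_m\Phi_i/\left(q(1-\alpha)\right)$ is obtained only after imposing $Z_i=0$, a hypothesis the proposition does not contain; away from equilibrium, when $\alpha<0$ (the gross-complements case, which is exactly the one used in Section~\ref{sec:exp}, where $\alpha=-1$), the farmer contribution $-\partial_q s_m(q,p_i)\,\Phi_i$ is negative and can outweigh the urban term whenever $\Phi_i$ is large, so your intensive margin is not sign-definite at arbitrary prices. Moreover, the step you yourself flag as delicate --- showing that the intensive term dominates the extensive (boundary-movement) term --- is never proved, only asserted to be plausible for $\delta$ large. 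What you end up with is a conditional statement (at equilibrium, for $\delta$ above some threshold, modulo an unproven estimate), whereas Proposition~\ref{prop:gs1} invokes part (ii) with no restriction on $\delta$: in the paper's logical structure the large-$\delta$ hypothesis is meant to enter only later, in Lemma~\ref{lem:cs}, so your proof would force a large-$\delta$ assumption through the entire comparative-statics chain. For comparison, the paper's proof of (ii) is a one-liner: treating the regions as fixed, both $c^a(q,p_i,\omega(x,s_i))$ and $c^a(q,p_i,\omega_i)$ are claimed to be increasing in $q$. That route never engages the extensive margin you (correctly) worry about --- the regions do vary with $q$, since $\lambda_i-\lambda_k$ depends on $q$ whenever $p_i\neq p_k$ --- and its monotonicity claim for the farmer term in fact holds only for $0<\alpha<1$. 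So your diagnosis of where the difficulty sits is sharper than the paper's own treatment, but as a proof of the proposition as stated, part (ii) is not established.
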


\begin{proof}

\begin{enumerate}[Proof of (i)]
\item First, take two neighboring cities $s_i,s_j \in S$, i.e., $\partial \Omega_i(\lambda) \cap \partial \Omega_j(\lambda) \neq \emptyset$, and a price vector $p$. Denote by $p'$ a price vector such that $p'_j > p_j$ and $p'_i = p_i$ for $i \neq j$. We have that $\Omega_i(\lambda') \subset \Omega_i(\lambda)$, which implies
\begin{eqnarray*}
\int_{\Omega_i(\lambda')}\frac{y^a(x)L^A(x)}{\Delta(x,s_i)}dx < \int_{\Omega_i(\lambda)}\frac{y^a(x)L^A(x)}{\Delta(x,s_i)}dx.
\end{eqnarray*}
By Equation \eqref{eq: edA}, $Z^A_i(p') > Z^A_i(p)$.\footnote{To see this, rewrite \eqref{eq: edA} as \begin{equation*}
Z_i(p) = - \frac{ q^{\frac{\alpha}{\alpha-1}}}{ q^{\frac{\alpha}{\alpha-1}} + p_i^{\frac{\alpha}{\alpha-1}}} \int_{\Omega_i(\lambda)}\frac{y^a(x)L^A(x)}{\Delta(x,s_i)}dx + c^a(q, p_i, \omega_i)L^M_i.
\end{equation*}
}

Second, take two cities $s_i,s_j \in S$ that don't have a border in common. In this case, $\Omega_i(\lambda') = \Omega_i(\lambda)$, and thus $Z_i(p') = Z_i(p')$. 
\item By \eqref{eq: edA}, $\frac{\partial Z_i(p)}{\partial q} >0$, since both $c^a(q, p_i, \omega(x, s_i))$ and $c^a(q, p_i, \omega_i)$ are increasing in $q$. \qedhere
\end{enumerate}
\end{proof}

The property \eqref{GGS1} has a key graph theoretic interpretation. We recall the following definition:
\begin{definition}\label{def:graph}
The associated \textit{directed graph} $G(A)$ of a $n \times n$ matrix $A$ consists of $n$ vertices $P_1,\dots,P_n$ where an edge leads from $P_i$ to $P_j$ if and only if $a_{ij} \neq 0$. A directed graph $G$ is \textit{strongly connected} if for any ordered pair $(P_i,P_j)$ of vertices in $G$, there exists a sequence of edges (a \textit{path}) with leads from $P_i$ to $P_j$.
\end{definition}
Now property \eqref{GGS1} shows that the directed graph $G(D_{\bar p} \bar{Z}(\bar p, a))$ consists of vertices $P_1,\dots,P_n$ corresponding to each Voronoi region $\Omega_1,\dots,\Omega_n$ and $P_i$ and $P_j$ are connected by an edge if and only if $\Omega_i$ and $\Omega_j$ share a bisector in common.
These remarks allows to derive properties of the inverse matrix $(D_{\bar p} \bar{Z}(\bar p, a))^{-1}$.


\begin{proposition} \label{prop:gs1}
Let the assumptions of Proposition~\ref{prop: equiv} hold and suppose that $\bar{Z}(\bar p, a) = 0$. Then the matrix $(D_{\bar p} \bar{Z}(\bar p, a))^{-1} $ has all entries negative.
\end{proposition}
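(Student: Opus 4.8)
The plan is to show that $A := D_{\bar p}\bar Z(\bar p,a)$, after a positive diagonal rescaling, is the negative of a strictly row-diagonally-dominant, irreducible M-matrix, and then invoke the standard fact that the inverse of such a matrix is entrywise positive. First I would record what Proposition~\ref{prop:gsz} already gives about $A$: writing $a_{ij}=\partial Z_i/\partial p_j$, the off-diagonal entries satisfy $a_{ij}\geq 0$ (with $a_{ij}>0$ exactly when $\partial\Omega_i\cap\partial\Omega_j\neq\emptyset$), and $\partial Z_i/\partial q>0$ for every $i$. What is still missing is the sign of the diagonal and a dominance relation, and these I would extract from homogeneity.

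Since demands are homogeneous of degree zero in $(p_1,\dots,p_n,q)$ and the regions $\Omega_i(\lambda)$ depend only on the weight differences $\lambda_i-\lambda_j$, which are themselves homogeneous of degree zero, the excess demand $Z$ is homogeneous of degree zero. Euler's theorem applied to $Z_i$ at $q=1$ then yields $\sum_{j=1}^n p_j\, a_{ij} = -\partial Z_i/\partial q<0$ for each $i$. Because the off-diagonal terms $p_j a_{ij}$ with $j\neq i$ are nonnegative, this forces $a_{ii}<0$ and, more importantly, gives strict diagonal dominance after scaling. Concretely, set $D=\mathrm{diag}(p_1,\dots,p_n)$ and $\tilde A = AD$, so that $\tilde a_{ij}=p_j a_{ij}$; the row sums of $\tilde A$ are strictly negative, its off-diagonal entries are nonnegative, and its diagonal is negative. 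Hence $M:=-\tilde A$ has positive diagonal, nonpositive off-diagonal entries, and strictly positive row sums, i.e.\ it is strictly row-diagonally dominant, and therefore a nonsingular M-matrix \cite{berman1994}.

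It remains to pass from $M^{-1}\geq 0$ (which diagonal dominance alone guarantees) to the strict statement $M^{-1}>0$ entrywise. For this I would use irreducibility: by Proposition~\ref{prop:gsz} the directed graph $G(\tilde A)=G(A)$ has an edge between $P_i$ and $P_j$ precisely when $\Omega_i$ and $\Omega_j$ share a bisector, and since the Voronoi tessellation of the connected set $X$ induces a connected (hence strongly connected, the adjacency being symmetric) adjacency graph, $M$ is irreducible. The inverse of an irreducible nonsingular M-matrix is entrywise strictly positive \cite{berman1994}, so $M^{-1}>0$ and thus $\tilde A^{-1}=-M^{-1}<0$. Finally, undoing the scaling via $A^{-1}=D\,\tilde A^{-1}$, and since $D$ has strictly positive diagonal, every entry of $A^{-1}$ is strictly negative.

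The main obstacle is the irreducibility step: the sign pattern and dominance only deliver $A^{-1}\leq 0$, and upgrading to strict negativity of \emph{every} entry requires that no proper subset of market areas be isolated from the rest. This is exactly where the geometry enters — one must argue that the adjacency graph of the additively weighted Voronoi regions covering the connected domain $X$ is connected, so that any two regions are joined by a chain of shared bisectors. A secondary point to verify carefully is the homogeneity of $Z$, in particular that the regions $\Omega_i(\lambda)$ are invariant under a common rescaling of all prices, which is what legitimizes the use of Euler's theorem.
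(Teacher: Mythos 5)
Your proposal is correct and follows essentially the same route as the paper: homogeneity of degree zero plus $\partial Z_i/\partial q>0$ to get $D_{\bar p}\bar Z(\bar p,a)\,\bar p \ll 0$, the sign pattern from Proposition~\ref{prop:gsz} to make $-D_{\bar p}\bar Z$ a Z-matrix, connectivity of the Voronoi adjacency graph for irreducibility, and inverse-positivity of irreducible nonsingular M-matrices \cite{berman1994}. The only cosmetic difference is that you convert the condition $A\bar p\ll 0$ into strict row diagonal dominance by the column rescaling $\tilde A = A\,\mathrm{diag}(p)$, whereas the paper uses $\bar p$ directly as the semipositivity witness in the M-matrix characterization; these are equivalent.
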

\begin{proof}
The proof goes along the same lines of the proof of \cite[Proposition 17.G.3]{mascolell1995} in combination with some more technical results on M-matrices from \cite{berman1994}. First, note that $D_p Z(p, a)p = 0$ because excess demands are homogenous of degree 0. Now from property \eqref{GGS2}, $\partial Z_i / \partial q > 0$, $i = 1,...n$, we have $D_{\bar p} \bar{Z}(\bar p, a) \bar p \ll 0$. Let $I$ denote the $n\times n$ identity matrix, and take $r$ large enough for the matrix $A = \frac{1}{r}D_{\bar p} \bar{Z}(\bar p, a) + I $ to have all its entries nonnegative, thanks to property \eqref{GGS1}. This means that $-D_{\bar p} \bar{Z}(\bar p, a)$ is a nonsingular M-matrix \cite{berman1994}, which satisfies $D_{\bar p} \bar{Z}(\bar p, a) \bar p \ll 0$. We already remarked that, thanks to Proposition~\ref{prop:gsz}, the directed graph $G(D_{\bar p} \bar{Z}(\bar p, a))$ has edges for each pairs of neighboring Voronoi regions.   Since the Voronoi regions are a (finite) partition of the connected set $X$, this graph is strongly connected as in Definition~\ref{def:graph}. Then, thanks to \cite[Theorem 2.2.7]{berman1994}, we have that $-D_{\bar p} \bar{Z}(\bar p, a)$ is irreducible. Now $-D_{\bar p} \bar{Z}(\bar p, a)$ is a irreducible nonsingular M-matrix such that $D_{\bar p} \bar{Z}(\bar p, a) \bar p \ll 0$, and so by \cite[Theorem 6.2.7]{berman1994} its inverse $-(D_{\bar p} \bar{Z}(\bar p, a))^{-1}$ has all positive entries.
\end{proof}

\begin{lemma}\label{lem:cs}
Let the assumptions of Proposition~\ref{prop: equiv} hold and suppose that $\bar{Z}(\bar p, a) = 0$. There exists $\delta_0 >0$ such that for $\delta \geq \delta_0$ we have
\begin{equation}\label{est:diag}
(D_{\bar p} \bar{Z}(\bar p, a))^{-1}_{ii} \leq (D_{\bar p} \bar{Z}(\bar p, a))^{-1}_{ki}, \quad 1\leq i,k\leq n.
\end{equation}
\end{lemma}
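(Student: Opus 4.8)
The plan is to transfer the question to the nonsingular $M$-matrix $B = -D_{\bar p}\bar Z(\bar p,a)$ and exploit a column-wise maximum principle. By Proposition~\ref{prop:gs1} the inverse $B^{-1} = -(D_{\bar p}\bar Z(\bar p,a))^{-1}$ has strictly positive entries, and the desired estimate \eqref{est:diag} is equivalent to the statement that in each column $i$ the diagonal entry of $B^{-1}$ is the largest: writing $w = B^{-1}e_i$ for the $i$-th column, I must show $w_i \geq w_k$ for all $k$. The first step is to record the identities coming from $Bw = e_i$. Since $B$ is a $Z$-matrix ($B_{kj} = -|B_{kj}|$ for $j\neq k$, by the sign pattern of Proposition~\ref{prop:gsz}), the equations in the rows $k\neq i$ read $B_{kk}w_k = \sum_{j\neq k}|B_{kj}|w_j$.

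The structural heart of the argument is then a short maximum principle. Suppose $B$ is strictly (row-wise) diagonally dominant, i.e. $\sum_{j\neq k}|B_{kj}| < B_{kk}$ for every $k$. Let $M = \max_k w_k$, which is positive because $w \gg 0$. If the maximum were attained at some $k^*\neq i$, the identity above together with $w_j \leq M$ would give $B_{k^*k^*}M = \sum_{j\neq k^*}|B_{k^*j}|w_j \leq M\sum_{j\neq k^*}|B_{k^*j}| < M\,B_{k^*k^*}$, a contradiction. Hence the column maximum is attained at $i$, which is exactly $w_i\geq w_k$ and thus \eqref{est:diag}. This step uses neither $\delta$ nor the detailed form of $Z$; it needs only strict diagonal dominance and the sign pattern already established.

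It therefore remains to show that $\delta\geq\delta_0$ forces $B$ to be strictly diagonally dominant, and this is where the geography enters and where the main work lies. I would estimate the entries of $D_p Z$ and then pass to $D_{\bar p}\bar Z$ through the componentwise positive change of variables $p\mapsto\bar p$, whose Jacobian is diagonal and, on a compact price range, bounded above and below. For $j\neq i$ the entry $\partial Z_i/\partial p_j$ is purely a boundary term: raising $p_j$ moves the additively weighted bisector $\partial\Omega_i\cap\partial\Omega_j$, and the induced change of $\int_{\Omega_i} y^a L^A/\Delta(\cdot,s_i)\,dx$ is a flux carrying the weight $e^{-\delta d_{s_i}(x)}$ evaluated on that bisector (for non-neighbours the entry is exactly $0$). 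Since the bisector stays at a distance bounded below by some $c_{ij}>0$ from $s_i$ — as $\delta\to\infty$ the weights $\lambda_i-\lambda_j = \frac{1}{\delta}\log(\hat v(q,p_i)/\hat v(q,p_j))$ tend to $0$, so the weighted bisector converges to the unweighted one, which is interior and separated from the centres — every off-diagonal entry is $O(e^{-\delta c})$ up to polynomial factors in $\delta$. The diagonal entry $\partial Z_i/\partial p_i$, by contrast, contains the urban-demand contribution $\partial_{p_i}c^a(q,p_i,\omega_i)\,L^M_i$, which is $\delta$-independent, strictly negative and bounded away from $0$, while the remaining agricultural terms concentrate near $s_i$ and vanish. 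Consequently $B_{ii}$ stays bounded below while $\sum_{j\neq i}|B_{ij}|\to 0$, giving strict diagonal dominance for all large $\delta$.

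The main obstacle is precisely this asymptotic analysis, whose delicate point is uniformity: the equilibrium price $\bar p(\delta)$, and hence the bisector geometry, move with $\delta$, so one must show that the equilibrium prices stay in a fixed compact subset of $\R_{++}^n$ (bounded away from $0$ and $+\infty$) uniformly as $\delta\to\infty$. This is what secures a uniform lower bound $c_{ij}>0$ on the distance from each centre to its bisectors, a uniform lower bound on the urban-demand derivative, and uniform control of the change-of-variables Jacobian. Granting such a priori bounds — which I expect to obtain from coercivity estimates analogous to those in the proof of Theorem~\ref{theorem: existence} — the flux/decay estimate for the off-diagonal entries together with the $\delta$-independent diagonal lower bound produces $\delta_0$, and the maximum principle of the second paragraph then closes the proof.
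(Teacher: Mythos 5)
Your proposal is correct and its overall skeleton is the same as the paper's: show that for $\delta$ large the Jacobian $D_{\bar p}\bar Z(\bar p,a)$ becomes diagonally dominant — because every off-diagonal entry is a boundary term weighted by $e^{-\delta d_{s_i}(x)}$ on a bisector bounded away from $s_i$, while the diagonal retains the $\delta$-independent urban-demand term — and then convert dominance into the statement that each column of $-(D_{\bar p}\bar Z)^{-1}$ attains its maximum on the diagonal. Where you genuinely depart from the paper is the matrix-theoretic step. The paper establishes only the weak row-sum condition $D_{\bar p}\bar Z(\bar p,a)e\leq 0$, and then invokes \cite[Lemma 9.3.14]{berman1994} together with the irreducibility of $-D_{\bar p}\bar Z$ (which in turn comes from the connectivity of the Voronoi graph, via Proposition~\ref{prop:gs1}). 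You instead prove the column-maximum property from scratch by a three-line maximum principle applied to $Bw=e_i$, assuming \emph{strict} row diagonal dominance; this needs no irreducibility and no external reference — indeed it only needs $\max_k w_k>0$, which already follows from $w=B^{-1}e_i\geq 0$, $w\neq 0$, so even the full entrywise positivity of Proposition~\ref{prop:gs1} is stronger than what your argument uses. The trade-off is harmless: strict dominance is exactly what the large-$\delta$ limits deliver, since the diagonal limit is strictly negative while the off-diagonal limits vanish, so nothing is lost relative to the paper's weak condition. Finally, the uniformity issue you flag — that the equilibrium $\bar p(\delta)$ and hence the bisector geometry move with $\delta$, so the decay of off-diagonal entries and the lower bound on the diagonal must hold uniformly over the relevant price range — is a genuine subtlety, but it is not a gap specific to your argument: the paper's own proof computes the pointwise limits $\lim_{\delta\to\infty}\partial Z_i/\partial p_j$ and concludes directly, without addressing it. Conditional on the a priori price bounds you describe, your proof is complete and, in its matrix half, more elementary and self-contained than the published one; supplying those bounds explicitly would strengthen both arguments.
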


\begin{proof}
It is easy to show that
\[
\lim_{\delta \to +\infty} \frac{\partial Z_i}{\partial p_j} = 0 \quad \text{for } j \neq i,
\]
and
\[
\lim_{\delta \to +\infty} \frac{\partial Z_i}{\partial p_i} =-\frac{p_i^{\frac{2-\alpha}{\alpha-1}}}{q^{\frac{\alpha}{\alpha-1}}+p_i^{\frac{\alpha}{\alpha-1}}}\left(p_i^{\frac{\alpha}{\alpha-1}}+\frac{1}{1-\alpha}q^{\frac{\alpha}{\alpha-1}} \right)qy_i^m L_i^M <0,
\]
because all the terms containing the shipping-cost function $\Delta(x,s_i)$ disappear.

Then there exists $\delta_0>0$ such that for $\delta \geq \delta_0$, the row sums of $D_{\bar p} \bar{Z}(\bar p, a)$ are all nonpositive, i.e., $D_{\bar p} \bar{Z}(\bar p, a) e \leq 0$ for $e=(1,\dots,1)^t$. 
The estimate \eqref{est:diag} follows directly from \cite[Lemma 9.3.14]{berman1994}, and the fact that $-D_{\bar p} \bar{Z}(\bar p, a)$ is an irreducible nonsingular M-matrix, as shown in Proposition~\ref{prop:gs1}.
\end{proof}

\subsection{Comparative Statics of the Equilibrium With Labor Mobility}

\begin{proof}[Proof of Lemma \ref{lemma:csmob}]
To prove part \textit{i.}, take the derivative of \eqref{eq:diflambda} with respect to $\delta$:
\begin{equation}\notag
\frac{\partial (\lambda_i - \lambda_j)}{\partial \delta} = \frac{1}{\delta^2}\frac{1-\alpha}{\alpha} \log
\left(\frac{\bar V^{\frac{\alpha}{1-\alpha}}-(y_i^m)^{\frac{\alpha}{1-\alpha}}}{\bar V^{\frac{\alpha}{1-\alpha}}-(y_j^M)^{\frac{\alpha}{1-\alpha}}}  \right)
, \quad i = 1\dots n.
\end{equation} 
If $0<\alpha<1$, 
\begin{equation*}
\frac{\partial (\lambda_i - \lambda_j)}{\partial \delta} \leq 0 \iff 
(y_i^m)^{\frac{\alpha}{1-\alpha}} \geq (y^m_j)^{\frac{\alpha}{1-\alpha}},
\end{equation*}
and the inequality holds for $y_i^m \geq y^m_j$. If, instead, $\alpha<0$, 
\begin{equation*}
\frac{\partial (\lambda_i - \lambda_j)}{\partial \delta} \leq 0 \iff 
(y_i^m)^{\frac{\alpha}{1-\alpha}} \leq (y^m_j)^{\frac{\alpha}{1-\alpha}},
\end{equation*}
and the inequality holds again for $y_i^m \geq y^m_j$. 

Clearly, if $y_i^m > \max_{s_j \in \mathcal{N}_i} y^m_j$, then the inequality holds for all $s_j \in \mathcal{N}_i$. Part \textit{ii.} then follows from the the properties of admissible distance functions in Definition \ref{def:distance}. 
\end{proof}

\begin{proof}[Proof of Lemma \ref{lemma:csmobV}]
To prove part \textit{i.}, take the derivative of \eqref{eq:diflambda} with respect to $\bar{V}$:
\begin{equation}\notag
\frac{\partial (\lambda_i - \lambda_j)}{\partial \bar{V}} = \frac{1}{\delta} \bar{V}^{\frac{\alpha}{1-\alpha} -1} \frac{ (y_j^M)^{\frac{\alpha}{1-\alpha}} -(y_i^m)^{\frac{\alpha}{1-\alpha}}  }{\left(\bar V^{\frac{\alpha}{1-\alpha}}-(y_i^m)^{\frac{\alpha}{1-\alpha}} \right)\left(\bar V^{\frac{\alpha}{1-\alpha}}-(y_j^M)^{\frac{\alpha}{1-\alpha}} \right)}, \quad i = 1\dots n.
\end{equation} 
Since the denominator is always positive,  
\begin{equation*}
\frac{\partial (\lambda_i - \lambda_j)}{\partial \bar{V}} \leq 0 \iff 
(y_i^m)^{\frac{\alpha}{1-\alpha}} \geq (y^m_j)^{\frac{\alpha}{1-\alpha}}.
\end{equation*}
If $0 < \alpha < 1$, the inequality holds for $y_i^m \geq y^m_j$. If instead $\alpha < 0$, the inequality holds for $y_i^m \leq y^m_j$. 

To prove part \textit{ii.}, suppose first that $0<\alpha<1$. Clearly, if $y_i^m > \max_{s_j \in \mathcal{N}_i} y^m_j$, then 
\begin{equation*}
\frac{\partial (\lambda_i - \lambda_j)}{\partial \bar{V}} \leq 0 \text{ for all } s_j \in \mathcal{N}_i.
\end{equation*}
whereas the opposite is true if $y_i^m < \min_{s_j \in \mathcal{N}_i} y^m_j$. The result then follows from the the properties of admissible distance functions in Definition \ref{def:distance}. 
\end{proof}

\subsection{City Formation}
\begin{proof}[Proof of Lemma \ref{lemma:cityformy}]
From \eqref{eq:diflambda}, it is clear that $y_i^m > y^m_j$ implies $\lambda_i - \lambda_j > 0$. This proves part \textit{i.} since the right-hand side is strictly negative.  

Now, if $s_i \in S^*$, that is, if $y_i^m \geq y^m_j$ for all $s_j \in S, s_j \neq s_i$, then $\lambda_i -\lambda_j > 0$ for all $s_j \in S$. Therefore condition \eqref{eq:cityinhab2} is satisfied and $s_i \in T$. This proves part \textit{ii.}

Now suppose $y_i^m = \bar{y}^m$ for all $s_i \in S$. Then $\lambda_i - \lambda_j = 0$ and condition \eqref{eq:cityinhab2} for all $s_i,s_j \in S$. This proves the final part of the Lemma.
\end{proof}
\begin{proof}[Proof of Lemma \ref{lemma:cityformdelta}] Use \eqref{eq:diflambda} into condition \eqref{eq:cityinhab2}. Isolating $\delta$ delivers expression \eqref{eq:cityformdelta}. 
\end{proof}
\begin{proof}[Proof of Lemma \ref{lemma:cityformV}] Use \eqref{eq:diflambda} into condition \eqref{eq:cityinhab2}. Isolating $\bar{V}$ delivers expression \eqref{eq:cityformV}. 
\end{proof}
\section{Proofs of the Main Theorems}\label{app:proofs}

\subsection{Proof of Theorem \ref{theorem: existence}}
\begin{proof}[Proof of Theorem \ref{theorem: existence}]
By Proposition~\ref{prop: equiv}, we saw that equilibrium points are extrema of the cost function $\F$. It is then enough to show that $\F$ has a unique global maximum for $q=1$. First note that $\sup_{p \in \R_{++}^n} \F(p) < + \infty$ since $ \F( p) \leq 0$ for every $ p \in \R_{++}^n$, where now $p = (p_1,\dots,p_n)$ with an abuse of notation.

Consider the change of variable $p \mapsto \bar p$, where 
$$\bar p_i  = p_i \big/ \left( 1 + p_i^{\frac{\alpha}{\alpha - 1}}  \right)^{\frac{\alpha - 1}{\alpha }} = \left(1+p_i^{\frac{\alpha}{1-\alpha}}\right)^{\frac{1-\alpha}{\alpha}}, \qquad \bar q = 1$$
and define $\bar \F(\bar p) = \F(p)$, that is
\[
\bar \F(\bar p) = - \sum_{i=1}^n  \left(1-\bar p_i^{\frac{\alpha}{\alpha-1}}\right)^{\frac{\alpha-1}{\alpha}} y_i^m L^M_i
-\int_{X} L^A(x)\overline V(x, \bar p)\mathrm{d}x,
\]
where $\overline V(x,\bar p)$ was already defined in \eqref{def: barV}. If we show that $\bar \F$ is strictly concave in $D = \{\bar p \in \R^n: \bar p_i > 1, i=1,\dots,\}$ and it attains a global maximum, then the same will be true for $\F$ since the the map $p \mapsto \bar p$ is a smooth change of variable from $\R_{++}^n$ to $D$.

Let 
\[
\bar \F_1(\bar p) = \sum_{i=1}^n \left(1 - \bar p_i^{\frac{\alpha}{\alpha-1}}\right)^{\frac{\alpha-1}{\alpha}}y_i^m L^M_i, \quad \bar \F_2( \bar p) = \int_{X}L^A(x)\bar V(x,\bar p)dx,
\]
so that $\bar \F = - \bar \F_1 - \bar \F_2$. It is easy to check that $\bar \F_1(\bar p)$ is strictly convex in $D$ since its Hessian is just a diagonal matrix with strictly positive entries. The function $\bar V(x,p)$ is convex in $p$, since it is the $\max$ of linear functions. The convexity carries directly to $\bar \F_2(\bar p)$, which is just obtained by integration on another variable. Since the sum of a strictly convex function and a convex function is strictly convex, we have just showed that $\bar \F$ is strictly concave in $D$. 

Let $m = \bar \F(1,\ldots,1)$ and consider the superlevel set
\[
C_m = \{  \bar p \in D : \bar \F( \bar p) \geq m\}.
\]
By definition, $C_m \neq \emptyset$, and it is closed since $\bar \F$ is continuous. We claim that there is $\lambda > 1$ such that $C_m \subseteq \{ \bar p \in D :  \frac{1}{\lambda} \leq \bar p_i-1 \leq \lambda, i=1\dots,n\}$. Indeed, if we consider a sequence $\{ \bar p^k\}_{k \in \N}$ with $ \bar p^k \to  \bar p$ where $\bar p_i = 1$ for at least one $i \in \{ 1,\dots,n\}$, then it is easy to check that $\bar \F( \bar p^k) \to -\infty$. The same happens for a converging sequence $\{\bar  p^k\}_{k \in \N}$ with $\bar p_i^k \to +\infty$ for at least one $i \in \{1,\dots,n\}$.

$C_m$ is then a compact nonempty superlevel set of a continuous function. By Weierstrass theorem, $\F$ attains a maximum $\bar p^*$ in $C_m$, and by strict concavity it is the only global maximum of $\bar \F$, where in particular $\nabla \bar \F (\bar p^*) = 0$.

Finally, define $p^*$ with $p^*_i = ( (\bar p_i^*)^\frac{\alpha}{1-\alpha}-1)^{\frac{1-\alpha}{\alpha}}$. Then $p^*$ is the unique point in $\R^n_{++}$ such that $\nabla \F(p^*) =0$, by the chain rule, and thus the unique equilibrium price vector.
\end{proof}

\subsection{Proof of Theorem~\ref{thm:cs}}

\begin{proof}[Proof of Theorem~\ref{thm:cs}]
Let $L_i$ be the population of the city $s_i \in S$, for a fixed $i =1,\dots,n$. We have that 
\[
\frac{\partial Z_i}{\partial L_i}=c^a(q,p_i,\omega_i), \quad \frac{\partial Z_j}{\partial L_i}=0 \; \text{for } j \neq i. 
\]
Inserting this in formula \eqref{eq:compstat}, instead of $D_a \bar Z$, gives
\[
\frac{\partial p_j}{\partial L_i} = - c^a(q,p_i,\omega_i)(D_p\bar Z(p,L_i))^{-1}_{ji} >0,\quad 1\leq i,j,\leq n,
\]
thanks also to Propositions~\ref{prop:gsz} and \ref{prop:gs1}. Then, by estimate \eqref{est:diag} of Lemma~\ref{lem:cs}, we obtain
\[
\frac{\partial p_i}{\partial L_i} \geq \frac{\partial p_j}{\partial L_i}, \quad j \neq i,
\]
which means that the largest price increase takes place in the city $s_i$, where population grew.
\end{proof}

\subsection{Proof of Theorem~\ref{theo:fact}}
The proof of Theorem~\ref{theo:fact} is a direct consequence of the following two propositions.

\begin{proposition}\label{prop:fact}
Let $0<\alpha<1$, $\delta >0$ and $\bar V >0$ be such that $\bar V > \max_{i=1,\dots,n} y_i^m$.
Moreover, assume that there exists a continuous function $a\colon X \to \R_{++}$ with $0<a_{\min} \leq a(x) \leq a_{\max}$ for all $x \in X$, for some positive constants $a_{\min}, a_{\max}$, and $0<\beta<1$ such that
\begin{equation}
y^a(x) = \frac{a(x)}{(L^A(x))^{1-\beta}}, \quad x \in X.
\end{equation} 
Then, under the assumptions of Theorem~\ref{theorem: existence}, there exists a unique factor-mobility equilibrium.
\end{proposition}

\begin{proposition}\label{prop:fact2}
Let the assumptions of Proposition~\ref{prop:fact} hold, and assume also that the function $a:X \to \R_{++}$ and each distance function $d_s:X \to \R_+$, for $s \in S$, is of class $C^2$. Then the map $\bar V \mapsto \bar L (\bar V)$ is strictly decreasing and thus one-to-one between the sets $(\max_i y_i^m, +\infty)$ and $(0,+\infty)$.

In particular, the following limits hold:
\begin{equation} \label{limitsbarL}
\lim_{\bar V \to +\infty} \bar L = 0, \qquad \lim_{\bar V \to \max_i y_i^m} \bar L = +\infty.
\end{equation}
\end{proposition}

\begin{proof}[Proof of Proposition~\ref{prop:fact}]
Thanks to the homogeneity, we restrict ourselves to the case $q=1$.

The function $p_i \mapsto V(1,p_i,y_i^m) = \left(1+\frac{1}{p_i^{\frac{\alpha}{1-\alpha }}}\right)^{\frac{1-\alpha}{\alpha}}y_i^m$ is strictly decreasing for $p_i >0$, and maps $(0,+\infty)$ to $(y_i^m,+\infty)$ when $0 < \alpha < 1$.
Then there exists a unique $p^*_i$ such that $V(1,p^*_i,y_i^m)= \bar V$, given explicitly by
\begin{equation}\label{eq:pbarV}
p_i = \left( \left( \frac{\bar V}{y_i^m}\right)^{\frac{\alpha}{1-\alpha}}-1\right)^{\frac{\alpha-1}{\alpha}}= \frac{y_i^m}{\left( \bar V^{\frac{\alpha}{1-\alpha}}-(y_i^m)^{\frac{\alpha}{1-\alpha}}\right)^{\frac{1-\alpha}{\alpha}}}.
\end{equation}
The corresponding Voronoi weights satisfy, after some straightforward computations,
\begin{equation}\label{eq:weightmob}
\lambda_i = \frac{1}{\delta}\log(\hat v(1,p_i)) = \frac{1}{\delta}\left(\log(\bar V) -\frac{1-\alpha}{\alpha}\log \left( \bar V^{\frac{\alpha}{1-\alpha}}-(y_i^m)^{\frac{\alpha}{1-\alpha}}\right)\right).
\end{equation}
Note that it it might be that some Voronoi regions are empty. To have only nondegenerate Voronoi regions $\{\Omega_i(\lambda^*)\}_{i=1}^n$, where $\lambda^*$ is the vector of Voronoi weights, the weights would need to satisfy
\[
\lambda_j - \lambda_i < d_j(s_i), \qquad \text{for all } i \neq j.
\]
Since the difference of two weights can be written as
\begin{equation}\notag
\lambda_i- \lambda_j =\frac{1}{\delta} \frac{1-\alpha}{\alpha}\log\left(\frac{\bar V^{\frac{\alpha}{1-\alpha}}-(y_j^M)^{\frac{\alpha}{1-\alpha}}}{\bar V^{\frac{\alpha}{1-\alpha}}-(y_i^m)^{\frac{\alpha}{1-\alpha}}}  \right), \quad i,j=1,\dots,n,
\end{equation}
an additional condition on $V$ and $\delta$ would guarantee that every Voronoi region is nondegenerate.

Once the Voronoi diagram is set, we can determine the rural and the urban population. Thanks to the assumptions in \eqref{eq:condy}, Equation $V(x,p^*) = \bar V$ can be written as
\[
v(1,p^*_i)p^*_i\frac{a(x)}{\Delta(x,s_i)(L^A(x))^{1-\beta}}= \bar V, \quad x \in \Omega_i(\lambda^*)
\]
for $i =1,\dots,n$, such that $\Omega_i(\lambda^*) \neq \emptyset$. This allows us to find the continuous population $L^A$ as 
\begin{align}\label{eq:LA}
L^A(x) &= \left(v(1,p^*_i)p^*_i\frac{a(x)}{\Delta(x,s_i)\bar V}\right)^\frac{1}{1-\beta}\\ \label{eq:LA2}
&= \left(p^*_i\frac{a(x)}{\Delta(x,s_i)y_i^m}\right)^\frac{1}{1-\beta}, \quad \text{for } x \in \Omega_i(\lambda^*),
\end{align}
for $\Omega_i(\lambda^*)\neq \emptyset$ and where in the second equality we used the first identity in \eqref{def:factor}.
Finally, the equilibrium condition $Z_i(p^*) =0$, allows us to obtain the value of $L^M_i$, as follows:
\begin{equation}\notag
L^M_i = \frac{1}{ c^a(1, p^*_i, \omega_i)}\left[\int_{\Omega_i(\lambda^*)}\left(\frac{y^a(x)}{\Delta(x,s_i)}- c^a\left(1,p^*_i,\omega(x,s_i)\right)\right)L^A(x) dx  \right],
\end{equation}
for $i=1,\dots,n$ such that $\Omega_i(\lambda^*)\neq \emptyset$. Using the definitions of $c^a$, $\omega_i$, $\omega(x,s_i)$, this expression can be further simplified to
\begin{align}\notag
L^M_i &= \frac{(p^*_i)^\frac{1}{1-\alpha}}{y_i^m }\int_{\Omega_i(\lambda^*)}\frac{y^a(x)L^A(x)}{\Delta(x,s_i)} dx\\ \label{eq:pop}
 &=  \frac{(p^*_i)^\frac{1}{1-\alpha}}{y_i^m } \int_{\Omega_i(\lambda^*)}\frac{a(x)}{\Delta(x,s_i)}(L^A(x))^\beta dx\\ \label{eq:pop2}
 &= \frac{(p^*_i)^\frac{1-\alpha\beta}{(1-\alpha)(1-\beta)}}{(y_i^m)^{\frac{1}{1-\beta}} }\int_{\Omega_i(\lambda^*)}\left(\frac{a(x)}{\Delta(x,s_i)}\right)^{\frac{1}{1-\beta}} dx, 
\end{align}
where in the last equality, we used identity~\eqref{eq:LA2}. For $i$ such that $\Omega_i(\lambda^*)= \emptyset$, we clearly have $L^M_i = 0$.
\end{proof}

\begin{proof}[Proof of Proposition~\ref{prop:fact2}]
\textbf{i) The limits in \eqref{limitsbarL} hold.}

We begin with the simple case where $y_i^m = \bar y^m >0$ for all $i=1,\dots,n$. We have that all the $p_i$'s are equal, and so the Voronoi tessellation is the standard (unweighted) one for all values of $\bar V$. In that case we have that:
\begin{itemize}
\item if $\bar V \to +\infty$ then $p_i \to 0$. From \eqref{eq:LA2} we find that $L^A(x) \to 0$ for all $x \in X$. From
 \eqref{eq:pop2} we obtain that $L_i^M \to 0$ for all $i=1,\dots,n$, so the population of each city decreases to zero;
\item if $\bar V \to \bar y^m$ then $p_i \to +\infty$ and so the rural population $L^A(x) \to +\infty$ for all $x \in X$. As before, from
 \eqref{eq:pop2} we obtain that $L_i^M \to +\infty$ for all $i=1,\dots,n$.\end{itemize}
In particular, the total population $\bar L$ varies continuously from $0$ to $+\infty$, depending on the values of $\bar V$.

Now let us consider the case where $y_i^m$ do not coincide. First assume that there exists a city $s_{\bar k}$, with $1\leq \bar k\leq n$, with the highest manufacturing output, i.e., $\max_i y_i^m = y_{\bar k}^M$  and $y_i^m < y_{\bar k}^M$ for all $i \neq k$.
In this case, the Voronoi tessellation depends on $\bar V$. 

When $\bar V \to +\infty$, thanks to \eqref{eq:pi_V}, we have that $p_i \to 0$, for all $i=1,\dots,n$ and, from \eqref{eq:LA2} we find that $L^A(x) \to 0$ for all $x \in X$.

Then, as in the case where all $y_i^m$ coincide, we find that, for all $i=1,\dots,n$, $L_i^M \to 0$, for $i$ such that $\Omega_i(\lambda^*) \neq \emptyset$, and so $\bar L \to 0$. 

If we let $\bar V \to y_{\bar k}^M$, then something different happens. In this case, only $p_{\bar k} \to +\infty$ while $\lim_{\bar V \to y_{\bar k}^M} p_i < +\infty$ for $i \neq \bar k$. Moreover $\lim_{\bar V \to y_{\bar k}^M} \lambda_{\bar k} = +\infty$ while $\lim_{\bar V \to y_{\bar k}^M} |\lambda_i| < +\infty$ for $i \neq \bar k$. This means that, as expected, the region $\Omega_{\bar k}$ grows and eventually becomes the whole domain $X$. Then, for $|\bar V - y_{\bar k}^M|$ sufficiently small, $\Omega_{\bar k} = X$, or equivalently $\Omega_i = \emptyset$ for $i \neq \bar k$. Using \eqref{eq:LA2}, we find that the rural population grows only in the $\bar k$ region, which eventually becomes the whole domain, i.e., $L^A(x) \to +\infty$ for $x \in \Omega_{\bar k}$. In this regime, the total city population is concentrated only in the city $s_{\bar k}$. Eventually, as $\bar V \to y_{\bar k}^M$ we find, similarly to the previous case,
that $L_{\bar k}^M \to +\infty$ and so $\bar L \to +\infty$. 

The argument extends analogously in the case where two or more cities have the same maximum $y_i^m$.

In all cases, we found that the total population $\bar L$ varies continuously from $0$ to $+\infty$, depending on the values of $\bar V$.

\textbf{ii) The function $\bar V \mapsto \bar L (\bar V)$ is strictly decreasing.}

The total population $\bar L$ can be written explicitly as
\begin{equation}\label{eq:LV}
\bar L(\bar V) = \sum_{i=1}^n\frac{1}{\bar V^{\frac{1}{1-\beta}}}\left( \frac{\bar V^{\frac{\alpha}{1-\alpha}}}{\bar V^{\frac{\alpha}{1-\alpha}}-(y_i^m)^{\frac{\alpha}{1-\alpha}}}\right)^{\frac{1-\alpha\beta}{(1-\beta)\alpha}}\int_{\Omega_i(\lambda)}\left(\frac{a(x)}{\Delta(x,s_i)} \right)^{\frac{1}{1-\beta}}dx.
\end{equation}
The above formula is obtained from the definition \eqref{eq:totpop} of $\bar L$, together with \eqref{eq:LA2}, \eqref{eq:pop2}, and the formula for the welfare-equalizing price \eqref{eq:pbarV}. We want to show that $\bar L '(\bar V) = \frac{d \bar L}{d \bar V} (\bar V) < 0$ for every $\bar V > \max_i y_i^m$. 

The proof proceeds in three main steps. First, we rewrite \eqref{eq:LV} to ensure that the function inside the integral is continuous across the borders. Intuitively, this is possible because the farmers' optimal trading choices smoothen the indirect utility function across the borders (whereas the gradient of $a(x) / \Delta(x, s_i)$ obviously changes). Second, we introduce a lemma that allows us to work out an expression for the derivative of the integral term with respect to a parameter that may affect both the integrand and the domain of integration. Third, we use this result to show that the sign of the derivative of $\bar{L}$ with respect to $\bar{V}$ can be unambiguously determined.

\textit{Step 1.} Formula~\eqref{eq:LV} although explicit in $\bar V$, is not in a convenient form to be differentiated, since the function under the integral sign is not continuous across $\partial \Omega_i$. To make it continuous, we make a multiple of $V(x,p)$ appear, as follows.

Let $U(x,\bar V) = V(x,p(\bar V))$. From the definition of $V(x,p)$ and identity \eqref{eq:pbarV}, we obtain
\begin{equation}
U(x,\bar V) = \max_i \frac{\bar V y^a(x)}{\Delta(x,s_i)\left(\bar V^{\frac{\alpha}{1-\alpha}}-(y_i^m)^{\frac{\alpha}{1-\alpha}}\right)^{\frac{1-\alpha}{\alpha}}}.
\end{equation}

With this in mind, we want to make the quantity $\left(  \Delta(x,s_i)\left(\bar V^{\frac{\alpha}{1-\alpha}}-(y_i^m)^{\frac{\alpha}{1-\alpha}}\right)^{\frac{1-\alpha}{\alpha}} \right)^{-1}$ appear in the integral in \eqref{eq:LV}, because of its continuity properties across the bisectors.

We have
\begin{align*}
&\int_{\Omega_i(\lambda)}\left(\frac{a(x)}{\Delta(x,s_i)} \right)^{\frac{1}{1-\beta}}dx\\
&\quad = \left(\bar V^{\frac{\alpha}{1-\alpha}}-(y_i^m)^{\frac{\alpha}{1-\alpha}}\right)^{\frac{1-\alpha}{\alpha(1-\beta)}} \int_{\Omega_i(\lambda)}\left(\frac{a(x)}{\Delta(x,s_i)\left(\bar V^{\frac{\alpha}{1-\alpha}}-(y_i^m)^{\frac{\alpha}{1-\alpha}}\right)^{\frac{1-\alpha}{\alpha}}} \right)^{\frac{1}{1-\beta}} \! \! \! \! \! \! \! \! dx.
\end{align*}

Replacing this into \eqref{eq:LV}, we obtain, after several simplifications
\begin{equation}
\bar L (\bar V) = \sum_{i =1}^n f_i(\bar V) \int_{\Omega_i(\lambda)} g_i(x,\bar V) dx,
\end{equation}
where
\begin{equation}\label{def:fg}
f_i(\bar V) = \frac{\bar V^{\frac{\alpha}{1-\alpha}}}{\bar V^{\frac{\alpha}{1-\alpha}}-(y_i^m)^{\frac{\alpha}{1-\alpha}}}, \quad g_i(x,\bar V) = \left(\frac{a(x)}{\Delta(x,s_i)\left(\bar V^{\frac{\alpha}{1-\alpha}}-(y_i^m)^{\frac{\alpha}{1-\alpha}}\right)^{\frac{1-\alpha}{\alpha}}} \right)^{\frac{1}{1-\beta}}.
\end{equation}

\textit{Step 2.} We now compute the derivative. We have
\begin{equation}\label{eq:derL}
\bar L'(\bar V) = \sum_{i=1}^n \left[f_i'(\bar V) \int_{\Omega_i(\lambda)} g_i(x,\bar V) dx + f_i(\bar V) \frac{\partial}{\partial \bar V}\left( \int_{\Omega_i(\lambda)} g_i(x,\bar V) dx \right)  \right].
\end{equation}
An easy computation shows that $f_i'(\bar V) < 0$ for all $\bar V > \max_i y_i^m$. Since at least one $\Omega_i$ is nonempty, and $g_i (x,\bar V) > 0$ for all $x \in X$ and $\bar V > \max_i y_i^m$, we find that 
\begin{equation}\notag
\sum_{i=1}^n f_i'(\bar V) \int_{\Omega_i(\lambda)} g_i(x,\bar V) dx < 0,
\end{equation}
for $\bar V > \max_i y_i^m$. We now want to show that the second term in \eqref{eq:derL} is not positive. To do that, we use the following lemma, giving us a formula for the derivative of an integral on a variable domain.

\begin{lemma}\label{lem:shapeder} 
Let $a \mapsto f(\cdot, a) \in L^1(\R^2)$ be differentiable for every $a > 0$ and $f(\cdot, a) \in W^{1,1}(\R^2)$. Let $X \subset \R^2$ be a bounded Lipschitz domain partitioned in Voronoi regions $\{\Omega_i\}_{i=1}^n$ as in Section~\ref{sub:voronoi}, depending on a weight vector $\lambda = (\lambda_1,\dots,\lambda_n)$ and with a set of admissible distance function $\{d_s(\cdot)\}_{s \in S}$ of class $C^2$. Let $i = 1,\dots,n$ be such that $|\Omega_i (\lambda)| > 0$. Then,  the map $a \mapsto I(a) = \int_{\Omega_i(\lambda)}f(x,a)dx$ is differentiable for every $a > 0$ and we have 
\begin{equation}
I'(a) = \int_{\Omega_i(\lambda)} \frac{\partial f}{\partial a}(x,a) dx + \sum_{k \in \NN_i} \frac{\partial (\lambda_i-\lambda_k)}{\partial a}  \int_{\Gamma_{ik}} f(x,a)\nu_i(x)\cdot \frac{\partial \omega_{ki}}{\partial \lambda_i}(x, \lambda)d \sigma(x),
\end{equation}
where
\begin{itemize}
\item  $\NN_i = \{k = 1,\dots,n, k \neq i : \partial \Omega_k \cap \partial \Omega_i \neq \emptyset\}$,
\item $\nu_i(x)$ is the unit normal vector pointing outside of $\Omega_i$ at $x \in \partial \Omega_i$,
\item $\Gamma_{ik} \subset \partial \Omega_i$ with $\cup_{k \in \NN_i}\Gamma_{ik} = \partial \Omega_i$ and $\partial \Omega_k \cap \partial \Omega_i = \Gamma_{ik}$,
\item the map $(x,\lambda) \mapsto \omega_{ki}(x,\lambda)$ is a local parametrization of the curve $\Gamma_{ik}$ at a given set of weights $\lambda$,
\item $d\sigma(x)$ is the surface measure on $\partial \Omega_i$.
\end{itemize}
\end{lemma}

The proof of the lemma appears below.

\textit{Step 3.} We apply Lemma~\ref{lem:shapeder} with $g_i$ instead of $f$, $\bar{V}$ instead of $a$, and $\bar V > \max_i y_i^m$. We find
\begin{align*}
& \sum_{i=1}^n f_i(\bar V) \frac{\partial}{\partial \bar V}\left( \int_{\Omega_i(\lambda)} g_i(x,\bar V) dx \right)\\
 &\qquad =\sum_{i=1}^n f_i(\bar V) \left( \int_{\Omega_i(\lambda)} \frac{\partial g_i}{\partial \bar V}(x,\bar V) dx + \sum_{k \in \NN_i} \frac{\partial (\lambda_i-\lambda_k)}{\partial \bar V}  \int_{\Gamma_{ik}} g_i(x,\bar V)\nu_i(x)\cdot \frac{\partial \omega_{ki}}{\partial \lambda_i}(x, \lambda)d \sigma(x) \right).
\end{align*}
It is easy to show that $\frac{\partial g_i}{\partial \bar V}(x,\bar V) < 0$ for all $\bar V > \max_i y_i^m$ and $x \in \Omega_i(\lambda)$. Then the first term of the above expression is strictly negative, that is
\[
\sum_{i=1}^n f_i(\bar V)  \int_{\Omega_i(\lambda)} \frac{\partial g_i}{\partial \bar V}(x,\bar V) dx < 0, \quad \text{for } \bar V > \max_i y_i^m,
\]
and we focus on the second term. By construction, for each $i,k = 1,\dots,n$ such that $\partial \Omega_i \cap \partial \Omega_k = \Gamma_{ik} \neq \emptyset$, we have that $g_i(x,\bar V) = g_k(x,\bar V)$, for $x \in \Gamma_{ik}$. Moreover, we have that $\nu_i(x) = - \nu_k(x)$ and $\frac{\partial \omega_{ki}}{\partial \lambda_i}(x, \lambda) = - \frac{\partial \omega_{ki}}{\partial \lambda_k}(x, \lambda)$ for $x \in \Gamma_{ik}$. This gives the identity
\[
\int_{\Gamma_{ik}} g_i(x,\bar V)\nu_i(x)\cdot \frac{\partial \omega_{ki}}{\partial \lambda_i}(x, \lambda)d \sigma(x) = \int_{\Gamma_{ik}} g_k(x,\bar V)\nu_k(x)\cdot \frac{\partial \omega_{ki}}{\partial \lambda_k}(x, \lambda)d \sigma(x),
\]
which we use to rewrite the last term in the derivative of $\bar L$ as follows:
\begin{align*}
&\sum_{i=1}^n f_i(\bar V)\left[\sum_{k \in \NN_i} \frac{\partial (\lambda_i-\lambda_k)}{\partial \bar V}  \int_{\Gamma_{ik}} g_i(x,\bar V)\nu_i(x)\cdot \frac{\partial \omega_{ki}}{\partial \lambda_i}(x, \lambda)d \sigma(x)\right] \\
&\qquad =\sum_{\mathclap{\substack{i,k=1 \\ \partial \Omega_i \cap \partial \Omega_k \neq \emptyset}}}^n (f_i(\bar V) - f_k(\bar V))\frac{\partial (\lambda_i-\lambda_k)}{\partial \bar V}  \int_{\Gamma_{ik}} g_i(x,\bar V)\nu_i(x)\cdot \frac{\partial \omega_{ki}}{\partial \lambda_i}(x, \lambda)d \sigma(x).
\end{align*}
We now want to show that also this last quantity is not positive. We first show that $(f_i(\bar V) - f_k(\bar V))\frac{\partial (\lambda_i-\lambda_k)}{\partial \bar V} \leq 0$ for all $i,k=1,\dots,n$, $i \neq k$. Indeed, from \eqref{def:fg} and \eqref{eq:diflambda}, after a series of computations, we obtain that
\[
(f_i(\bar V) - f_k(\bar V))\frac{\partial (\lambda_i-\lambda_k)}{\partial \bar V}  = - \frac{1}{\delta}\frac{\bar V^{\frac{3\alpha-1}{1-\alpha}} ((y_k^M)^{\frac{\alpha}{1-\alpha}}- (y_i^m)^{\frac{\alpha}{1-\alpha}})^2}{(\bar V^{\frac{\alpha}{1-\alpha}}-(y_k^M)^{\frac{\alpha}{1-\alpha}})^2 (\bar V^{\frac{\alpha}{1-\alpha}}-(y_i^m)^{\frac{\alpha}{1-\alpha}})^2} \leq 0.
\]
Finally $\nu_i(x)\cdot \frac{\partial \omega_{ki}}{\partial \lambda_i}(x, \lambda) \geq 0$ for all $x \in \Gamma_{ik}$. Indeed, by definition of the Voronoi diagram, the region $\Omega_i(\lambda)$ grows uniformly as $\lambda_i$ grows, which means that the vector field $\frac{\partial \omega_{ki}}{\partial \lambda_i}(x, \lambda)$ on $\Gamma_{ik}$ points outside of $\Omega_i$ at each $x \in \Gamma_{ik}$, which is equivalent to say that its scalar product with the outward unit normal vector $\nu_i(x)$ is not negative. This ends the proof of the strict monotonicity of $\bar V \mapsto \bar L(\bar V)$.

\textbf{iii) End of the proof.}

Since the function $\bar L$ is strictly decreasing and continuous, and since the two limits \eqref{limitsbarL} hold, we have that $\bar L$ defines a one-to-one correspondence between $(\max_i y_i^m, +\infty)$ and $(0,+\infty)$.
\end{proof}

\begin{proof}[Proof of Lemma~\ref{lem:shapeder}]
This is essentially a direct application of a classical result in shape optimization \cite[Theorem 5.2.2]{henrot2006}. Let $I$ be the function defined as $a \mapsto I(a) = \int_{\Omega_i(\lambda)}f(x,a)dx$ that we want to differentiate. Since the movements of the domain $\Omega_i$ are parametrized by the Voronoi weights $\lambda$, we first use the chain rule to get
\[
I'(a) = \sum_{k=1}^n \frac{\partial \lambda_k}{\partial a}\frac{\partial}{\partial \lambda_k}\left(\int_{\Omega_i(\lambda)}f(x,a)dx \right)=\sum_{\mathclap{\substack{k=1\\ \partial \Omega_k \cap \partial \Omega_i \neq \emptyset}}}^n \frac{\partial \lambda_k}{\partial a}\frac{\partial}{\partial \lambda_k}\left(\int_{\Omega_i(\lambda)}f(x,a)dx \right),
\]
where in the second equality, we discarded all indices $k$ such that $\Omega_k$ does not share the border with $\Omega_i$, and so $\Omega_i$ is insensitive to the variations of these $\lambda_k$.

Now, let $i = 1,\dots,n$ be fixed. For for $k=1,\dots,n$, $k \neq i$ (the case $k=i$ will be considered later), and a fixed weight vector $\lambda = (\lambda_1,\dots, \lambda_n)$, we consider $B = B_{\lambda_i-\lambda_k}(s_i,s_k)$, the bisector containing $\Gamma_{ik}$. For $t \geq 0$, we let $\lambda_t = (\lambda_1,\dots,\lambda_k+t,\dots,\lambda_n)$ and $B(t)$ the bisector between $s_i$ and $s_k$ obtained with $\lambda_t$ instead of $\lambda$. Let $\omega_{ik}: \R \times [0,T) \to \R^2$ be a parametrization of $B(t)$.

Now, thanks to the regularity of the distance functions, the curve $B$ is of class $C^2$, and thus there exists a tubular neighborhood $U \supset B$ where, for every $y \in U \cap X$ there is a unique $x \in B$, $r \in \R$ such that $y = x + r \nu_i(x)$. This defines a projection $P_k: U \to B$ as $P_k(y)= x$. Here $\nu_i(x)$ is the unit normal vector field on $B$ that coincides with the outward unit normal to $\partial \Omega_i$ on $\Gamma_{ik}$. Then there exists $T>0$ such that $B(t) \cap X \subset U$ for all $t \in [0,T)$.

We now construct a diffeomorphism $\Phi_k : [0,T) \times \R^2 \to \R^2$ with the property that $\Phi_k(t,\Omega_i(\lambda)) = \Omega_i(\lambda_t)$. In this way, taking the derivative with respect to $\lambda_k$ in the last integral term is equivalent to taking the derivative with respect to $t$ at $t=0$. Let $\Phi_k(t,x) = x +r_k(x,t) \nu_i(x)$, where, for $x \in B$, we define $r_k(x,t) = (\omega_{ik}(s,t)-x)\cdot \nu_i(x)$, where $s = s(x) \in \R$ is the unique value such that $P_k(\omega_{ik}(s,t)) = x$. Note that for $t = 0$, $\omega_{ik}(s,0) = x$, so $\Phi_k(x,0) = x$, for all $x \in B$. For $x \in \R^2 \setminus U$ we let $r_k(x,t) = 0$, for all $t \in [0,T)$. Finally, for $x \in U \setminus B$, we extend $r_k$ to be $C^2$ and with $r_k(x,0)= 0$. This is possible because $\omega_{ik} \in C^2(\R \times [0,T))$. By construction, we have that $\Phi_k(\Omega_i(\lambda)) = \Omega_i(\lambda_t)$ and $\Phi(t,\cdot)$ belongs to the class $C^1([0,T); C^1(\R^2,\R^2))$.

We then apply \cite[Theorem 5.2.2]{henrot2006} (see also \cite[Corollary 5.2.8 and Remark 5.2.9]{henrot2006})  by using $\Phi_k$ to model the movement of $\Omega_i(\lambda)$ with respect to $\lambda_k$. Using the fact that 
\[\frac{\partial \Phi_k}{\partial t}(t,x)|_{t=0}= \left(\frac{\partial \omega_{ki}}{\partial t}(s,0)\cdot \nu_i(x)\right) \nu_i(x),
\] 
we find, by rewriting $\frac{\partial \omega_{ik}}{\partial t}(s,0)$ as $\frac{\partial \omega_{ik}}{\partial \lambda_k}(x,\lambda)$ with a slight abuse of notation:
\[
\frac{\partial}{\partial \lambda_k}\left(\int_{\Omega_i(\lambda)}f(x,a)dx \right) = \int_{\Omega_i(\lambda)}\frac{\partial f}{\partial \lambda_k}(x,a)dx + \int_{\Gamma_{ik}}f(x,a) \nu_i(x)\cdot \frac{\partial \omega_{ik}}{\partial \lambda_k}(x,\lambda)d\sigma(x), \quad \text{for } k\neq i.
\]
For $k = i$, we define $\Phi_i(t,x) = x+\sum_{k \in \NN_i} r_k(x,t) \nu_i(x)$. Then we find
\[
\frac{\partial}{\partial \lambda_i}\left(\int_{\Omega_i(\lambda)}f(x,a)dx \right) = \int_{\Omega_i(\lambda)}\frac{\partial f}{\partial \lambda_i}(x,a)dx + \sum_{k \in \NN_i}\int_{\Gamma_{ik}}f(x,a) \nu_i(x)\cdot \frac{\partial \omega_{ik}}{\partial \lambda_i}(x,\lambda)d\sigma(x),
\]
where we used the fact that $\cup_{k \in \NN_i}\Gamma_{ik} = \partial \Omega_i$.

Now, since $\frac{\partial \omega_{ik}}{\partial \lambda_k}(x,\lambda) = -\frac{\partial \omega_{ik}}{\partial \lambda_i}(x,\lambda)$ for $x \in \Gamma_{ik}$, we obtain
\begin{align*}
I'(a) &= \sum_{\mathclap{\substack{k=1\\ \partial \Omega_k \cap \partial \Omega_i \neq \emptyset}}}^n \frac{\partial \lambda_k}{\partial a} \int_{\Omega_i(\lambda)}\frac{\partial f}{\partial \lambda_k}(x,a)dx + \sum_{k \in \NN_i} \frac{\partial (\lambda_i-\lambda_k)}{\partial a}  \int_{\Gamma_{ik}} f(x,a)\nu_i(x)\cdot \frac{\partial \omega_{ik}}{\partial \lambda_i}(x, \lambda)d \sigma(x)\\
& = \int_{\Omega_i(\lambda)} \frac{\partial f}{\partial a}(x,a) dx + \sum_{k \in \NN_i} \frac{\partial (\lambda_i-\lambda_k)}{\partial a}  \int_{\Gamma_{ik}} f(x,a)\nu_i(x)\cdot \frac{\partial \omega_{ik}}{\partial \lambda_i}(x, \lambda)d \sigma(x),
\end{align*}
where we again used the chain rule to come back to the derivative with respect to $a$ in the first term. The proof follows.
\end{proof}

\section{Additional Results}

\subsection{Comparative Statics With Labor Mobility in an Open Economy} \label{app:addrescs}

This section continues the comparative statics analysis of Section \ref{sec:compstatmob}. Again we restrict our attention to an open economy scenario where the welfare scalar is an exogenous parameter and total population is free to vary. Lemma \ref{lemma:csmobprod} studies the effect of a shock to urban productivity $y_i^m$, $i=1\dots n$, in one city.
\begin{lemma}\label{lemma:csmobprod} Let the Assumptions of Theorem \ref{theo:fact} hold. Moreover, suppose that $0 < \alpha < 1$.  Then for all cities $s_i \in S$, $i = 1 \dots n$,
\begin{enumerate}[$i.$]
\item
\begin{equation*}
\frac{\partial \lambda_i}{\partial y_i^m} > 0, \quad \text{ and } \quad \frac{\partial \lambda_j}{\partial y_i^m} = 0 \quad \text{ for } s_j \neq s_i 
\end{equation*}
\item
\begin{equation*}
\frac{\partial L^M_i}{\partial y_i^m} > 0, \text{ and } \quad \frac{\partial L^M_j}{\partial y_i^m} \leq 0 \quad \text{ for }   s_j \neq s_i ,
\end{equation*}
with strict inequality if $\partial \Omega_i \bigcap \partial \Omega_j \neq \emptyset$.
\end{enumerate}
\end{lemma}
\begin{proof}
With $\bar{V}$ fixed, part \textit{i} is evident upon inspection of \eqref{eq:weightmob}. Concerning the second part of the Lemma, rewrite \eqref{eq:pop2} as: 
\begin{equation}\notag
L^M_i = \frac{(y_i^m)^\frac{\alpha}{1-\alpha}}{ \left[ \bar{V}^{\frac{\alpha}{1-\alpha}} -  (y_i^m)^{\frac{\alpha}{1-\alpha}} \right]^{\frac{1-\alpha \beta}{\alpha(1-\beta)}} } \int_{\Omega_i(\lambda^*)}\left(\frac{a(x)}{\Delta(x,s_i)}\right)^{\frac{1}{1-\beta}} dx, \quad i = 1\dots n,
\end{equation}
where we have used equation $\eqref{eq:pi_V}$. For city $s_i \in S$, the first term of the product clearly increases with $y_i^m$, because $1 - \alpha \beta > 0$. The second term is increasing by the result in part \textit{i}. For a city $s_j \in S$, $j \neq i$, the first term of the product does not vary with $y_i^m$, whereas the second term is weakly decreasing with $y_i^m$, with strict inequality if $\partial \Omega_i \bigcap \partial \Omega_j \neq \emptyset$, i.e. if $s_i$ and $s_j$ share a border. 
\end{proof} 
The effect of urban productivity $y^m_i$ on urban population in $s_i$ runs through three different channels. First, there is an increase in nominal income in the city. Second, for $0 < \alpha < 1$, the labor force shifts from the rural to the urban sector. Third, and finally, the city's market area expands, which helps to mitigate the pressure of increased agricultural prices.

\subsection{Comparative Statics With Labor Mobility in a Closed Economy} \label{app:addrescsL}

In a closed economy, the change in the value of a parameter will induce an endogenous response in the level of welfare $\bar{V}$. Differently than total population $\bar{L}$, however, the welfare scalar enters directly into the expression for the Voronoi weights (see Equation $\eqref{eq:diflambda}$). 

Consider the total effect of a change in $\delta$ on the difference between the Voronoi weights of two cities $s_i, s_j \in S$. Using the chain rule, we find: 
\begin{equation}\notag
\frac{\mathrm{d} (\lambda_i - \lambda_j)}{\mathrm{d} \delta}
= \frac{\partial (\lambda_i  - \lambda_j)}{\partial \delta} + \frac{\partial (\lambda_i-\lambda_j)}{\partial \bar{V}} \frac{\partial \bar{V}}{\partial \delta}
\end{equation}
The partial derivatives with respect to $\delta$ and $\bar{V}$ are characterized in Lemmas \ref{lemma:csmob} and \ref{lemma:csmobV}, respectively. The only missing piece is thus the term $\partial \bar{V} / \partial \delta$. The proof of the next lemma leverages again the tools developed in Lemma \ref{lem:shapeder}. 
\begin{lemma}\label{lemma:Vdelta}
Under the assumptions of Theorem \ref{theo:fact}, we have that $\partial \bar{V} / \partial \delta < 0$.
\end{lemma}
\begin{proof}
Apply the implicit function theorem to the expression for $\bar{L}$ given in \eqref{eq:LV}, now viewed as a function of $\delta$ and $\bar{V}$ jointly. Since $\bar L$ is constant, we get
\begin{equation}\notag
\frac{\partial \bar{V}}{\partial \delta} = - \frac{\partial \bar{L} / \partial \delta }{\partial \bar{L} / \partial \bar{V}}
\end{equation}
That $\partial \bar{L} / \partial \bar{V}$ is strictly negative has been proved in Theorem \ref{theo:fact}, part \textit{ii}. The proof that $\partial \bar{L} / \partial \delta < 0$ is only sketched here, as it repeats the same steps. The conclusion then follows. Write
\begin{equation}\notag
\bar{L}(\delta) = \sum_{i = 1}^n f_i \int_{\Omega_i} g_i(x, \delta) \mathrm{d} x 
\end{equation}
where $f_i$ and $g_i$ are defined as in \eqref{def:fg}. Applying Lemma \ref{lem:shapeder}, we obtain: 
\begin{equation}\notag
\frac{\partial \bar{L}}{\partial \delta} = \sum_{i=1}^n f_i \int_{\Omega_i(\lambda)} \frac{\partial g_i}{\partial \delta}(x,\delta) dx + \sum_{\mathclap{\substack{i,k=1 \\ \partial \Omega_i \cap \partial \Omega_k \neq \emptyset}}}^n (f_i - f_k)\frac{\partial (\lambda_i-\lambda_k)}{\partial \delta}  \int_{\Gamma_{ik}} g_i(x,\delta)\nu_i(x)\cdot \frac{\partial \omega_{ki}}{\partial \lambda_i}(x, \lambda)d \sigma(x).
\end{equation}
It is easy to show that $\frac{\partial g_i}{\partial \delta}(x,\delta) < 0$ for all $\bar V > \max_i y_i^m$ and $x \in \Omega_i(\lambda)$. Then the first term of the above expression is strictly negative, and we focus on the second term. Although an analytic expression for the product $(f_i - f_k) \times  \partial(\lambda_i - \lambda_k)/\partial \delta $ cannot be derived, it is easy to show that whenever $y_i^m > y^m_k$, then $f_i -  f_k > 0$ and $ \partial(\lambda_i - \lambda_k)/\partial \delta <0$ (the latter being the conclusion of Lemma \ref{lemma:csmob}), and vice versa. Therefore, all the addends in the second term are negative and $\partial \bar{L} / \partial \delta < 0$.
\end{proof}
The next lemma characterizes the comparative statics of $\delta$ in a closed-economy scenario with a fixed population. 
\begin{lemma}[Effect of $\delta$ with $\bar{L}$ fixed.]\label{lemma:csmobL}
Let the assumptions of Theorem~\ref{theo:fact} hold, and suppose that $\bar{L}$ is fixed; moreover, suppose that $\alpha < 0$. 

\begin{enumerate}[$i.$]
\item Take two cities, $s_i, s_j \in S$, with $y_i^m > y^m_j$. Then, if $\alpha < 0$,
\begin{equation}\notag
\frac{\mathrm{d} (\lambda_i - \lambda_j)}{\mathrm{d} \delta}  
< 0 .
\end{equation}
\item Take a city $s_i \in S$, and let $\mathcal{N}_i = \{s_j \in S, s_j \neq s_i: \partial \Omega_i \bigcap \partial \Omega_j \neq \emptyset\} $ denote the set of its neighboring cities. If $y_i^m > \max_{s_j \in \mathcal{N}_i} y^m_j$, then: 
\begin{equation}\notag
\frac{\mathrm{d} \vert \Omega_i(\lambda)\vert}{\mathrm{d} \delta } < 0.
\end{equation}
The opposite inequality holds if $y_i^m < \min_{s_j \in \mathcal{N}_i} y^m_j$.
\end{enumerate}
By contrast, when $0 < \alpha < 1$, the effect of $\delta$ is ambiguous.
\end{lemma}
These results follow from combining Lemmas \ref{lemma:csmob}, \ref{lemma:csmobV}, and \ref{lemma:Vdelta}. First, Lemma \ref{lemma:csmob} says that the direct effect of an increase in $\delta$ favors less productive cities. Second, Lemma \ref{lemma:Vdelta} says that an increase in $\delta$ also induces a drop in the level of welfare. Finally, Lemma \ref{lemma:csmobV} says that a drop in the level of welfare goes in favor of less productive cities if and only if $\alpha < 0$. By contrast, when $0 <\alpha < 1$, a drop in the level of welfare favors more productive cities and thus the sign of the overall effect becomes ambiguous. 

\section{Other extensions and formulations}\label{sec:altfor}

\subsection{Manufacturing trade costs}\label{sec:manuftrade}

In this section, we extend the analysis presented in Section \ref{sec: results} by including trade frictions for the manufacturing good. In particular, suppose that the manufacturing good is subject to iceberg trade costs: for all $s_i, s_j \in S$, $T_{ij} > 1$ units of the good have to be produced in $s_i$ in order to deliver one unit in $s_j$. The characterization of the equilibrium combines our Proposition \ref{prop: equiv} with the universal gravity approach of \cite{allen2020}.

We focus on a simpler case with a Cobb-Douglas utility function such that $0 < \alpha < 1$ is the budget share of the manufacturing good. To introduce trade incentives in this sector, consider the Armington formulation: each city in $S$ produces a distinct variety of the manufacturing good, and consumers combine these varieties according to a CES aggregate nested into their utility function. 
That is, 
\begin{align*}\notag
u(C^m, c^a) = (C^m)^{\alpha}(c^a)^{1-\alpha}, \\
\text{ with } C^m = \left( \sum_{j = 1}^n ( c^m_j ) ^{\frac{\sigma -1}{\sigma}} \right)^{\frac{\sigma}{\sigma - 1}},
\end{align*}
and where $\sigma > 1$ is the elasticity of substitution between urban varieties.

The expression for the Voronoi weights is
\begin{equation}\notag
\lambda_i = \frac{\alpha}{\delta} \log \left( \frac{p_i}{Q_i} \right), \quad i = 1,\dots, n,
\end{equation}
where $Q_i$ is the CES price index for the composite urban good in city $s_i$: 
\begin{equation}\notag
Q_i^{1-\sigma} = \sum_{j= 1}^n T_{ji}^{1-\sigma} q_j^{1-\sigma}, \quad i = 1,\dots, n. 
\end{equation}

The excess demand for agricultural goods is: 
\begin{equation}\notag
Z^a_i(p,q) = (1-\alpha)\frac{q_i y^m_i L^M_i}{p_i} - \alpha \int_{\Omega_i(\lambda(p,q))} \frac{y^a(x)L^A(x)}{\Delta(x, s_i)} \mathrm{d} x, \quad i = 1,\dots, n.
\end{equation}
where $p$ and $q$ denote, respectively, the price vectors $(p_1, \dots, p_n)$ and $(q_1, \dots, q_n)$. The condition is similar to Equation \eqref{eq: edA} in the paper except that the manufacturing market is segmented at the urban level, and there is a distinct manufacturing price in each city. 

The excess demand for manufacturing goods in city $s_i$ is: 
\begin{equation}\notag
Z^m_i(p,q) = \alpha \sum_{j = 1}^n T^{1-\sigma}_{ij} \left[ \frac{Q_j^{\sigma - 1} }{q_i^\sigma} \left( q_j y^m_j L^M_j +  p_i \int_{\Omega_i(\lambda(p,q))} \frac{y^a(x)L^A(x)}{\Delta(x, s_i)} \mathrm{d} x \right)  \right ] - y^m_i L^M_i,
\end{equation}
for $i = 1,\dots, n$, where we have directly plugged in the well-known formula for a CES demand function. 

An equilibrium is a pair of price vectors $p$ and $q$ such that $Z^a_i(p,q)= 0$  and $Z^m_i(p,q)= 0$ for all $s_i \in S$. 
\begin{lemma} There is a unique equilibrium pair of price vectors, i.e. a pair of vectors $p^* \in \R^n_{++}$ and $q^*\in \R^n_{++}$ such that $Z^a(p,q) = 0$ and $Z^m(p, q) = 0$.  
\end{lemma}
\begin{proof}
With Cobb-Douglas preferences,
\begin{equation*} Z^a_i(p,q)= 0 \implies q_j y^m_j L^M_j  = \frac{\alpha}{1-\alpha} p_i \int_{\Omega_i(\lambda(p,q))} \frac{y^a(x)L^A(x)}{\Delta(x, s_i)} \mathrm{d} x,
\end{equation*} so that the market clearing conditions for the manufacturing sector can be rewritten as: 
\begin{equation}\notag
q_i^{\sigma} y^m_i L^M_i = \sum_{j = 1}^n T_{ij}^{1-\sigma} Q_j^{\sigma -1} q_j  y^m_j L^M_j, \quad i = 1,\dots, n.
\end{equation} 
This equation and the expression for the CES price index define a system of $2 \times n$  equations in terms of  $2 \times n$  unknowns, the vectors $q$ and $Q$. This system is an instance of the universal gravity framework of \cite{allen2020}. By their Theorem 1, it has a unique solution. Once $q$ and $Q$ are solved for, we can keep them fixed and use Proposition A.1 and Theorem 1 in our paper to find the vector $p$ that solves $Z^a(p,q) = 0$. 
\end{proof}

\subsection{Urban spillovers}\label{sec:spillover}

In this subsection, we extend the analysis presented in Section \ref{sec:factor} by including positive urban spillovers in the model with immobile labor. The characterization of the equilibrium combines our Lemma \ref{lem:shapeder} on the shape derivative with the techniques developed in \cite{allen2022b}.

We focus on a simpler case with Cobb-Douglas preferences such that $0 < \alpha < 1$ is the budget share of urban goods. In this case the welfare-equalizing price is
\begin{equation}\notag
p^{*}_i = \left( \frac{y^m_i}{\bar{V}}\right)^{\frac{1}{1-\alpha}}, \quad i = 1,\dots,n,
\end{equation}
and the Voronoi weights can be written as: 
\begin{equation}\notag
\lambda_i = \frac{1}{\delta} \frac{\alpha}{1-\alpha} \log y^m_i, \quad i = 1,\dots,n.
\end{equation}
Let $h: \R_{+} \to \R_{++}$ be a positive, increasing, and continuously differentiable  function and suppose that urban productivity in city $s_i \in S$ depends on $L^M_i$, the number of workers living in that city, as follows:
\begin{equation}\notag
y^m_i = m_i h(L^M_i), \text{ with } m_i > 0 , \quad i = 1, \dots, n,
\end{equation}
where $m_i$ is the exogenous component of city $s_i$'s productivity with bounds $0 < \ubar{m} < m_i < \bar{m} < \infty$ for all $s_i \in S$, and the function $h$ governs the strength of urban spillovers. Clearly, the Voronoi weights are now a function of  local urban population. Let us denote with 
\begin{equation}\notag 
\lambda_i(L^M_i) = \frac{1}{\delta} \frac{\alpha}{1-\alpha} \log m_i + \frac{1}{\delta} \frac{\alpha}{1-\alpha} \log h(L^M_i), \quad i = 1, \dots, n,
\end{equation}
the Voronoi weight in city $s_i$ when the local urban population is $L^M_i$.

Given the Voronoi weights, we can proceed as in Section \ref{sec:factor} to solve for the urban population in $s_i \in S$ from the market clearing condition for agricultural goods, $Z(p^*) = 0$. With Cobb-Douglas preferences, we obtain: 
\begin{align*}
L^M_i &= \frac{\alpha}{1-\alpha} \left(\frac{p^{*}_i}{y^m_i} \right)^{\frac{1}{1-\beta}} \int_{\Omega_i(\lambda(L^M))} \left( \frac{a(x)}{\Delta(x, s_i)}\right)^{\frac{1}{1-\beta}}  \\
& = \frac{\alpha}{1-\alpha} \frac{1}{\bar{V}^{\frac{1}{(1-\alpha)(1-\beta)}}}   (m_i h(L^m_i))^{\frac{\alpha}{(1-\alpha)(1-\beta)}} \int_{\Omega_i(\lambda(L^M))} \left( \frac{a(x)}{\Delta(x, s_i)}\right)^{\frac{1}{1-\beta}} \mathrm{d} x
\end{align*}
for $i = 1,\dots, n$. The key difference with respect to equation \eqref{eq:pop2}, apart from the different functional forms, is that $L^M$ also appears on the right-hand side of the equation, both directly via urban productivity, and indirectly via the Voronoi weights. Let $\Psi_i: \R^n_{+} \to \R_{+}$ denote the right-hand side of the equation, and let the continuous vector-valued function $\Psi: \R^n_{+} \to \R^n_{+}$ be defined from $(\Psi_1, \dots, \Psi_n)$. To find the equilibrium distribution of urban population on $S$, we have to solve the fixed point problem: 
\begin{equation}\notag
L^M = \Psi(L^M). 
\end{equation}
We approach this problem with some of the techniques developed in Allen, Arkolakis, and Lee (2022), in combination with our Lemma B.1 on the shape derivative. Because the domain and range of $\Psi$ are weakly positive (as some cities may remain inhabited), we will seek bounds on its derivatives, as opposed to its elasticities (see Remark 3 in their paper). The following lemma provides a sufficient condition for the existence of a unique equilibrium under some restrictions on the spillover function $h$. It focuses on the closed economy scenario where $\bar{L}$ is fixed and the welfare scalar is a free variable (see Equation \eqref{eq:LV}), but a similar condition applies for the case of $\bar{V}$ fixed. 
\begin{lemma} 
Let the assumptions of Theorem \ref{theorem: existence} hold, except that $y^m_i = m_i h(L^M_i)$ for $i=1,\dots,n$, and with the additional condition that each distance function $d_s:X \to \R_+$, for $s \in S$, is of class $C^2$ and $a\colon X \to \R_{++}$ is of class $C^1$. Moreover, suppose that the function $h$ is bounded above and below by $0 < \ubar{h} < \bar{h} < \infty$, and there exists $\gamma > 0 $ such that $\frac{h'(L)}{h(L)} < \gamma$ for all $L \in \R_{+}$. Then, if 
\begin{equation}\notag
\frac{\alpha^2 \gamma}{1-\alpha} \left(\frac{1}{1-\beta} +  \frac{1}{\delta} K \right) \left(\frac{\bar{a}}{\ubar{a}}\right)^{\frac{1}{1-\beta}}\left(\frac{ \bar{m}  \bar{h} }{\ubar{m} \ubar{h}}  \right) ^{\frac{\alpha}{(1-\alpha(1-\beta)}}   e^{\frac{\delta}{1-\beta}d_{max}} \frac{n}{\vert X \vert}  \bar{L} < 1, 
\end{equation}
where 
\begin{itemize}
\item $d_{max} = \max_{x \in X, s_i \in S} d_i(x)$
\item $K \geq \max_{s_i \in S}\int_{\partial \Omega_i}  \nu_i(x)\cdot \frac{\partial \omega_{i}}{\partial \lambda_i}(x, \lambda)d \sigma(x) $
\end{itemize}
there exists a unique equilibrium. 
\end{lemma}
Before presenting the proof, it is useful to discuss some of the properties of this condition. First, it is always satisfied for $\gamma \to 0$. Given the a priori bounds on $h$, $\gamma \to 0$ implies $h'(L) \to 0$ for all values of $L$, and the model reduces to the one without urban spillovers in Section \ref{sec:factor}. Consistent with our Theorem \ref{theo:fact}, the equilibrium is always unique irrespective of the geography and of the value of the other parameters.  Second, the term $1/(1-\beta)$ parametrizes the (inverse) strength of the congestion force in the economy, and it decreases as agricultural production becomes more land intensive. This effect arises from the interior of the Voronoi regions and typically appears in this class of models. By contrast, the term $K/\delta$ represents the size of the border readjustments due to changes in the Voronoi weights, and is specific to our setting. Intuitively, this term becomes smaller as the size of transport costs $\delta$ increases, because trading choices become less responsive to variation in the endogenous variables. Third, the remaining terms capture the characteristics of the underlying geography. In particular, the sufficient condition becomes less stringent as the average density of potential urban sites  $n / \vert X \vert$ decreases. This conveys the intuitive idea that a unique equilibrium is more difficult to achieve when urban sites are located in close proximity to one another. If two urban sites are very close to each other, relocating a small number of urban workers to either city may be sufficient for the other to be abandoned. As $n / \vert X \vert \to \infty$, other things equal, the sufficient condition is never satisfied. Finally, a strictly positive lower bound to $h$ is essential for the sufficient condition to have content. As a result, the standard formulation in the literature, $h(L) = L^{\gamma_1}$ for some $\gamma_1 > 0$, does not work in this setting. Alternatively, one could consider other formulations that allow urban sites to retain some productivity even when no one resides there, e.g., $h(L) = 1 + L^{\gamma_1}$.

\begin{proof}
As a first step, we make sure that the the integrand is continuous across the borders, by making the term $( p^*_i )^{\alpha} / \Delta(x, s_i)$ appear inside the integral. This gives: 
\begin{align*}
\Psi_i(L^M)  &= f \int_{\Omega_i(\lambda(L^M))} g_i(x, L^M_i) \mathrm{d} x, \\
\text{ with } f &= \frac{\alpha}{1-\alpha} \frac{1}{\bar{V}^{\frac{1}{(1-\alpha)(1-\beta)}}}, \quad \text{ and }  \\ 
g_i(x, L^M_i) &= \left( \frac{a(x)}{\Delta(x, s_i)}\right)^{\frac{1}{1-\beta}} m_i^{\frac{\alpha}{(1-\alpha)(1-\beta)}} ( h(L^m_i))^{\frac{\alpha}{(1-\alpha)(1-\beta)}}.
\end{align*}
The next steps follow the logic of the proof of Theorem 1 in \cite{allen2022b}. Given any $L^M$ and $L^{M'}$, according to the mean value theorem, and for each $i=1, \dots, n$, there exists $\hat{L}^M = t_i L^M + (1-t_i) L^{M'}$, with $0 < t_i < 1$, such that: 
\begin{equation}\notag
\Psi_i(L^{M'}) - \Psi_i(L^M) = \sum_{j = 1}^n \frac{\partial \Psi_i(\hat{L}^M) }{\partial L^M_j } (L^M_j - L^{M'}_j)
\end{equation}
From this expression, we compute the following bound:
\begin{align*}
\bigg\vert \Psi_i(L^{M}) - \Psi_i(L^{M'}) \bigg\vert &= \bigg\vert \sum_{j = 1}^n \frac{\partial \Psi_i(\hat{L}^M) }{\partial L^M_j } (L_j^M - L_j^{M'}) \bigg\vert \\
& \leq  \sum_{j = 1}^n \bigg\vert \frac{\partial \Psi_i(\hat{L}^M) }{\partial L^M_j } (L_j^M - L_j^{M'}) \bigg\vert \\
& \leq  \sum_{j = 1}^n \bigg\vert \frac{\partial \Psi_i(\hat{L}^M) }{\partial L^M_j }\bigg\vert \bigg\vert (L_j^M - L_j^{M'}) \bigg\vert \\ 
& \leq  \left( \sum_{j=1}^n \max \bigg\vert \frac{\partial \Psi_i(\hat{L}^M) }{\partial L^M_j }\bigg\vert \right) \max \bigg\vert (L_j^M - L_j^{M'}) \bigg\vert.
\end{align*}
Now consider the sup norm $\left\lVert\cdot\right\lVert$  on $\R^n$. Since the previous inequality holds for all $i=1,\dots, n$, we have: 
\begin{align*}
\left\lVert \Psi(L^{M}) - \Psi(L^{M'})\right\lVert & \leq A\left\lVert L^M - L^{M'} \right\lVert, \\
\text{with } \quad A & = \sum_{j=1}^n \max \bigg\vert \frac{\partial \Psi_i(\hat{L}^M) }{\partial L^M_j }\bigg\vert
\end{align*}
Next, we can use the shape derivative to compute $\partial \Psi_j / \partial L^M_i$ and estimate an upper bound on $A$. Applying Lemma B.1, we have
\begin{align*}
\bigg\vert \frac{\partial \Psi_i(L^M)}{\partial L^M_i} \bigg\vert & =  f \left\{ \int_{\Omega_i} \frac{\partial g_i(x, L^M_i)}{\partial L^M_i} \mathrm{d} x + \frac{\partial \lambda_i}{\partial L^M_i} \sum_{j \in \mathcal{N}_i} \int_{\Gamma_{ij}} g_i(x,L^M_i) \nu_i(x)\cdot \frac{\partial \omega_{ij}}{\partial \lambda_i}(x, \lambda)d \sigma(x) \right\} \\
& = f \bigg\{ \frac{\alpha}{(1-\alpha)(1-\beta)}\frac{h'(L^M_i)}{h(L^M_i)} \int_{\Omega_i} g_i(x, L^M_i) \mathrm{d} x \\ & + \frac{1}{\delta} \frac{\alpha}{1-\alpha} \frac{h'(L^M_i)}{h(L^M_i)}  \sum_{j \in \mathcal{N}_i} \int_{\Gamma_{ij}} g_i(x,L^M_i) \nu_i(x)\cdot \frac{\partial \omega_{ij}}{\partial \lambda_i}(x, \lambda) \mathrm{d} \sigma(x) \bigg\} \\ 
& \leq  f \frac{\alpha\gamma}{(1-\alpha)} \left\{\frac{1}{1-\beta} +  \frac{1}{\delta}\int_{\Omega_i} \nu_i(x)\cdot \frac{\partial \omega_{i}}{\partial \lambda_i}(x, \lambda) \mathrm{d} \sigma(x)\right\} \bar{a}^{\frac{1}{1-\beta}}\left( \bar{m}  \bar{h} \right) ^{\frac{\alpha}{(1-\alpha(1-\beta)}}  ,
\end{align*}
where we have used the a priori bounds on $a$, $m$, and $h$, and the assumption that $h'(L^M_i)/h(L^M_i) < \gamma$. Let $K$ denote the uniform bound on $\vert \int_{\Omega_i} \nu_i(x)\cdot \frac{\partial \omega_{i}}{\partial \lambda_i}(x, \lambda) \mathrm{d} \sigma(x)\vert $, which exists because: 1) $\max_{s_i \in S} \vert\nu_i(x)\cdot \frac{\partial \omega_{i}}{\partial \lambda_i}(x, \lambda)\vert$ is bounded, since the distance functions are of class $\mathcal{C}^2$; 2) $\max_{s_j \in S} \vert \partial \Omega_j (\lambda) \vert $ is bounded, since the function mapping $\lambda$ to $\vert \partial \Omega_j (\lambda) \vert$ is continuous, and since for $\lambda_j$ sufficiently large $\Omega_j = X$. Thus we obtain
\begin{equation*}
\bigg\vert \frac{\partial \Psi_i(L^M)}{\partial L^M_i} \bigg\vert \leq  f \frac{\alpha\gamma}{(1-\alpha)} \left\{\frac{1}{1-\beta} +  \frac{K}{\delta}\right\} \bar{a}^{\frac{1}{1-\beta}}\left( \bar{m}  \bar{h} \right) ^{\frac{\alpha}{(1-\alpha(1-\beta)}}  ,
\end{equation*}
For $s_i, s_j \in S$, with $s_i \neq s_j$, we have
\begin{align*}
\bigg\vert \frac{\partial \Psi_i(L^M)}{\partial L^M_j} \bigg\vert & =  f \frac{\partial \lambda_j}{\partial L^M_j} \int_{\Gamma_{ik}} g_i(x,L^M_i) \nu_i(x)\cdot \frac{\partial \omega_{ij}}{\partial \lambda_i}(x, \lambda)d \sigma(x) 
\end{align*}
for $\partial \Omega_i \cap \partial \Omega_j \neq \emptyset$, and zero otherwise, so that the previous bound also applies. 

Since we are keeping $\bar{L}$ fixed and the welfare scalar appears into $f$, we also need to provide an upper bound for this term. With Cobb-Douglas preferences, welfare equalization between urban and rural workers together with market clearing on agricultural markets imply $L^M_i = \alpha (L^M_i + \int_{\Omega_i} L^A(x) \mathrm{d} x )$ for all $i=1,\dots,n$. Therefore
\begin{equation}\notag
\sum_{i = 1}^n L^M_i = \alpha \bar{L} \iff f = \frac{ \alpha \bar{L} }{\sum_{i=1}^n \int_{\Omega_i(\lambda(L^M))} g_i(x, L^M_i) \mathrm{d} x }. 
\end{equation}
Another application of Lemma B.1 gives: 
\begin{equation*}
f \leq \frac{\alpha \bar{L}}{ \ubar{a}^{\frac{1}{1-\beta}}\left( \ubar{m}  \ubar{h} \right) ^{\frac{\alpha}{(1-\alpha(1-\beta)}} e^{-\frac{\delta}{1-\beta}d_{max}} \vert X \vert},
\end{equation*}
with $d_{max} = \max_{x \in X, s_i \in S} d_i(x)$. Combining the previous results, we obtain: 
\begin{align*}
A & = \sum_{j=1}^n \max \bigg\vert \frac{\partial \Psi_i(\hat{L}^M) }{\partial L^M_j }\bigg\vert \\
&\leq  \frac{\alpha^2 \gamma}{1-\alpha} \left\{\frac{1}{1-\beta} +  \frac{1}{\delta} K \right\} \left(\frac{\bar{a}}{\ubar{a}}\right)^{\frac{1}{1-\beta}}\left(\frac{ \bar{m}  \bar{h} }{\ubar{m} \ubar{h}}  \right) ^{\frac{\alpha}{(1-\alpha(1-\beta)}}   e^{\frac{\delta}{1-\beta}d_{max}}  \frac{n}{\vert X \vert}  \bar{L} 
\end{align*}
Since $\Psi$ is a mapping from a complete metric space to itself, with $A < 1$ the result obtains from the application of the Contraction Mapping Theorem. 
\end{proof}

\subsection{CES Production}\label{app:CESprod}
Our analysis holds if the CES structure is imposed on the production side, rather than on preferences. Specifically, suppose that there is only one consumption good, produced in cities with a CES technology that combines urban labor and intermediate agricultural goods: 
\begin{equation}\notag
Y_i^m = y_i^m \left((L^M_i)^{\alpha} + (K^a_i)^{\alpha}\right)^{\frac{1}{\alpha}}, \quad i=i,\dots,n,
\end{equation}
where $K^a_i$ denotes the demand for agricultural inputs in city $s_i \in S$. From the first-order condition of the profit-maximization problem, this can be expressed as 
\begin{equation}\label{eq:cesproddemand}
K^a_i = \frac{L^M_i}{ \left[ \left(\frac{p_i}{y_i^m}\right)^{\frac{\alpha}{1-\alpha}} -1 \right]^{\frac{1}{\alpha}}},
\end{equation}
where we have used the price-equals-unit-cost condition to write the urban wage as
\begin{equation}\label{eq:cesprodwage}
w_i = \left[ (y_i^m)^{\frac{\alpha}{\alpha-1}} -p_i^{\frac{\alpha}{\alpha-1}} \right]^{\frac{\alpha - 1}{\alpha}}, \quad i = 1\dots n,  
\end{equation}
The indirect utility of a farmer in $x \in \X$ who trades with city $s_i \in S$ is
\begin{equation}\notag
V(x, s_i) = \frac{p_i y^a(x)}{\Delta(x, s_i)},
\end{equation}
after normalizing the price of the consumption good to one. Therefore, the Voronoi weights take the following form: 
\begin{equation}\notag
\lambda_i = \frac{1}{\delta} \log p_i, \quad i=1,\dots, n.
\end{equation}

\paragraph{Immobile labor.} Without factor mobility, the vector of Voronoi weights determines the supply of agricultural goods to each urban market. At the same time, the demand for agricultural goods was expressed as a function of prices in \eqref{eq:cesproddemand}.
Therefore, analogously to Definition \ref{def:eq}, we can close the model by defining an equilibrium price vector as the vector of prices that solves
\begin{equation}\notag
K^a_i = \int_{\Omega_i(\lambda)} \frac{y^a(x) L^A(x)}{\Delta(x, s_i)} \mathrm{d} x,
\end{equation}
i.e., that clears the market for agricultural goods in all cities. Although we skip the computations in the interest of space, it is easy to check that Proposition \ref{prop: equiv} holds in this setting, after defining the functional $\F$ exactly as in \eqref{eq: costfunction}. It can also be verified that the function $\F$ is concave, so that Theorem \ref{theorem: existence} applies. 

\paragraph{Mobile labor.} With labor mobility, the solution of the model follows the same steps as in Section \ref{sec:factor}. The spatial equilibrium condition for urban workers takes the simple form $w_i = \bar{V}, \quad i=1\dots m$, which allows us to solve from the welfare-equalization price from \eqref{eq:cesprodwage} as 
\begin{equation}\notag
p^*_i = \left( (y_i^m)^{\frac{\alpha}{\alpha-1}} -\bar{V}^{\frac{\alpha}{\alpha-1}} \right)^{\frac{\alpha - 1}{\alpha}}, \quad i = 1\dots n, 
\end{equation}
Substituting this price into the indirect utility of farmers, we obtain the following expression for the Voronoi weights: 
\begin{equation}\notag
\lambda_i = -\frac{1}{\delta} \frac{1-\alpha}{\alpha} \log\left( (y_i^m)^{\frac{\alpha}{\alpha-1}} -\bar{V}^{\frac{\alpha}{\alpha-1}} \right), \quad i = 1\dots n
\end{equation}
The weights can be used to compute the Voronoi tessellation and the supply of agricultural inputs to each urban market $s_i \in S$. Finally, the market equilibrium condition for agricultural inputs pins down the equilibrium urban population in all cities as follows: 
\begin{equation}\notag
L^M_i = \left( \frac{p_i}{\bar{V}} \right)^{\frac{1}{1-\alpha} + \frac{\beta}{1-\beta}} \int_{\Omega_i(\lambda)} \left( \frac{a(x)}{\Delta(x, s_i)} \right)^{\frac{1}{1-\beta}} \mathrm{d} x, \quad i = 1 \dots n
\end{equation}

\subsection{Business Districts and Urban Structure}\label{sec:business} Suppose that the set $X$ represents a metropolitan area and the set $S$ the location of its business districts. Firms locate in the business districts and produce a freely traded consumption good using only labor at constant returns to scale. Residents live in the metropolitan area and pay a commuting cost to travel to the business district of their choice, where they supply inelastically one unit of labor. Business locations have different levels of labor productivity $z_i > 0$, $i=1, \dots, n$, while residential locations differ in terms of an unpriced local amenity $A: X \to \R_+$ and in terms of their inelastic housing supply $H: X \to \R_+$. The housing stock belongs to absentee landlords who live outside the city. All markets are competitive. 

Residents order consumption baskets according to a Cobb-Douglas utility function defined over the consumption of housing and the traded good. Under these assumptions, the indirect utility of a resident living in $x \in X$ and employed at business district $s_i \in S$ is 
\begin{equation}\notag
V(x, s_i ) = \frac{A(x)z_i}{\Delta(x, s_i) r(x)^{1-\theta}}, \quad 0 < \theta < 1
\end{equation}
where $\theta$ is the housing budget share, $r: X \to \R_+$ is the price of one unit of housing, and the price of the traded good has been normalized to one.  For any distribution of residents $L: X \to \R_{+}$, the Voronoi weight driving their commuting decisions is
\begin{equation}\notag
\lambda_i = \frac{1}{\delta} \log z_i.
\end{equation}
Finally, the market-clearing condition for the land market 
\begin{equation}\notag
r(x) H(x) = (1 - \theta) z_i L(x),
\end{equation}
pins down the equilibrium rent function. 

\paragraph{Immobile labor.} In this simple case where wages are exogenous, the equilibrium with immobile labor is trivial, as $\lambda$ immediately determines the allocation of workers to workplaces, firms make zero profits, and rents are also a function of exogenous variables only. 
\paragraph{Mobile labor.} Here, it is clear that if two business districts $s_i, s_j \in S$ have contiguous commuting areas, then lower commuting costs will shift the border in favor of the more-productive district. The most productive district in the metropolitan area will always gain from lower commuting costs, whereas the least productive district will always lose. For a given set $S$, a vector of labor productivities $\{z_i\}_{s_i \in S}$, and a value of the commuting cost parameter $\delta$, some potential business districts may not be able to attract commuters and will therefore remain empty.

\subsection{Home Consumption for Farmers}\label{app:homeconsumption}

In our main analysis, we assumed that farmers consume agricultural goods at the trading location, rather than at the production location. In the case with immobile labor, this assumption allows us to use our approach based on Proposition \ref{prop: equiv}, because it implies that the marginal value of wealth is constant inside a Voronoi region. For this reason, Proposition \ref{prop: equiv} also holds without this assumption in a number of relevant settings, for instance (1) when distance costs are paid in terms of utility rather than in terms of farm goods; and (2) when the CES structure is placed on production, and farm goods are inputs in production of urban goods (see Appendix \ref{sec:altfor}). 

For the case analyzed in the main text, we now develop a scenario where consumption takes place at the production location (i.e., in $x \in X$ for farmers). We can prove that (1) the model delivers a Voronoi tesellation, (2) an equilibrium always exists, and (3) a sufficient condition for the equilibrium to be unique is $0 < \alpha < 1$.

Suppose, then, farmers consume agricultural goods at their location in $X$ and carry only the \textit{surplus} to an urban market for sale. As a consequence, the iceberg shipping cost is incurred only on the latter. Therefore, the budget constraint for a generic farmer producing output $y^a$ and incurring a trade cost $\Delta$ is written as
\begin{equation*}\notag
qc^m \leq  \frac{p(y^a - c^a)}{\Delta}  \iff 
qc^m + \frac{p}{\Delta}c^a \leq \frac{p}{\Delta}y^a
\end{equation*}
The indirect utility of a farmer in $x \in X$ trading with city $s_i \in S$ is
\begin{align*}
V(p_i, \omega(x, s_i)) &= \left( q^{\frac{\alpha}{\alpha-1}} + \left(\frac{p_i}{\Delta(x, s_i)}\right)^{\frac{\alpha}{\alpha-1}}  \right)^{\frac{1-\alpha}{\alpha}} \frac{p_i y^a(x)}{\Delta(x,s_i)} \\
&= \left(1 + \left(\frac{p_i}{q\Delta(x, s_i)}\right)^{\frac{\alpha}{1-\alpha}}  \right)^{\frac{1-\alpha}{\alpha}} y^a(x).
\end{align*}
For all values of $\alpha$, it is easy to check that 
\begin{align*}
V(p_i, \omega(x, s_i)) \geq V(p_i, \omega(x, s_i)) &\iff \frac{p_i}{\Delta(x, s_i)} \geq \frac{p_j}{\Delta(x, s_j)} \\
& \iff d(x, s_i) - \frac{1}{\delta}\log p_i \leq d(x, s_j) - \frac{1}{\delta}\log p_j.
\end{align*}
Therefore the farmers' trading problem delivers an additively weighted Voronoi tessellation, where the Voronoi weights are defined from
\begin{equation}\label{eq:lambdahc}
\lambda_i = \frac{1}{\delta} \log p_i, \quad i = 1\dots n.
\end{equation}
Finally, let $p_i(x) \coloneqq p_i / \Delta(x, s_i)$, for $i =1\dots n$; then the excess demand system can be expressed as
\begin{equation}\label{eq:edAhc}
Z_i(p) =c^a(q, p_i, \omega_i)L^M_i+ \int_{\Omega_i(\lambda(p))}\left[ \frac{c^a\left(q,p_i(x),\omega(x,s_i)\right)}{\Delta(x,s_i)} - \frac{y^a(x)}{\Delta(x,s_i)}\right] L^A(x)dx, 
\end{equation} 
for $i =1,\dots,n$ and
\begin{equation}\label{eq:edMhc}
Z_{n+1}(p) = \sum_{i =1}^n \left[\left(c^m(q, p_i, \omega_i)  -  y_i^m\right) L^M_i+ \int_{\Omega_i(\lambda(p))} c^m\left(q,p_i(x),\omega(x,s_i)\right)L^A(x) dx \right].
\end{equation}

\paragraph{Immobile labor.} We show that an equilibrium exists and that it is unique for $0 < \alpha < 1$. We state these results in two separate lemmas, then we provide the proofs. The next lemma deals with equilibrium existence, and its proof follows the standard arguments in \cite[Proposition 17.C.1]{mascolell1995}.
\begin{lemma}\label{lemma:existence}
Let the excess demand system be defined from \eqref{eq:edAhc} and \eqref{eq:edMhc}. Then there exists a price vector $p^*$ such that $Z(p^*)=0$. 
\end{lemma}
To prove uniqueness, we use the connected substitutes condition of \cite{berry2013}. This condition allows for some cross-price derivatives to be zero and is therefore weaker than the gross substitute condition of \cite{mascolell1995}, which fails in our setting. More precisely, let $Z: \mathcal{P} \subseteq  \R^{n} \to \R^{n}$ be a generic demand system, and define 
\begin{equation}\notag
Z_0(p) = 1 - \sum_{i = 1}^n Z_i(p)
\end{equation}
for a fictional city $s_0$. Theorem 1 in \cite{berry2013} shows that $Z$ is invertible if the following assumptions hold: (1) $\mathcal{P}$ is a Cartesian product; (2) $Z_i(p)$ is weakly decreasing in $p_k$ for $i = 0,\dots, n$ and all $k = 1,\dots ,n$, $k \neq i$; and (3) given any subset $S'$ of $S$, there exists an $s_j$ in $S'$ and an $s_i$ in $(S \setminus S')\cup \{s_o\} $ such that $Z_i$ is strictly decreasing in $p_j$.

We apply their Theorem 1 to the (normalized) excess demand system $$\bar{Z}(\bar{p}) = \{Z_1(\bar{p}), \dots Z_n(\bar{p})\},$$ where $\bar{p} = (p_1, \dots, p_n)$ with $q$ normalized to one. 
\begin{lemma}\label{lemma:uniqueness}
Let $\bar{Z}(\bar{p})$ be a normalized excess demand system such that $Z_i(\bar{p})$ is defined via \eqref{eq:edAhc} for $i=1,\dots,n$, and suppose that $\bar{p}^{*}, \bar{p}^{**}$ are two price vectors such that $\bar{Z}(\bar{p}^*)=\bar{Z}(\bar{p}^{**}) = 0$. If $0 < \alpha < 1$, then $\bar{p}^{*} = \bar{p}^{**}.$
\end{lemma}

\begin{proof}[Proof of Lemma \ref{lemma:existence}]
Let $Z(p) = \{Z_1(p),\dots,Z_n(p), Z_{n+1}(p)\}$ be the vector of excess demands, such that $Z_i$ is given by (\ref{eq:edAhc}) for $i=1\dots n$ and $Z_{n+1}$ is given by (\ref{eq:edMhc}). We show that $Z(p)$ satisfies the conditions of \cite[Proposition 17.2.B]{mascolell1995}, namely 
\begin{enumerate}[$i.$]
\item $Z(p)$ is continuous. 
\item $Z(p)$ is homogenous of degree zero. 
\item $p'Z(p) = 0$ (Walras' law). 
\item there is a $L>0$ such that $\text{min}\{Z(p)\} > -L$ for all price vectors $p$. 
\item if $p^k \rightarrow p$, where $p \neq 0$ and some element of $p$ is zero, then $$\text{max}\{Z_1(p^k),...,Z_{n}(p^k),Z_{n+1}(p^k)\}\rightarrow \infty.$$
\end{enumerate}
If these conditions are satisfied, then $Z(p^*) = 0$ has a solution by \cite[Proposition 17.C.1]{mascolell1995}. 

Property $\textit{i.}$ is satisfied because each $Z_i$ is a composition of continuous functions. 

To prove property \textit{ii.}, multiply all prices by a constant $c$. The Voronoi weights become
\begin{equation}\notag
 \frac{1}{\delta} \log cp_i = \lambda_i + \frac{1}{\delta} \log c, \quad i = 1\dots n.
\end{equation}
Since the tessellation does not change when the same constant is added to all weights (see Section \ref{sub:voronoi}), we can still express the Voronoi regions as functions of $\lambda$.  Consumption demands $c^a$ and $c^m$ are also homogeneous of degree zero. Therefore $Z_i(cp) = Z_i(p)$ for all $i =1 \dots n$, and  $Z_{n+1}(cp) = Z_{n+1}(p)$. 

To prove property \textit{iii.}, use \eqref{eq: edA} and \eqref{eq: edM} to write
\begin{align*}
&\underset{i =1\dots n}{\sum} p_i Z_i(p) + qZ_{n+1}(p)  \\
&\quad= \sum_{i=1\dots n} p_i c^a(q,p_i,\omega_i)L^M_i + \sum_{i=1\dots n} \int_{\Omega_i(\lambda)} p_i \frac{ c^a(q,p_i,\omega(x,s_i))}{\Delta(x,s_i)}L^A(x) \mathrm{d} x \\
&\qquad-  \sum_{i=1\dots n} \int_{\Omega_i(\lambda)} \frac{p_i y^a(x)}{\Delta(x,s_i)} L^A(x) \mathrm{d} x +
\sum_{i=1\dots n} q c^m(q,p_i,\omega_i)L^M_i & \\
&\qquad - \sum_{i=1\dots n} q y_i^m L^M_i + \sum_{i=1\dots n} \int_{\Omega_i(\lambda)} q c^m(q,p_i(x),\omega(x,s_i))L^A(x) \mathrm{d} x \\
&\quad = \sum_{i=1 \dots n} \bigg( p_i c^a(q,p_i,\omega_i) +
 q c^m(q,p_i, \omega_i) - q y_i^m \bigg) L^M_i  + \\
  &\qquad \sum_{i=1\dots n} \int_{\Omega_i(\lambda)} \bigg(p_i(x) c^a(q,p_i(x),\omega(x,s_i)) +  q c^m(q,p_i(x),\omega(x,s_i)) -  \frac{p_i y^a(x)}{\Delta(x,s_i)}\bigg) L^A(x) \mathrm{d} x.
 \end{align*}
In the last line, all the terms in parentheses are zero because the budget constraint in \eqref{eq: consumptionproblem} binds at the optimum. 

To prove property \textit{iv.}, note that \eqref{eq: edA} implies 
\begin{equation}\notag
Z_i(p)  \geq - \int_{\Omega_i(\lambda)}\frac{y^a(x)L^a(x)}{\Delta(x, s_i) } \mathrm{d}x \geq - \int_{\Omega_i(\lambda)} y^a(x)L^a(x) \mathrm{d}x  \geq - \int_{X} y^a(x)L^a(x) \mathrm{d}x ,
\end{equation}
for all $i=1\dots n$, whereas \eqref{eq: edM} implies 
\begin{equation}\notag
Z_{n+1}(p)  \geq - \sum_i y_i^m L^M_i. 
\end{equation}
Therefore the property is satisfied for any $L \geq \max(\sum_i y_i^m L^M_i, \int_{X} y^a(x)L^a(x) \mathrm{d}x) $.

Finally, to prove property \textit{v.}, substitute the expressions for the consumption demands into \eqref{eq: edA} and \eqref{eq: edM}, and rewrite them, after some manipulations, as follows:
\begin{align*}
Z_i(p) &= \frac{1}{ 1+ \left(\frac{q}{p_i}\right)^{\frac{\alpha}{\alpha-1}} } \frac{q}{p_i} y_i^m L^M_i -  \int_{\Omega_i(\lambda)} \frac{1}{1 + \left(\frac{p_i(x)}{q}\right)^{\frac{\alpha}{\alpha-1}}}  \frac{y^a(x)L^A(x)}{\Delta(x,s_i)} \mathrm{d} x, \quad i = 1 \dots n,\\
Z_{n+1}(p) &= \sum_{i = 1\dots n}  \int_{\Omega_i(\lambda)}  \frac{1}{1 + \left(\frac{p_i(x)}{q} \right)^{\frac{\alpha}{\alpha-1}} } \frac{p_i(x)}{q} \frac{y^a(x)L^A(x)}{\Delta(x, s_i)} \mathrm{d} x - \sum_{i = 1\dots n}  \frac{1}{ 1+ \left(\frac{q}{p_i}\right)^{\frac{\alpha}{\alpha-1}} }  y_i^m L^M_i. 
\end{align*}
Therefore we obtain the following bounds:
\begin{align*}
 Z_i(p)  &\geq \frac{1}{ 1+ \left(\frac{q}{p_i}\right)^{\frac{\alpha}{\alpha-1}} } \frac{q}{p_i} y_i^m L^M_i  - \int_{X} y^a(x)L^A(x) \mathrm{d} x, \\
 Z_{n+1}(p) &\geq \sum_{i = 1\dots n} \int_{\Omega_i(\lambda)}  
 \frac{1}{1 + \left(\frac{p_i(x)}{q} \right)^{\frac{\alpha}{\alpha-1}} } \frac{p_i(x)}{q} \frac{y^a(x)L^A(x)}{\Delta(x, s_i)} \mathrm{d} x - \sum_{i = 1\dots n}   y_i^m L^M_i, 
\end{align*}
for $i=1 \dots n$. First, suppose that $q^k$ tends to a positive number, whereas $p^k_i$ tends to zero for some $i =1\dots n$. Then $Z_i(p^k) \to \infty$ for all $i$ such that $p^k_i \to 0$, because $\alpha / (\alpha - 1) < 1$ for $ \alpha < 1$ . Second, suppose that $q^k$ tends to zero and at least one $p^k_i$ tends to a positive number. Note that at least one Voronoi region is nonempty, and for this region the agricultural price must be positive, so that $Z_{n+1}(p^k) \rightarrow \infty$. In either case, $\text{max}\{Z_1(p^k),...,Z_{n}(p^k),Z_{n+1}(p^k)\}\rightarrow \infty$.
\end{proof}

The proof of Lemma \ref{lemma:existence} shows that the excess demand system satisfies Walras' law and homogeneity of degree zero. Therefore, we can restrict our attention to the normalized excess demand system $\bar{Z}(\bar{p}) = (Z_1(\bar p), Z_2(\bar p),...,Z_n(\bar p)),$
where $\bar p  = (p_1, p_2, \dots, p_n)$ with $q$ normalized to one.

\begin{proof}[Proof of Lemma \ref{lemma:uniqueness}]
We verify that the normalized excess demand system $\bar{Z}(\bar{p})$ satisfies the three assumptions of \cite[Theorem 1]{berry2013}. Define
\begin{equation}\notag
Z_0(p) = 1 - \sum_{i = 1}^n Z_i(p).
\end{equation}
First, the domain of $\bar{Z}$ is $\R_{++}^n$, and therefore it is a Cartesian product. Second, we need to prove that $Z_i(\bar{p})$ is weakly increasing in $p_k$ for all $i = 0,\dots,n$ $i\neq k$.  For $i =1\dots n$, this was shown in Proposition \ref{prop:gsz}. For $i = 0$, we obtain
\begin{align*}
\frac{\partial Z_0(p) }{\partial p_k } = - \frac{\partial c^a(p_k, \omega_k)}{\partial p_k}L^M_k - \int_{\Omega_k(\lambda)} \frac{\partial c^a(p_k, \omega(x, s_k))}{\partial p_k}L^A(x) \mathrm{d} x.
\end{align*}
Note that the impact of $p_k$ on $Z_k$ via changes in the Voronoi tessellation does not appear in the above expression, because the effect on $\Omega_k$ is offset by opposite changes in the Voronoi regions of $s_k$'s neighbors (see also the proof of Theorem \ref{theo:fact}). While $\partial c^a(p_k, \omega_k) /  \partial p_k$ is always negative, $\partial c^a(p_k, \omega(x, s_k))/\partial p_k$ is negative for $0 < \alpha < 1$ (because agricultural workers are also subject to an income effect). Therefore, $0 < \alpha < 1$ is a sufficient condition for  $\frac{\partial Z_0(p) }{\partial p_k }$ to be strictly positive. Finally, this also proves that the third condition is satisfied, because it can be checked on $Z_0$ for every $p_i$, $i=1, \dots, n$.
\end{proof}

\paragraph{Mobile labor.} With mobile labor, the existence of a unique equilibrium can be shown by applying the same steps as in Section \ref{sec:factor}.

\section{Data}\label{app:data}

\paragraph{Cities and Cantons.}

We obtained the shapefile for the 26 Swiss cantons from the Swiss Federal Office of Topograhy (\url{swisstopo.admin.ch}). The original shapefile contains 51 polygons, since some of the cantons are disconnected sets, with small enclaves inside the bordering cantons. We attribute these enclaves to the territory of the surrounding canton. Furthermore, we make two simplifications to the canton borders: first, we merge \textit{Basel-Stadt} (Basel-City) and \textit{Basel-Landschaft} (Basel-Countryside) into a single ``Basel'' canton; second, we merge \textit{Appenzell Innerrhoden} and \textit{Appenzell Ausserrhoden} with \textit{St. Gallen}. We manually georeferenced the cantonal administrative capitals. Our final shapefile contains 23 polygons, as depicted in Figure \ref{fig:cantons}.

\begin{figure}[!htb]
\includegraphics[width = \textwidth]{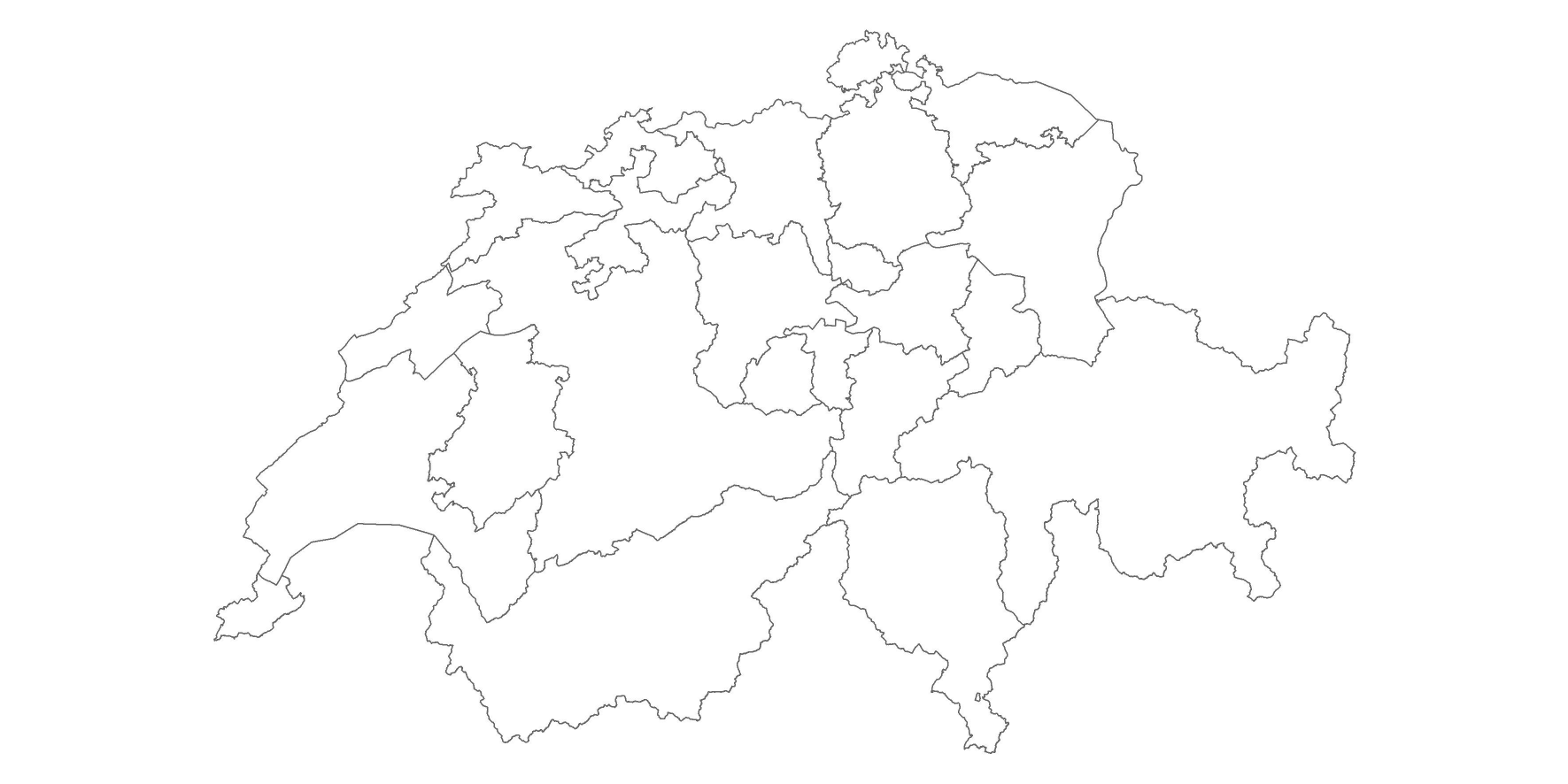}
\caption{Swiss Cantons After Cleaning}
\label{fig:cantons}
\end{figure}

\paragraph{Other data.} Gross value added by industry and canton comes from the Federal Statistical Office (\url{https://www.bfs.admin.ch/}). We use 2011 data. The Caloric Suitability Index is available at \url{ https://ozak.github.io/Caloric-Suitability-Index/}. We obtained the elevation data from Copernicus, the European Union's Earth observation program, at \url{https://land.copernicus.eu/imagery-in-situ/eu-dem/eu-dem-v1.1}. Geographic data on the hydrographic network is maintained by the Swiss Federal Office for the Environment and can be downloaded from \url{https://data.geo.admin.ch}.

\end{document}